\documentclass[numsec,webpdf,traditional,medium]{oup-authoring-template} 
\usepackage{booktabs}
\usepackage{pifont}
\usepackage{float}
\usepackage{prodint}

\onecolumn 

\graphicspath{{Fig/}}

\theoremstyle{thmstyleone}%
\newtheorem{theorem}{Theorem}
\newtheorem{proposition}[theorem]{Proposition}%
\theoremstyle{thmstyletwo}%
\theoremstyle{thmstylethree}%

\def\Var{\mbox{\rm Var}}

\def\Pr{\mbox{\rm Pr}}

\newcommand{\cG}{\mathcal{G}}

\newcommand{\cA}{\mathcal{A}}
\newcommand{\cT}{\mathcal{T}}
\newcommand{\cF}{\mathcal{F}}
\newcommand{\cW}{\mathcal{W}}
\newcommand{\cH}{\mathcal{H}}
\newcommand{\cI}{\mathcal{I}}
\newcommand{\cD}{\mathcal{D}}

\newcommand{\cK}{\mathcal{K}}

\newcommand{\cB}{\mathcal{B}}

\newcommand{\cY}{\mathcal{Y}}

\newcommand{\sumi}{\sum_{i=1}^n}

\newcommand{\sumj}{\sum_{j=1}^n}

\newcommand{\real}{\mathbb{R}}
\newcommand{\cp}{\stackrel{p}{\rightarrow}}

\newcommand{\omitt}[1]{{}}
\newcommand{\pss}[1]{{(#1)}}

\newcommand{\sttwo}{S_{T_2}}
\newcommand{\sttw}{S_{T_2|T_1}}
\newcommand{\cKbc}{\mathcal{K}_{bc}}
\newcommand{\cU}{\mathcal{U}}
\newcommand{\Op}{O_P}
\newcommand{\op}{o_P}
\newcommand{\frn}{\frac{1}{n}}
\newcommand{\frnh}{\frac{1}{nh}}
\newcommand{\wh}[1]{\widehat{#1}}
\newcommand{\sttwt}{\widetilde{S}_{T_2|T_1}}
\newcommand{\summ}{\sum_{m=1}^n}
\newcommand{\sttwoh}{\widehat{S}_{T_2}}
\newcommand{\sttwh}{\widehat{S}_{T_2|T_1}}

\begin{document}

\journaltitle{Submitted}
\DOI{DOI HERE}
\copyrightyear{2022}
\pubyear{2019}
\access{Advance Access Publication Date: Day Month Year}
\appnotes{Paper}

\firstpage{1}


\title[Cumulative Incidence Function Estimation]{Efficient Cumulative Incidence Estimation in Biobank Studies Using All Prevalent and Incident Events}

\author[1,$\ast$]{David M. Zucker\ORCID{0000-0002-2104-4913}}
\author[2]{Malka Gorfine\ORCID{0000-0002-1577-6624}}

\authormark{Zucker and Gorfine}

\address[1]{\orgdiv{Department of Statistics and Data Science}, \orgname{Hebrew University of Jerusalem}, 
\orgaddress{\street{Mount Scopus}, \postcode{91905} \state{Jerusalem}, \country{Israel}}}
\address[2]{\orgdiv{Department of Statistics and Operations Research}, \orgname{Tel Aviv University},
\orgaddress{\street{}, \postcode{}, \state{Ramat Gan}, \country{Israel}}}

\corresp[$\ast$]{Corresponding author: \href{email:email-id.com}{david.zucker@mail.huji.ac.il}}

\received{Date}{0}{Year}
\revised{Date}{0}{Year}
\accepted{Date}{0}{Year}



\abstract{
Population-based biobanks, now established in many countries, offer opportunities for large-scale studies investigating
the incidence of various diseases. Biobank data is typically collected from a study cohort recruited over a defined calendar
period, with subjects entering the study at various ages falling between $R_L$ and $R_U$. This work focuses on biobank data 
that includes individuals in whom onset of the disease of interest occurred before recruitment, termed prevalent cases, along 
with individuals initially recruited as disease-free in whom disease onset occurred during the follow-up period. We propose a 
novel cumulative incidence function (CIF) estimator that goes beyond existing methods in that it incorporates all disease cases,
both prevalent and incident, irrespective of their subsequent life course. In particular, the new method can handle situations 
involving diseases that can occur at young ages with long survival after disease onset. Asymptotic properties of
the new method are established and a simulation study is presented examining the performance of the method. We illustrate the use of the method and highlight its advantages over existing
methods with an application to cancer data from the UK biobank.}

\keywords{Survival analysis, left truncation, illness-death model, Aalen-Johansen estimator, prevalent cases}


\maketitle

\section{Introduction}

\subsection{Setup and Related Works}
We consider the problem of estimating the incidence of some disease using population-based biobank data.
We work with  the illness-death model, also known as the semi-competing risks model.
Here, individuals start in a healthy state and can then move to the diseased state (transition $0 \rightarrow 1$), signifying being diagnosed with the disease of interest, and from there to the dead state ($1 \rightarrow 2$). Individuals can also move directly from the healthy state to the dead state ($0 \rightarrow 2$) without experiencing the disease of interest. 
We work on the time scale of age, with the origin being birth.
Let the random variables $T_1$ and $T_2$ represent the age at diagnosis and age at death, respectively. 
For those who died without having the disease, we set $T_1 = \infty$. Our goal is to use biobank data to estimate the probability of having the disease by time $t$, which is the cumulative incidence function (CIF), $G_1(\cdot)$,
given by
$$
G_1(t) = \Pr (T_1 \leq t, T_2 > T_1), \quad t \in [0,\tau],
$$
for a constant $\tau > 0$ representing the maximum age at end of follow-up. In the case where the study design recruits only subjects aged between $[R_L, R_U]$, it is not possible to provide a nonparametric estimator of the CIF for individuals who died before age $R_L > 0$, and we therefore work with the alternative goal of estimating the conditional CIF $G_1(\cdot|T_2 > R_L)$, defined as
\begin{equation}
G_1(t|T_2>R_L) = \Pr (T_1 \leq t, T_2 > T_1|T_2>R_L) \, \,\,\, t \in [0,\tau].
\label{cifwun}
\end{equation}

Biobanks have quietly changed what is feasible in observational medicine. Instead of assembling a disease-specific cohort and following it for years, investigators can now start from a population resource that already links baseline assessments to registries of diagnoses and deaths, often at national scale. This creates unprecedented power for studying time-to-event outcomes, especially rare diseases and long-latency endpoints, and for comparing risks across many phenotypes using a common data backbone. The price of this convenience is that the data-generating mechanism is no longer the classical ``start follow-up at time 0 and watch events unfold." Biobank participation is itself conditioned on survival and eligibility, and those design features reshape what can be estimated and how. The present paper targets a key challenge arising under this design, which, surprisingly, has yet to be addressed in an efficient manner:
estimation of the CIF when diagnosis is non-terminal and death may occur before or after diagnosis, while recruitment occurs only within an age window.

Most large biobanks recruit participants over a fixed calendar period, but only if their age at enrollment falls in a pre-specified interval $[R_L,R_U]$. In the UK Biobank, for example, $R_L=40$ and $R_U=69$. Follow-up then proceeds prospectively through linked electronic health records and registries. Two implications are immediate. First, by design, we never observe individuals who die before $R_L$, so any incidence statement must be interpreted conditionally on survival to $R_L$. Second, even among those who do survive past $R_L$, individuals enter at different ages $R$, which induces left truncation (delayed entry): people are only observed if they alive to their enrollment age. These features are well known in principle, but in practice they become technically difficult
to handle 
\citep{keiding1991age,saarela2009joint,vakulenko2016comparing,gzs2025}.

These issues are especially acute in the semi-competing risks setting when the target is the CIF of a given disease, conditional on survival to $R_L$, namely $G_1(t\mid T_2>R_L)$. In biobank data, individuals naturally fall into four observable categories: (i) prevalent cases, diagnosed before recruitment; (ii) incident cases, diagnosed during follow-up; (iii) individuals who die without a recorded diagnosis of the disease; and (iv) right-censored observations who remain alive and disease-free by the end of follow-up. Prevalent cases can represent a substantial fraction of all observed cases, even for relatively rare diseases, and discarding them can lead to loss of information and, more importantly, to estimating a different estimand than $G_1(t\mid T_2>R_L)$,
as explained below. To illustrate, among 243,871 UK Biobank female participants, breast cancer has approximately 17,000 recorded cases, of which approximately 50\% are prevalent at recruitment. This paper addresses the resulting methodological gap by developing a CIF estimator that incorporates all disease cases, prevalent and incident, thereby aligning estimation with the conditional biobank estimand while retaining efficiency in settings where post-diagnosis survival may be long. 

Estimating from left-truncated and right-censored data is typically done either through risk-set correction or 
inverse-probability weighting (IPW). The well-known Aalen-Johansen (AJ) estimator \citep{aalen1978empirical} can be easily 
adjusted for left-truncation (see \cite{allignol2010note} and Section 2.1 below) using a risk-set correction. An individual 
enters the risk set after its left-truncation (i.e., enrollment) time, and the individual stays in the risk set until its event 
or censoring time, whichever comes first. However, this risk-set adjustment omits prevalent cases from the estimation procedure, 
since the event time precedes the entry time. Besides the potential reduction in efficiency, it also implies that the AJ 
estimator can estimate the CIF only for $t > R_L$, conditional on being disease-free at the time of recruitment $(T_1 > R)$. 
That is, the target of the AJ estimator is not $G_1(t|T_2>R_L)$ as defined in (\ref{cifwun}), but rather
\begin{equation}
	G_1(t|T_1>R_L, T_2>R_L) = \Pr (T_1 \leq t, T_2 > T_1|T_1>R_L, T_2>R_L) \, \,\,\, t \in [0,\tau].	
	\label{ciftwo}
\end{equation}
The functions $G_1(t|T_2>R_L)$
and $G_1(t|T_1 > R_L, T_2 > R_L)$ coincide in the case where the probability of disease onset before $R_L$ is zero,
and approximately coincide when this probability is negligible, but otherwise they differ.

\cite{chang2006nonparametric} and \cite{vakulenko2017nonparametric} provided nonparametric estimators employing IPW methods. Unlike the Aalen-Johansen estimator, these approaches do not exclude prevalent observations, but instead correct for
the sampling bias by IPW. Their methods assign positive
masses only to completely uncensored observations, and the weights
depend on the distributions of censoring and truncation. 
\cite{gzs2025} (hereafter GZS) present an alternative IPW estimator.
As explained in GZS, the method of \citeauthor{chang2006nonparametric} and the method of
\cite{vakulenko2017nonparametric}
deliver guaranteed consistent estimators only under a strong assumption on the censoring distribution,
and if this assumption is violated, the estimators can be very seriously biased. 
The method of GZS delivers good performance under a broader range of scenarios.

However, the method of GZS suffers from one major limitation: It does not cover scenarios with long survival 
after disease onset, such that the proportion of diseased individuals who died during the follow-up period
is low. This is because the method of GZS, like those of \cite{chang2006nonparametric} and 
\cite{vakulenko2017nonparametric}, assigns positive masses only to completely uncensored observations.

The aim of the present paper is to present a new CIF estimator that does not suffer from this limitation.
The new estimator assigns positive mass to all disease cases, irrespective of whether the case is a
prevalent case or an incident case, and irrespective of whether the individual died during the
study or survived to the end of follow-up. Thus, this is the first paper to present a 
fully nonparametric and consistent.
CIF estimator in the illness-death model that incorporates all disease cases.

The practical importance of this advance is illustrated in our analysis of breast cancer in the UK Biobank. Breast cancer is a setting in which survival after diagnosis is often long, and consequently methods that assign positive mass only to completely observed disease–death trajectories may perform poorly. Indeed, as shown in our simulation study in
Section~5, the GZS estimator is substantially biased in this setting. 
In our analysis of the UK Biobank data, the left-truncation adjusted AJ estimator and the proposed estimator produced highly similar estimates for the breast cancer CIF while
the GZS estimate was markedly lower. In addition,
the proposed estimator is considerably more efficient than the AJ estimator because it makes use of all observed breast cancer cases, including prevalent cases and cases not followed until death. In particular, the ratio of the average widths of the 95\% simultaneous confidence bands for the AJ and proposed estimators is 0.0068/0.0035=1.943, nearly a two-fold efficiency gain in favor of the proposed method.

\subsection{Contributions}
Our contributions can be summarized as follows:
\begin{enumerate}
	\item  \,
	We introduce a novel nonparametric CIF estimator that efficiently utilizes all disease cases. 
	Incorporating all disease cases offers three advantages: (1) effective CIF estimation in broad range
	of scenarios, including scenarios in which the proportion of diseased individuals who died during the follow-up
	period is low, (2) substantially enhanced efficiency in many scenarios, and (3) enabling the estimation of CIF at ages preceding $R_L$. 
	\item  \,
	We establish the consistency and asymptotic normality of the proposed estimator. Additionally, we provide pointwise confidence intervals and simultaneous confidence bands for the CIF.
	\item  \,
	We present an extensive simulation study and an analysis of data from the UKB on
	multiple types of cancer to demonstrate the performance and practical utility of the proposed estimator. An R package \texttt{illdthCIF1} implementing the proposed estimator is provided.
\end{enumerate}

\section{CIF Estimators}

\subsection{Preliminaries}

Consider $n$ independent observations.
Let $T_{1i}$ and $T_{2i}$ be the age at disease diagnosis and
age at death, respectively, of the $i$th observation, $i = 1,\ldots,n$. 
If the participant dies without being diagnosed with the disease under study, we set $T_{1i} = \infty$.
We assume that the pair $(T_1,T_2)$ has an absolutely continuous distribution, aside for the atom at $\infty$ 
in the distribution of $T_1$ corresponding to individuals in whom the disease of interest did not occur. We denote by $t_{1min}$ the minimum of the support of $T_1$.
Define $C_i$ as the right-censoring time and $R_i$ as the age at recruitment, where $R_L \leq R_i \leq R_U$, signifying that 
subjects are recruited at ages between $R_L$ and $R_U$. 
All densities, survival functions, and probability calculations in this paper are conditional on $T_2 > R_L$.
We assume that $R_i$ is independent of $(T_{1i}, T_{2i})$ and that $C_i=R_i+W_i$ with $W_i$ independent of $(T_{1i}, T_{2i}, R_i)$.
Define $V_{1i} = \min(T_{1i}, T_{2i}, C_i)$, 
$V_{2i}=\min(T_{2i},C_i)$, and $\delta_{2i} = I(T_{2i} \leq C_{i})$. 
Further, define $Z_i$ to be equal to 0 if individual $i$ was censored at time $V_{1i}$,
1 if individual $i$ experienced disease onset at at time $V_{1i}$, and 2 if
individual $i$ died without disease at time $V_{1i}$, and let $\delta_{1i}=I(Z_i=1)$.
The observed data consist of $\{V_{1i},V_{2i}, Z_i, \delta_{2i}, R_i \, , \, 
i=1,\ldots,n\}$. To be recruited, a participant must be alive at the recruitment time, i.e., $T_{2i} \geq R_i$.

Under the above assumptions, the conditional joint density of $(T_{2i}, T_{2i}, R_i, W_i)$ given $R_i \leq T_{2i}$ is
\begin{equation}
	f_{T_{1},T_{2},R,W|R \leq T_{2}}(t_1,t_2,r,w) = a^{-1} f_{T_{1},T_{2}}(t_1,t_2) f_{R}(r) f_{W}(w)  I(t_1 \leq t_2)I (r \leq~t_2), 
	\label{jden}
\end{equation}
where $a$ is the normalizing constant. Note that
\begin{align*}
	f_{R|R \leq T_{2}}(r) & = a^{-1} \int f_{T_2}(t_2) f_{R}(r) I(r \leq t_2) \, dt_2 =
	a^{-1} f_{R}(r)  S_{T_2}(r).
\end{align*}
Therefore
\begin{equation}
	f_{R}(r) = a f_{R|R \leq T_{2}}(r)S_{T_2}(r)^{-1}.
	\label{rr}
\end{equation}
We will make use of (\ref{rr}) later.
Quantities of the form
\begin{equation}
	\mathcal{I} = \int \phi(r) f_{R|R \leq T_{2}}(r) \, dr 
	\label{eye}
\end{equation}
can be estimated using the empirical estimator
\begin{equation}
	\widehat{\mathcal{I}} = \frac{1}{n} \sum_{j=1}^n \phi(R_j).
	\label{eest}
\end{equation}
The survival function $\sttwo$ can be estimated using the standard Kaplan-Meier estimator $\widehat{S}_{T_2}$ adjusted for left-truncation, as described, for example, by \cite{tsai1987}.

\subsection{Left-Truncation Adjusted Aalen-Johansen Estimator}

The classical estimator of Aalen and Johansen (1978) is constructed as follows.
Define $\tilde{T}=T_1 \wedge T_2$, 
let $N_{1i}(u) = \delta_{1i} I(V_{1i} \leq u)$ be the counting process for disease occurrence, and let
$Y_{1i}(u)=I(V_{1i} \geq u)$ be the at-risk indicator, equal to 1 if the participant is still at risk 
and equal to 0 if not.
The AJ estimator of the CIF is then given by
$$
\widehat{G}^{AJ}_1(t_1^*)= \int_0^{t_1^*} \widehat{S}_{T^*}(u-) \frac{\sumi dN_{1i}(u)}{\sumi Y_{1i}(u)} \, ,
$$
where $\widehat{S}_{\tilde{T}}(\cdot)$ is the Kaplan-Meier estimator of the survival function of $\tilde{T}$. For data with left truncation, estimation 
is based on risk-set adjustment for delayed entry. Namely, at time $t$, the at-risk process of individual $i$ is defined by $Y_{1i}(t) =I(R_i \leq t \leq V_{1i})$. 
This estimator excludes prevalent cases, i.e., individuals with $V_{1i} < R_i$. 
Therefore, the estimand of $\widehat{G}^{AJ}_1(t)$ is not $G_1(t|T_2 > R_L)$ as defined by
(\ref{cifwun}) but rather $G_1(t|T_1 > R_L, T_2 > R_L)$ as defined by (\ref{ciftwo}).
The estimator $\widehat{G}^{AJ}_1(t)$ is a consistent estimator of this estimand.

\subsection{The Estimator of Gorfine, Zucker, and Shoham}

\cite{gzs2025} proposed the estimator
\begin{equation}
	\widehat{G}_1^{GZS}(t_1^*) = \frac{1}{n} \sumi \delta_{1i} \delta_{2i} I(V_{1i} \leq t_1^*) 
	\left(\frac{\widehat{S}_{T_2}(V_{2i})}{n^{-1} \sum_{j=1}^n Y_{2i}(V_{2i})}\right),
	\label{gzs}
\end{equation}
where $Y_{2j}(u) = I(R_i < u \leq C_i)$. They proved consistency and provided an asymptotic representation of a truncated version of
$\widehat{G}_1^{GZS}(t)$ and conjectured that these results extend to the original estimator. 
In addition, they presented simulation
results for the original estimator demonstrating good performance with lower variance than the AJ
estimator in many scenarios. However, as noted previously,
it cannot handle scenarios where the proportion of diseased individuals who died during the follow-up period is low.

\subsection{The Proposed Estimator}
Our proposed estimator, of a form similar to the GZS estimator, is
\begin{equation}
	\widehat{G}_1(t_1^*) = \frac{1}{n} \sumi \delta_{1i} I(V_{1i} \leq t_1^*) \widehat{Q}(V_{1i}),
	\label{est}	
\end{equation}
where $\widehat{Q}$ is an estimator of a quantity $Q$ that is chosen so as to yield a consistent estimator of the CIF, as described below.
For any function $Q$, starting from (\ref{jden}), we have
\begin{align*}
	& E[\delta_{1i}I(V_{1i}\leq t_1^*)Q(V_{1i})] \\
	& \hspace*{5mm} = a^{-1} \int \int \int \int f_{T_{1},T_{2}}(t_1,t_2) f_{R}(r)f_{W}(w) \\
	& \hspace*{30mm} Q(t_1) I(t_2 \geq r )I(t_1 \leq t_1^*) I(r+w \geq t_1) \, dt_1 \, dt_2 \, dr \, dw \\
	& \hspace*{5mm} = a^{-1} \int \int \int \int f_{T_{1}}(t_1) f_{T_{2}|T_1}(t_2|t_1) f_{R}(r)f_{W}(w) \\
	& \hspace*{30mm} Q(t_1) I(t_2 \geq r )I(t_1 \leq t_1^*) I(r+w \geq t_1) \, dt_1 \, dt_2 \, dr \, dw \\
	& \hspace*{5mm} = a^{-1} \int f_{T_{1}}(t_1) I(t_1 \leq t_1^*) Q(t_1) \int \int f_{R}(r)f_{W}(w) I(r+w \geq t_1) \\
	& \hspace*{30mm} \left\{\int  f_{T_{2}|T_1}(t_2|t_1) I(t_2 \geq r) \, dt_2 \right\} dt_1 \, dr \, dw \\
	& \hspace*{5mm} = a^{-1} \int f_{T_{1}}(t_1) I(t_1 \leq t_1^*) Q(t_1) 
	\int \int f_{R}(r)f_{W}(w) S_{T_2|T_1}(r|t_1) I(r+w \geq t_1) \, dt_1 \, dr \, dw \\
	& \hspace*{5mm} = a^{-1} \int f_{T_{1}}(t_1) I(t_1 \leq t_1^*) Q(t_1) 
	\int f_{R}(r)f_{W}(w) S_{T_2|T_1}(r|t_1) S_W((t_1-r)_+) \, dt_1 \, dr  \\
	& \hspace*{5mm} = \int f_{T_{1}}(t_1) I(t_1 \leq t_1^*) Q(t_1) 
	\int \left\{ f_{R|R\leq T_2}(r) S_{T_2}(r)^{-1} \right\} S_{T_2|T_1}(r|t_1) S_W((t_1-r)_+) \, dr \, dt_1, 
\end{align*}
where $z_+ = \max(z,0)$ and in the last step we have used (\ref{rr}).
Thus, defining
\begin{align*}
B(t_1) & = \int_{R_L}^{R_U}  f_{R|R\leq T_2}(r) S_{T_2}(r)^{-1} 
S_{T_2|T_1}(r|t_1) S_W((t_1-r)_+) \, dr,
\end{align*}
if we take $Q(t_1) = B(t_1)^{-1}$, we get
\begin{align*}
E[\delta_{1i}I(V_{1i}\leq t_1^*)Q(V_{1i})] 
= P(T_1 \leq t_1^*, T_2>T_1|T_2>R_L)
= G_1(t_1^*).
\end{align*}
We assume that $B(t_1)$ is bounded below.
Following (\ref{eest}), we can estimate $B(V_{1i})$ by
\begin{align}
	\widehat{B}(V_{1i}) = \frac{1}{n} \sum_{j=1}^n \widehat{S}_{T_2}(R_j)^{-1} \widehat{S}_{T_2|T_1}(R_j|V_{1i}) 
	\widehat{S}_{W}((V_{1i}-R_j)_+).
	\label{qhat}
\end{align}
Here, $\widehat{S}_{T_2}$ can be obtained using the Kaplan-Meier estimate, as noted previously, and, similarly, $\widehat{S}_W$ can be
estimated using the Kaplan-Meier estimator applied to the time from recruitment to censoring.
For estimation of $S_{T_2|T_1}$, we use the Nelson-Aalen version of
a well-known kernel-type nonparametric survival regression estimator initially proposed by
\cite{beran1981} and later studied by in \cite{dab1987}, \citet[Proposition 4.3]{vankeil1999}, and other works.
In the present setting, the estimator of the conditional cumulative hazard function
$\Lambda_{T_2|T_1}(t_2|t_1)$ is given by
\begin{equation}
	\widehat{\Lambda}_{T_2|T_1}(t_2|t_1) = \int_{t_1}^{t_2} \frac{(nh_n)^{-1} \sum_{m=1}^n \delta_{1m} \cK((V_{1m}-t_1)/h_n) dN_{2m}(u)}
	{(nh_n)^{-1} \sum_{m=1}^n \delta_{1m} \cK((V_{1m}-t_1)/h_n) Y_{2m}(u)},
	\label{beran}
\end{equation}
where $N_{2m}(u) = \delta_{2m} I(V_{2m} \leq u)$, $Y_{2m}(u) = I(R_m < u \leq V_{2m})$,
$\cK$ is the kernel function,
and $h_n$ is the bandwidth. Here, the estimator involves only the individuals $m$ for whom $T_{1m}$ was observed, but arguments given in the Supplementary Material
show that $\widehat{\Lambda}_{T_2|T_1}(t_2|t_1)$
still targets ${\Lambda}_{T_2|T_1}(t_2|t_1)$.
The corresponding conditional survival function estimate is 
$$
\widehat{S}_{T_2|T_1}(t_2|t_1) = \exp\left(-\widehat{\Lambda}_{T_2|T_1}(t_2|t_1)\right).
$$
Note that if $t_1 > R_U$, then automatically $S_{T_2|T_1}(r|t_1)=1$ for all
$r \in [R_L, R_U]$, so that $S_{T_2|T_1}(r|t_1)$ needs to be estimated only
for $t_1 \in [0,R_U]$.
Given the above, our proposed CIF estimator is then given by (\ref{est}) with $\widehat{Q}
= \widehat{B}^{-1}$, where $\widehat{B}$ is given by (\ref{qhat}). 

The asymptotic properties of the proposed estimator are described in the following theorem.

\begin{theorem}
Under the assumptions stated in the Supplementary Material, 
for given $t_1^*, \in [t_{1min},\tau]$, the estimation error 
 $\widehat{G}_1(t_1^*) - {G}_1(t_1^*)$ of the estimator $\widehat{G}_1(t_1^*)$
 can be represented in the form
\begin{equation}
\widehat{G}_1(t_1^*) - {G}_1(t_1^*) = \frac{1}{n} \sumi U_i(t_1^*) + o_{P}(n^{-1/2}),
\label{iidr}
\end{equation}
where $\{U_i(t_1^*)\}_{i=1}^n$ are i.i.d.\ mean-zero random variables
not depending on the bandwidth sequence $h_n$.
Consequently, we have
\begin{equation}
\sqrt{n} \, (\widehat{G}_1(t_1^*) - {G}_1(t_1^*))  \stackrel{d}{\to} N(0,\sigma^2(t_1^*))
\label{anrm}	
\end{equation} 
with $\sigma^2(t_1^*) = \Var(U_i(t_1^*))$.
\end{theorem}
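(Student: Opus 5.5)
The plan is the usual functional delta-method linearization. Write $\widehat{G}_1(t_1^*) - G_1(t_1^*) = A_n + D_n$ with
\[
A_n = \frac{1}{n}\sum_{i=1}^n \delta_{1i} I(V_{1i}\le t_1^*) B(V_{1i})^{-1} - G_1(t_1^*),
\]
\[
D_n = \frac{1}{n}\sum_{i=1}^n \delta_{1i} I(V_{1i}\le t_1^*)\bigl\{\widehat{B}(V_{1i})^{-1}-B(V_{1i})^{-1}\bigr\}.
\]
By the identity $E[\delta_{1i}I(V_{1i}\le t_1^*)B(V_{1i})^{-1}]=G_1(t_1^*)$ derived above, $A_n$ is already an average of i.i.d.\ mean-zero terms. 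Because $B$ is bounded below and $\widehat{B}\to B$ uniformly, we may expand
\[
D_n = -\frac{1}{n}\sum_i \delta_{1i}I(V_{1i}\le t_1^*)B(V_{1i})^{-2}\{\widehat{B}(V_{1i})-B(V_{1i})\} + O_P\bigl(\sup|\widehat{B}-B|^2\bigr).
\]
The remainder is $o_P(n^{-1/2})$ as soon as $\sup|\widehat{B}-B|=o_P(n^{-1/4})$, which the rates below give. Next, decompose $\widehat{B}(t_1)-B(t_1)$ into four pieces:
\begin{itemize}
\item[(a)] the empirical-average error $n^{-1}\sum_j \psi(R_j,t_1)-B(t_1)$, where $\psi(r,t_1)=S_{T_2}(r)^{-1}S_{T_2|T_1}(r|t_1)S_W((t_1-r)_+)$;
\item[(b)] the error from $\widehat{S}_{T_2}$;
\item[(c)] the error from $\widehat{S}_W$;
\item[(d)] the error from the Beran-type $\widehat{S}_{T_2|T_1}$.
\end{itemize}
Cross terms are products of two errors and are negligible.

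Terms (a)--(c) are standard. Term (a) gives a degree-two V-statistic in $(i,j)$. Hoeffding's projection turns it into an i.i.d.\ sum, with a degenerate part of order $O_P(n^{-1})$. For (b) and (c), I use the known i.i.d.\ representations of the left-truncated Kaplan--Meier estimator (Tsai et al.) and of the ordinary Kaplan--Meier estimator for $W$. These are $\widehat{S}(u)-S(u)=n^{-1}\sum_k \xi_k(u)+o_P(n^{-1/2})$ uniformly on compacts where the risk set is bounded away from zero. Substituting them yields double averages over $(i,j,k)$, and I reduce these to single i.i.d.\ sums by again replacing the averages over $i$ and $j$ by their expectations, at $o_P(n^{-1/2})$ cost.

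Term (d) is the main obstacle, since $\widehat{\Lambda}_{T_2|T_1}(\cdot|t_1)$ converges only at the nonparametric rate $(nh_n)^{-1/2}$ pointwise. The key point is that it enters only after integration over $t_1$ with respect to the distribution of observed $V_{1i}$. I would first linearize
\[
\widehat{\Lambda}_{T_2|T_1}(r|t_1)-\Lambda_{T_2|T_1}(r|t_1) = \frac{1}{nh_n}\sum_m \delta_{1m}\cK\bigl((V_{1m}-t_1)/h_n\bigr)\,\frac{\int_{t_1}^{r} dM_{2m}(u|t_1)}{\pi(u|t_1)} + \text{bias} + \text{remainder},
\]
where $M_{2m}$ is the counting-process martingale, $\pi$ is the limiting denominator, and the remainder is $O_P((nh_n)^{-1}\log n)$ uniformly; this is the Van Keilegom--Veraverbeke type result, which I take from the Supplementary Material's assumptions. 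Plugging in gives a triple sum over $(i,j,m)$ with kernel weight $h_n^{-1}\cK((V_{1m}-V_{1i})/h_n)$. The U-statistic projection onto $m$, integrating out $V_{1i}$, then produces a smooth average $\int h_n^{-1}\cK((V_{1m}-t)/h_n)g(t)\,dt = g(V_{1m})+O(h_n^2)$. The kernel therefore disappears, and the result is an i.i.d.\ term in $m$ not depending on $h_n$. The degenerate parts are $O_P((n^2h_n)^{-1/2})=o_P(n^{-1/2})$ when $nh_n\to\infty$. The bias is $O(h_n^2)$, so I need $nh_n^4\to 0$ (undersmoothing), together with $nh_n^2/\log n\to\infty$ so that the squared uniform error is $o_P(n^{-1/2})$.

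Collecting the i.i.d.\ contributions from $A_n$, (a), (b), (c), and (d) defines $U_i(t_1^*)$. Each piece has mean zero, being either a centered projection or a martingale integral, and none depends on $h_n$. Asymptotic normality (\ref{anrm}) then follows from the classical central limit theorem and Slutsky's lemma.
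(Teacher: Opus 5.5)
Your proposal is correct and follows essentially the same route as the paper: the oracle i.i.d.\ term plus a second-order expansion of $\widehat{B}^{-1}$, with the linear part split into the empirical-average error, the $\widehat{S}_{T_2}$ and $\widehat{S}_W$ Kaplan--Meier errors, and the Beran error (a bias of order $O_P(h_n^2)$, negligible since $\nu>\tfrac14$, plus a martingale term in which exchanging sums removes the kernel). The differences are only technical: you use Hoeffding/V-statistic projections, whereas the paper uses Donsker-class equicontinuity (its Proposition 3) and, for the Beran martingale term, uniform convergence of the inner kernel smoother $\cI(u,v)$ to $\cI^\circ(u,v)$ combined with Lenglart's inequality.
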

Note that the theorem provides a pointwise result, not a stochastic process result.

In the Supplemental Material, we give an explicit formula for $U_i(t_1^*)$.
In principle, we could estimate $\sigma^2(t_1^*)$ by
$$
\widehat{\sigma}^2(t_1^*) = \frac{1}{n} \sumi \widehat{U}_i(t_1^*)^2, 
$$
where $\widehat{U}_i(t_1^*)$ is obtained by replacing unknown quantities
appearing in $U_i(t_1^*)$ by corresponding estimates. We could then
form a pointwise confidence interval for $G_1(t_1^*)$ as
$
\widehat{G}_1(t_1^*) \pm \zeta_\alpha \widehat{\sigma}^2(t_1^*),
$
with $\zeta_\alpha$ being the $(1-\alpha)$ quantile of the $N(0,1)$ distribution.
However, the expression for $U_i(t_1^*)$ is very complex, and the normal
approximation may not be satisfactory in finite samples.
We therefore propose instead a bootstrap-based procedure for pointwise
confidence intervals, which is presented in detail in Section~4.

In our numerical work, we use the triweight kernel
$$
\cK(\xi) = \frac{35}{32}(1-\xi^{2})^{3} I(|\xi|\leq 1).
$$
We implement a boundary correction at the left boundary using
the boundary kernel formula in Eqn.\ (3.4) of \citet{jones1993}.
For $\omega \in [0,1)$ define
$$
\mu_k(\omega) = \int_{-1}^\omega \xi^k \cK(\xi) d\xi
$$
Define $\omega(v_1,h) = (v_1-t_{1min})/h$.
We then replace $\cK(\xi)$ by 
$$
\cK_{bc}(v_1,\xi) = 
\begin{cases}
\scalebox{1.4}{$
\frac{(\mu_2(\omega(v_1,h))-\mu_1(\omega(v_1,h))\xi) \cK(\xi)}
{\mu_0(\omega(v_1,h))\mu_2(\omega(v_1,h)) - \mu_1(\omega(v_1,h))^2}$} 
& v_1 \in [t_{1min},t_{1min}+h) \\[2mm]
\cK(\xi) & v_1 \geq t_{1min} + h  
\end{cases}
$$

\section{Bandwidth Selection}

Our estimation procedure involves a bandwidth parameter $h_n$. The
the bandwidth enters only through $\widehat S_{T_2\mid T_1}$,
so we select $h_n$ by directly assessing predictive adequacy of the fitted 
conditional death distribution estimate.
We propose a $K$-fold cross-validation procedure based on martingale residuals.
The martingale residual \citep{ther1990} for observation $i$, adjusting for left truncation, is given by
$$
\widehat{\mathcal{M}}_i = \delta_{2i} - \widehat{\Pi}_i
$$
with
$$
\widehat{\Pi}_i = \widehat{\Lambda}_{T_2|T_1}(V_{2i}|V_{1i}) - \widehat{\Lambda}_{T_2|T_1}(R_i|V_{1i}).
$$
We divide the dataset into $K$ approximately equally-sized blocks $\cB_1, \ldots, \cB_K$. 
Consider a given bandwidth value $h$.
For each block $\mathcal{B}_k$, we compute 
the estimate
$$
\widehat{\Lambda}_{T_2|T_1}^\pss{-k} = \mbox{the estimate (\ref{beran}) based on the data excluding $\mathcal{B}_k$},
$$
and then for each $i \in \cB_k$, we compute 
$$
\widehat{\mathcal{M}}_i^\pss{-k} = \delta_{2i} - \widehat{\Pi}_i^\pss{-k}
$$
with 
$$
\widehat{\Pi}_i^\pss{-k} = \widehat{\Lambda}_{T_2|T_1}^\pss{-k}(V_{2i}|V_{1i}) 
- \widehat{\Lambda}_{T_2|T_1}^\pss{-k}(R_i|V_{1i}).
$$
Given the quantities defined above, we compute the goodness of fit statistic
$$
GOF_k(h) = \sum_{i \in \cB_k} I(\widehat{\Pi}_i^\pss{-k} > 0) \frac{(\widehat{\mathcal{M}}_i^\pss{-k})^2}{\widehat{\Pi}_i^\pss{-k}}.
$$
We then compute the overall goodness of fit statistic
$$
GOF(h) = \sum_{k=1}^K GOF_k(h).
$$
The final bandwidth is then given by the $h$ that minimizes $GOF(h)$.

\section{CIF Pointwise Confidence Intervals and Simultaneous Confidence Band}

We use a bootstrap scheme, with ordinary bootstrap sampling, to provide pointwise 
confidence intervals. We use a variation of what \citet[Section 2.4]{dh1997}
call the ``basic bootstrap confidence limits" on a transformed scale.
In preliminary simulations, we examined options for transformations, 
including no transformation, the arcsine transformation
and the complementary log-log transformation $g(u) = \log(-\log(1-u) + 10^{-8})$.
In the end, we settled on the the complementary log-log transformation,
which provided substantially better results than the other transformations.
For bootstrap sample $b$, $b = 1, \ldots, B$, let $\widehat{G}_{1b}(t_1^*)$ be the
estimate of $G_1(t_1^*)$. Define $\psi = g(G_1(t_1^*))$, 
$\widehat{\psi} = g(\widehat{G}_{1}(t_1^*))$,
$\widehat{\psi}_b = g(\widehat{G}_{1b}(t_1^*))$, and $\cT_b = \hat{\psi}_b - \hat{\psi}$.
The basic bootstrap confidence limits for $\psi$ at confidence level $1-\alpha$,
as presented by Davison and Hinkley, are given by
$\psi_L = \widehat{\psi} - q_2$ and $\psi_U = \widehat{\psi} - q_1$,	
where $q_1$ is the $\alpha/2$ quantile of the distribution of $\cT_b$ and
$q_2$ is the $(1-(\alpha/2))$ quantile.
Ordinarily, we would estimate $q_1$ and $q_2$ using empirical
quantiles of a very large bootstrap sample of $\cT_b$ values. 
In our setting, however, it is computationally impractical to generate a bootstrap 
sample with sufficiently many replications to provide a precise estimate of
the quantiles. We therefore apply a conservative adjustment to account
for the simulation error in estimating the quantiles.
Specifically, we replace $q_1$ by a lower $(1-\alpha)$ 
level confidence limit $\tilde{q}_1$ for $q_1$ and $q_2$ by an upper $(1-\alpha)$
confidence limit $\tilde{q}_2$ for $q_2$. Following the standard
nonparametric procedure for a confidence interval for a quantile,
we take $\tilde{q}_1$ to be the $k_1$-th
order statistic of the bootstrap sample $\cT_1, \ldots, \cT_B$ and $\tilde{q}_2$ 
to be the $k_2$-th order statistic, where
\begin{align*}
k_1 & = floor \left( \tfrac{1}{2}\alpha B - \zeta_\alpha 
\sqrt{B (\tfrac{1}{2}\alpha)\left(1-\tfrac{1}{2}\alpha\right)} \right) \\
k_2 & = ceiling \left( \left(1-\tfrac{1}{2}\alpha\right)B + \zeta_\alpha 
\sqrt{B(\tfrac{1}{2}\alpha)\left(1-\tfrac{1}{2}\alpha\right)} \right).
\end{align*}
Here, $floor(c)$ denotes the largest integer less than or equal to $c$
and $ceiling(c)$ denotes the smallest integer greater than or equal to $c$.
The confidence limits for $G_1(t_1^*)$ are then given by
$G_{1L}(t_1^*) = g^{-1} (\widehat{\psi} - \tilde{q}_2)$ and 
$G_{1U}(t_1^*) = g^{-1} (\widehat{\psi} - \tilde{q}_1)$,
where $g^{-1}$ is the inverse function of $g$, given by
$g^{-1}(a) = 1 - \exp(-(e^a-10^{-8}))$.

We also provide a simultaneous confidence band for $G_{1}(t_1^*)$. Define 
$\mathfrak{M}_b = \max_{t_1^*} |\widehat{G}_{1b}(t_1^*) - \widehat{G}_{1}(t_1^*)|$
and
$$
k^* = ceiling \left( (1-\alpha)B + \zeta \sqrt{B\alpha(1-\alpha)} \right)
$$
The confidence band is then given by $\widehat{G}_{1}(t_1^*) \pm \Delta$, where 
$\Delta$ is the $k^*$-th order statistic of the sample $\mathfrak{M}_1, \ldots,
\mathfrak{M}_B$.
We have not undertaken a formal theoretical analysis of this simultaneous confidence
band procedure, but our simulation results show good performance in many situations.

\omitt{
As part of the bootstrap procedure, we also provide a bias-corrected point
estimate, given by
$$
\widehat{G}_{1}^\pss{corr}(t_1^*)
= \widehat{G}_{1}(t_1^*)
- \left( \frac{1}{B} \sum_{b=1}^B \widehat{G}_{1b}(t_1^*) - \widehat{G}_{1}(t_1^*) \right)
=  2 \widehat{G}_{1}(t_1^*) - \frac{1}{B} \sum_{b=1}^B \widehat{G}_{1b}(t_1^*);
$$
see \citet[Section 10.6]{efron1993}.
}

\section{Simulation Study}
We conducted an extensive simulation study to examine the finite-sample properties of the
proposed estimator, $\widehat{G}_1$, in comparison with the Aalen-Johansen estimator,
$\widehat{G}_1^{AJ}$, with adjustment for left truncation, and the GZS
estimator $\widehat{G}_1^{GZS}$. 
Pointwise 95\% confidence intervals and were computed for all three estimators
using the procedure described in the preceding section.
In addition, 95\% simultaneous confidence bands were computed for the GZS and new estimators,
again using the procedure described in the preceding section.

We examined different combinations of the distribution 
of $T_1$, the distribution of $T_2$, and the distribution of $W$. The configurations are
labeled with a three-digit identifier, with the three digits representing, respectively, 
the distribution of $T_1$, $T_2$, and $W$. The choices for $T_1$ were as follows:

1: truncated Weibull with shape parameter 4, scale parameter 116, and lower truncation point 40 

2: 20 + Weibull with shape parameter 0.85 and scale parameter 170 

3: Weibull with shape parameter 3.5 and scale parameter 100

The sampling scheme was as follows: Initially, we generated a disease time $T_1^{(\text{ini})}$ according the
selected distribution (from the three options indicated above) and a death time $T_2^{(\text{ini})}$ based on mortality data 
from the UK Office of National Statistics. If $T_2^{(\text{ini})} < T_1^{(\text{ini})}$, we set $T_2 = T_2^{(\text{ini})}$ and 
$T_{1i}=\infty$. If $T_2^{(\text{ini})} > T_1^{(\text{ini})}$, we set $T_1 = T_1^{(\text{ini})}$ and generated $T_2$ according 
to the following options:

1: $T_2 = T_1 + 0.8(T_2^{(\text{ini})} - T_1)$ 

2: $T_2 = T_1 + U$, where $U$ follows a Weibull distribution with a shape parameter of 4 and a scale parameter chosen to achieve an expected survival time after diagnosis of 5 years 

3: $T_2 = T_1 + U$, where $U$ follows a Weibull distribution with a shape parameter of 4 and a scale parameter chosen to achieve an expected survival time after diagnosis of 15 years

The options for $W$ were as follows:

1: uniform over [11,15] (similar to UKB data) 

2: uniform over [11,25] 

The recruitment time $R$ was generated according to the distribution of recruitment times in the UKB data.
The minimum recruitment age $R_L$ was 40 and the maximum recruitment age $R_U$ was 69.
The first option for the distribution of $T_1$ corresponds to scenarios where no disease events
occurred before $R_L$, while the other two options correspond to scenarios
where the onset of disease could occur before $R_L$.

The simulations were carried out using parallel computation with 32 parallel processors.
Each configuration was studied with 512 simulation replications and 500 bootstrap samples, with
sample sizes of 2,500, 5,000 and 7,500. Table 1 summarizes the censoring rates, the death
rates before and after disease diagnosis, and the percentage of cases that were prevalent
cases. 

Figures 1-3  summarize the results for the nine scenarios with $W$ following the $U(11,15)$ distribution
and Figures 4-6 the results for the nine scenarios with $W$ following the $U(11,25)$ distribution.
The plots show the results of the Aalen-Johansen estimator, the GZS estimator, and the new proposed estimator, 
with $n=5,000$ observations. The age range in the plots is 40 to 80 in the scenarios with no disease events
before $R_L$ and 30 to 80 in the remaining scenarios.
The graphs in the figures present the mean, empirical standard deviation, and pointwise CI coverage rate 
of the estimators over the 512 simulation replications, as a function of $t$.
Section S2 of the Supplementary Material includes results for $n=2,500$ and $n=7,500$.

Regarding the scenarios with shorter follow-up (xx1), in Scenarios 111, 121, 211, 221,
311, and 321, the GZS estimator exhibited serious bias in the mid-to-upper end of the 
age range. The new estimate
was on target in all scenarios except for 221 and 321, in which
the new estimate was slightly biased.
The bias with the new estimator in Scenarios 221 and 321 decreased with increasing
sample size.
The pointwise confidence interval coverage with the new estimator was
satisfactory in most scenarios.
The simultaneous confidence band coverage with the new estimator was 
satisfactory except in Scenarios 221 and 321.
In the scenarios with longer follow-up (xx2), the GZS estimator was no
longer biased, but the new estimator had lower standard deviation.
The pointwise confidence interval coverage and simultaneous confidence band coverage
with both the GZS estimator and the new estimator was satisfactory in most scenarios.

\section{UK Biobank Data Analysis}
The UK Biobank (UKB) provides extensive data on approximately 500,000 individuals in the UK, who were aged 40–69 at enrollment between 2006 and 2010 and are followed prospectively until death, dropout, or the end of the study. In the present analysis, we consider 10 phenotypes based on approximately 250,000 female participants. Figure \ref{fig:UKB-interesting} presents CIF estimates for six cancers—breast, cervical, colorectal, ovarian, melanoma, and kidney—obtained using three methods: Aalen–Johansen (AJ), GZS, and the proposed method (NEW). Table 3 reports, for each disease, the numbers of prevalent and incident cases, together with the widths of the confidence bands and the mean widths of the pointwise confidence intervals under each method and based on 500 bootstrap samples.

Evidently, for all six cancers, the CIF estimates produced by the left-truncation adjusted AJ estimator and the proposed estimator are very similar, whereas the GZS estimator systematically underestimates the CIF. This downward bias is expected in settings with relatively long survival after diagnosis, because the GZS estimator assigns positive mass only to disease cases whose subsequent death is observed during follow-up. When follow-up is not sufficiently long to observe enough post-diagnosis deaths, the resulting estimator fails to fully recover the incidence curve. Thus, in these UK Biobank applications, AJ and the proposed estimator appear to target the same incidence curve, while GZS is not adequate for diseases with long post-diagnosis survival. A key advantage of the proposed estimator over AJ is efficiency: by incorporating all observed disease cases, including prevalent cases and cases not followed until death, the proposed method yields uniformly narrower pointwise confidence intervals and simultaneous confidence bands across all phenotypes. This efficiency gain is substantial and represents the central practical benefit of the proposed approach in settings where AJ is approximately unbiased but discards a large fraction of informative disease cases. For breast cancer, for example, the average width of the 95\% simultaneous confidence band is reduced from 0.0068 under AJ to 0.0035 under the proposed estimator, corresponding to a nearly two-fold reduction.

In the Supplementary Material, Figure S5 and Table S1 present analogous results for four additional cancers: brain, esophageal, acute myeloid leukemia, and lung. For these phenotypes, the CIF estimators from all three methods are highly similar, likely because the interval between diagnosis and death is shorter, so deaths after diagnosis are observed more completely.

\section{Concluding Remarks}

In this paper, we present a new CIF estimator that incorporates information from all disease cases, including both incident cases and prevalent cases,
irrespective of whether or not the diseased individual died during the course of the study. This goes beyond the estimator of \cite{aalen1978empirical} (AJ),
which incorporates only incident cases, and therefore cannot estimate the CIF for ages prior to the minimum age of recruitment to the study. It also goes
beyond the estimator of \cite{gzs2025} (GZS), which also incorporates prevalent cases in which the diseased individual died during the study, but not cases 
in which the disease individual survived to the end of follow-up. For diseases, such as breast cancer, which can occur early in life and are associated 
with long survival after disease onset, the GZS estimator is highly biased if the length of follow-up is short to moderate, whereas the new estimator
is consistent. Even when follow-up is long, so that the GZS estimator is on target, the new method has the advantage of a lower standard deviation.
The new estimator occasionally exhibits finite-sample bias, but the bias is very slight.

\section*{Acknowledgments}
This research has been conducted using the UK Biobank Resource, project 56885.
We thank Ingrid van Keilegom for directing us to \cite{vankeil1999}.
M.G. was supported by ISF grant 767/21 and a competitive
grant in data science (DS) from the Israel Council of Higher Education (Malag).

\section*{Supplementary Material}

Supplementary material is available at \textit{Biostatistics} online.

\section*{Conflicts of Interest}
No conflicts of interest are declared.

\section*{Software Availability}

The R code used for carrying out the simulations reported in this paper is available on
Github at the following address: \\
\url{https://github.com/david-zucker/illness-death-1} \\
An R package \texttt{illdthCIF1} implementing the proposed method is available on Github
at the following address: \\
\url{https://github.com/david-zucker/illdthCIF1}
\bibliography{research-program}

\begin{figure}[H]
	\centering
	\includegraphics[width=1\textwidth, height=0.85\textheight]{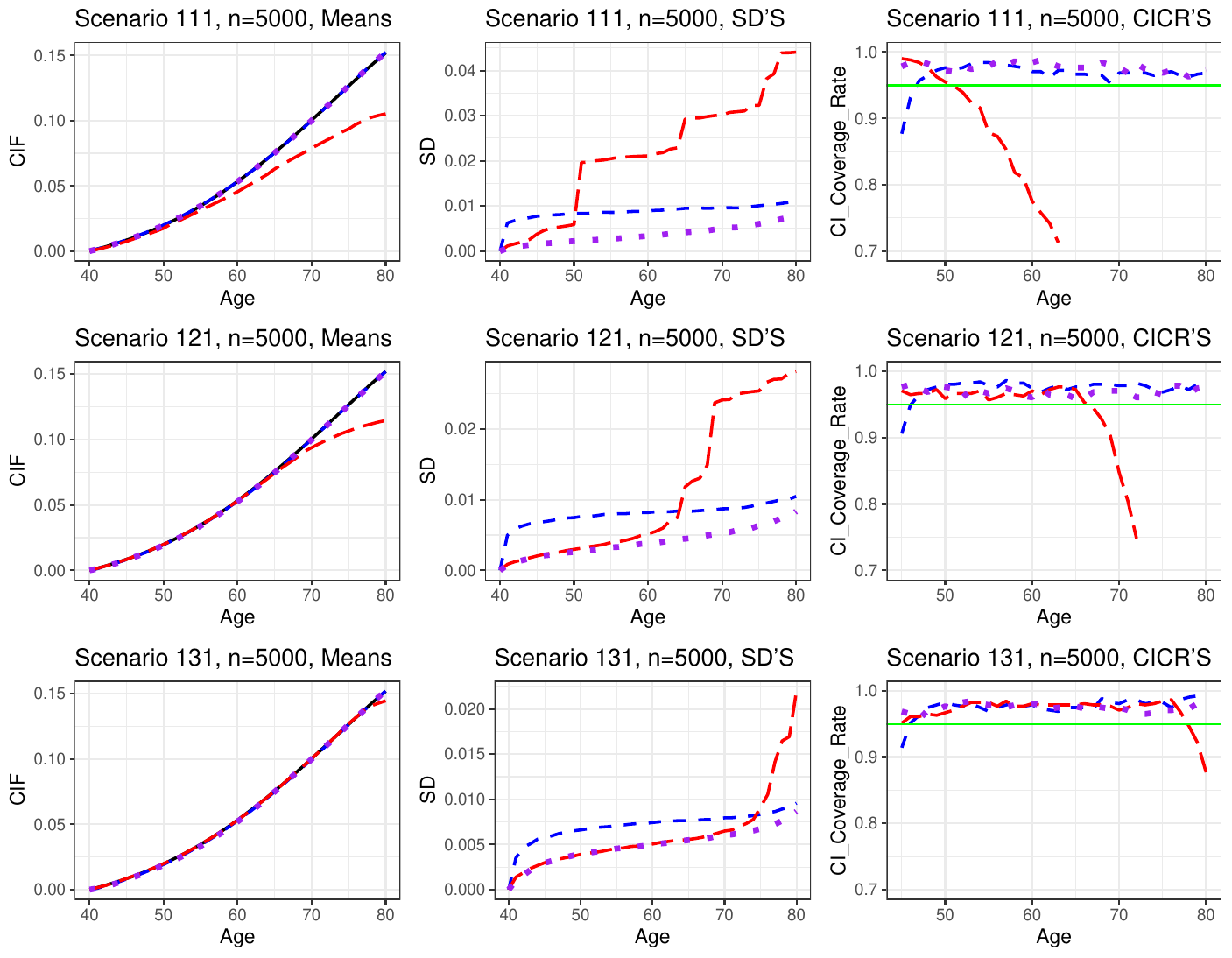}
	\caption{Simulation results for configurations 111, 121, and 131: Mean over estimates, standard deviations (SD), and 
	   empirical coverage rates of 95\% pointwise confidence intervals, for the Aalen-Johansen (AJ), Gorfine-Zucker-Shoham (GZS),
	   and new estimators. Solid black line – true curve, blue dashed line – AJ estimator, red long-dashed line - GZS estimator, 
	   purple dotted line – new estimator. }\label{fig:f1} 
\end{figure}

\begin{figure}[H]
	\centering
	\includegraphics[width=1\textwidth, height=0.85\textheight]{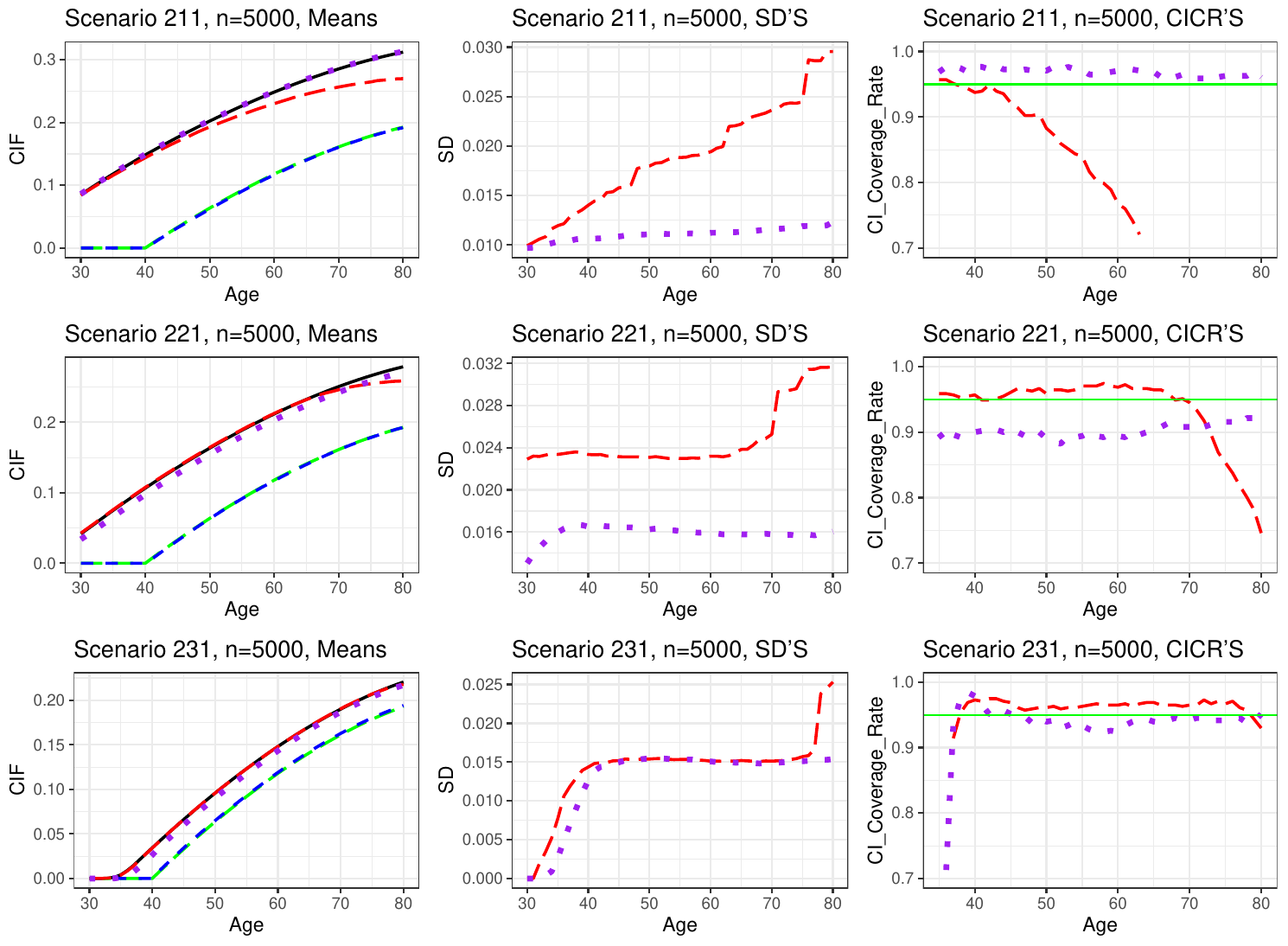}
	\caption{Simulation results for configurations 211, 221, and 231: Mean over estimates, standard deviations (SD), and 
		empirical coverage rates of 95\% pointwise confidence intervals for the Gorfine-Zucker-Shoham (GZS) and 
		and new estimators. Mean over estimates and target CIF curve shown also for the Aalen-Johansen (AJ) estimator.
		 Solid black line – true curve, green long-dashed line - target of AJ estimator,
		blue dashed line – AJ estimator, red long-dashed line - GZS estimator, 
		purple dotted line – new estimator. }\label{fig:f2} 
\end{figure}

\begin{figure}[H]
	\centering
	\includegraphics[width=1\textwidth, height=0.85\textheight]{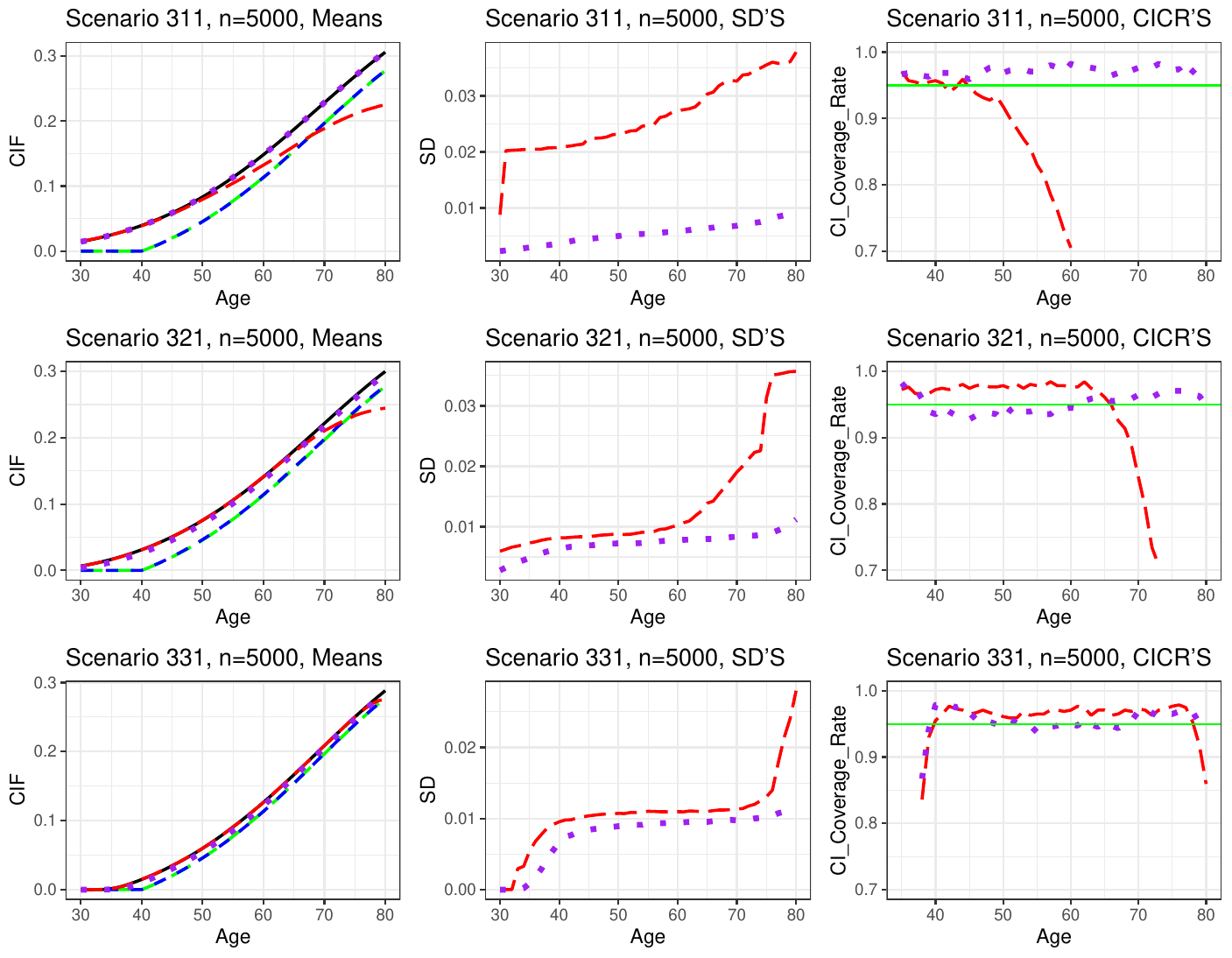}
	\caption{Simulation results for configurations 211, 221, and 231: Mean over estimates, standard deviations (SD), and 
		empirical coverage rates of 95\% pointwise confidence intervals for the Gorfine-Zucker-Shoham (GZS) and 
		and new estimators. Mean over estimates and target CIF curve shown also for the Aalen-Johansen (AJ) estimator.
		Solid black line – true curve, green long-dashed line - target of AJ estimator,
		blue dashed line – AJ estimator, red long-dashed line - GZS estimator,  
		purple dotted line – new estimator. }\label{fig:f3} 
\end{figure}

\begin{figure}[H]
	\centering
	\includegraphics[width=1\textwidth, height=0.85\textheight]{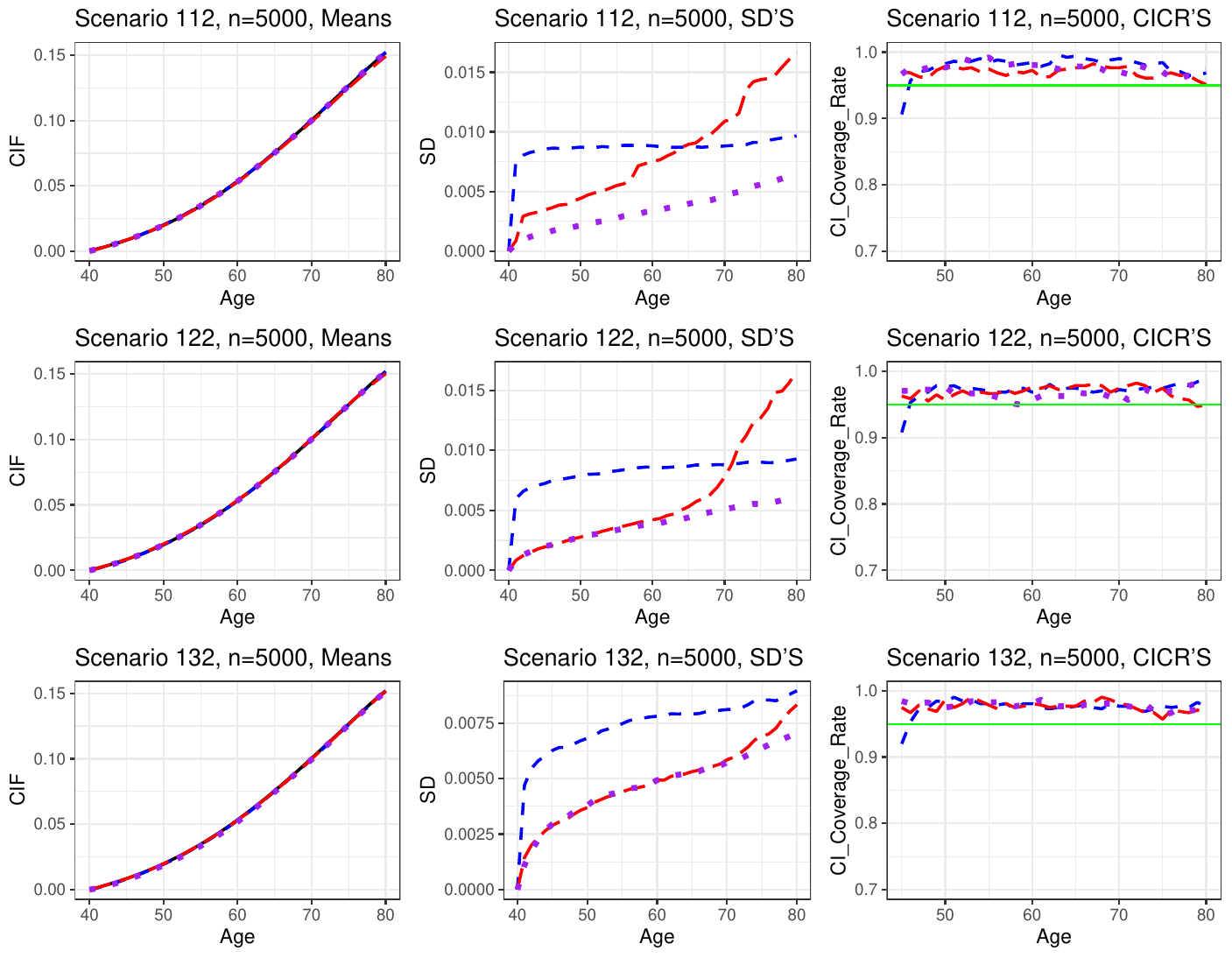}
	\caption{Simulation results for configurations 112, 122, and 132: Mean over estimates, standard deviations (SD), and 
		empirical coverage rates of 95\% pointwise confidence intervals, for the Aalen-Johansen (AJ), Gorfine-Zucker-Shoham (GZS),
		and new estimators. Solid black line – true curve, blue dashed line – AJ estimator, red long-dashed line - GZS estimator, 
		purple dotted line – new estimator. }\label{fig:f4} 
\end{figure}

\begin{figure}[H]
	\centering
	\includegraphics[width=1\textwidth, height=0.85\textheight]{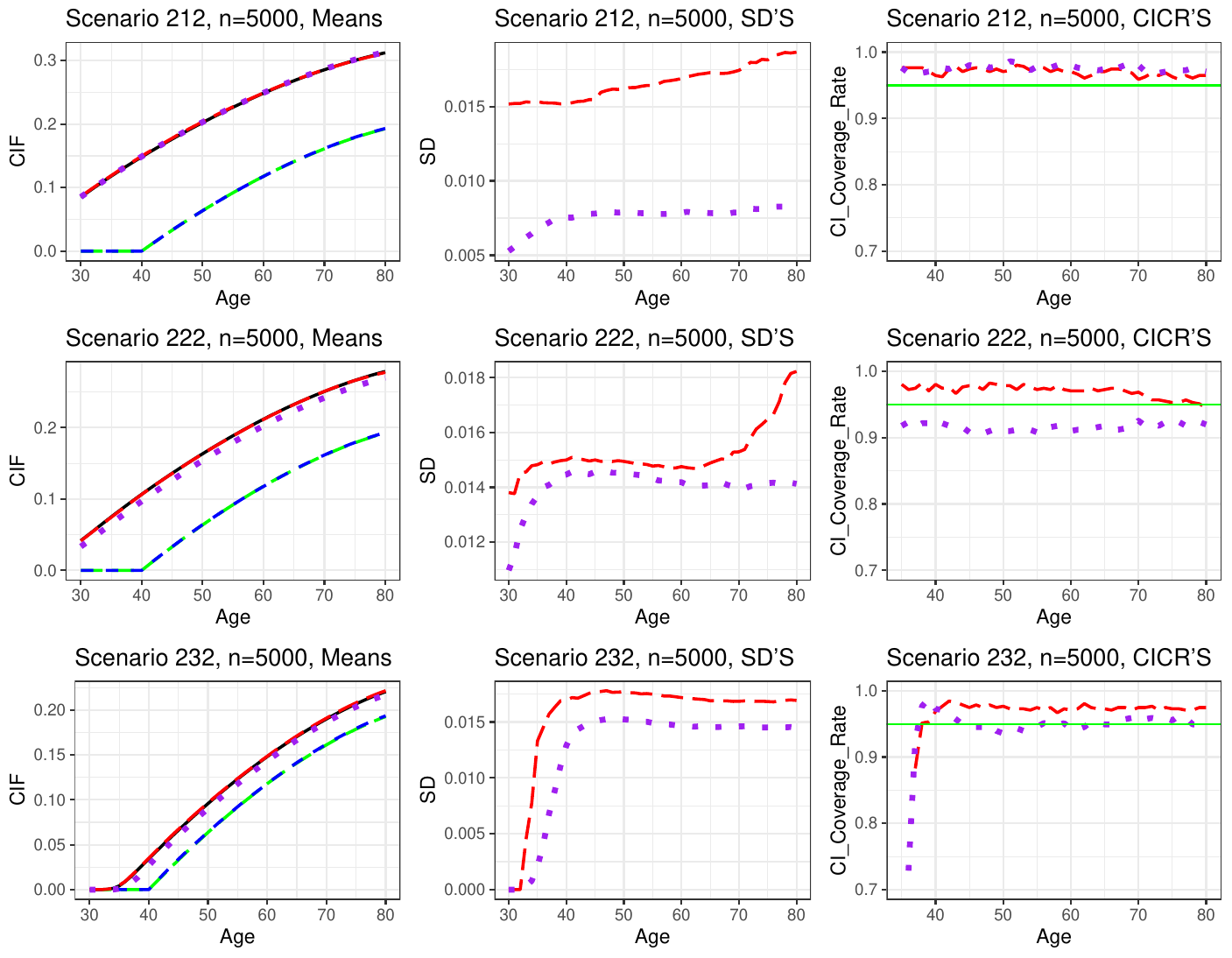}
	\caption{Simulation results for configurations 212, 222, and 232: Mean over estimates, standard deviations (SD), and 
		empirical coverage rates of 95\% pointwise confidence intervals for the Gorfine-Zucker-Shoham (GZS) and 
		and new estimators. Mean over estimates and target CIF curve shown also for the Aalen-Johansen (AJ) estimator.
		Solid black line – true curve, green long-dashed line - target of AJ estimator,
		blue dashed line – AJ estimator, red long-dashed line - GZS estimator, 
		purple dotted line – new estimator. }\label{fig:f5} 
\end{figure}

\begin{figure}[H]
	\centering
	\includegraphics[width=1\textwidth, height=0.85\textheight]{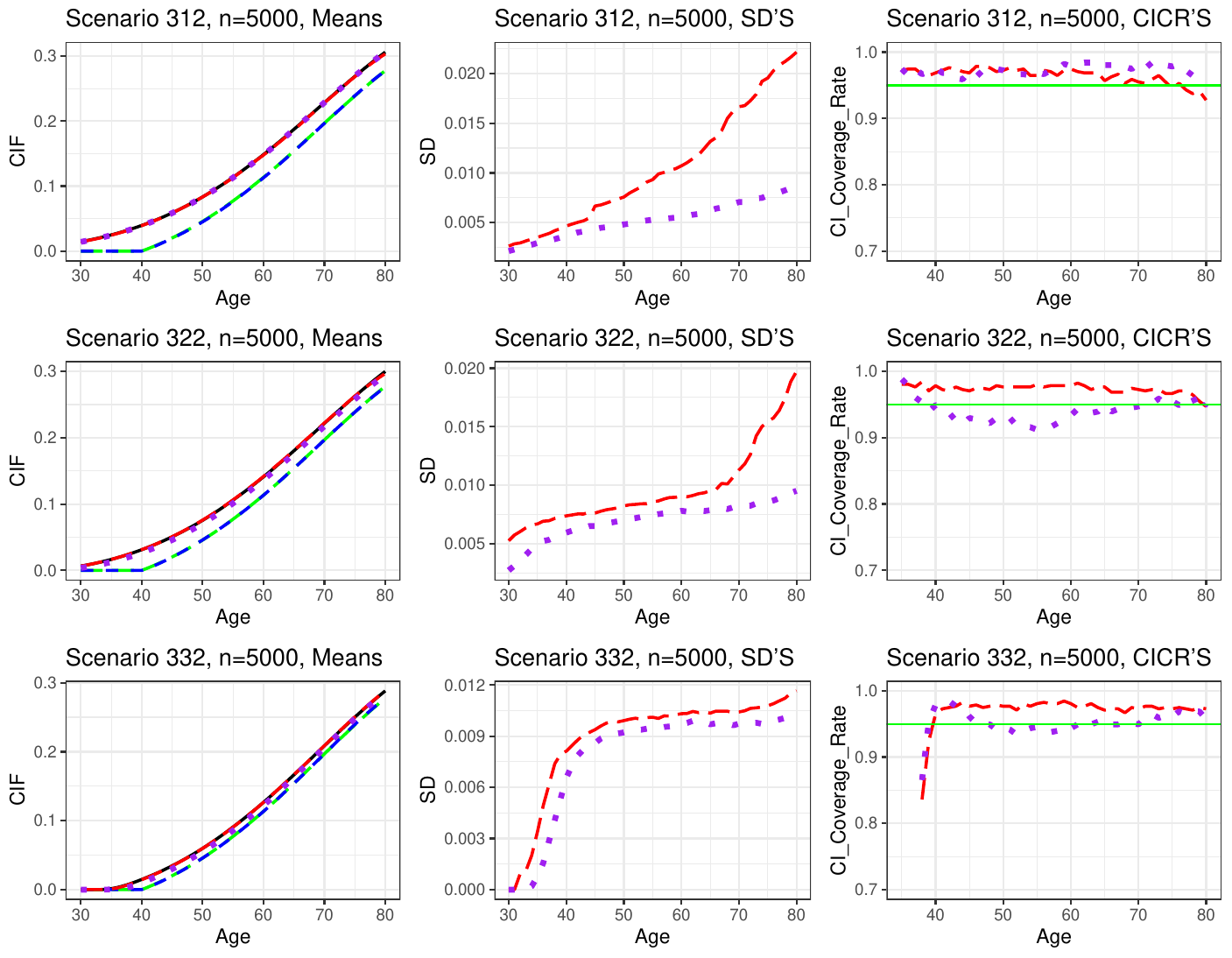}
	\caption{Simulation results for configurations 212, 222, and 232: Mean over estimates, standard deviations (SD), and 
		empirical coverage rates of 95\% pointwise confidence intervals for the Gorfine-Zucker-Shoham (GZS) and 
		and new estimators. Mean over estimates and target CIF curve shown also for the Aalen-Johansen (AJ) estimator.
		Solid black line – true curve, green long-dashed line - target of AJ estimator,
		blue dashed line – AJ estimator, red long-dashed line - GZS estimator, 
		purple dotted line – new estimator. }\label{fig:f5} 
\end{figure}

\begin{figure}[H]
	\centering
	\includegraphics[width=1\textwidth, height=0.65\textheight]{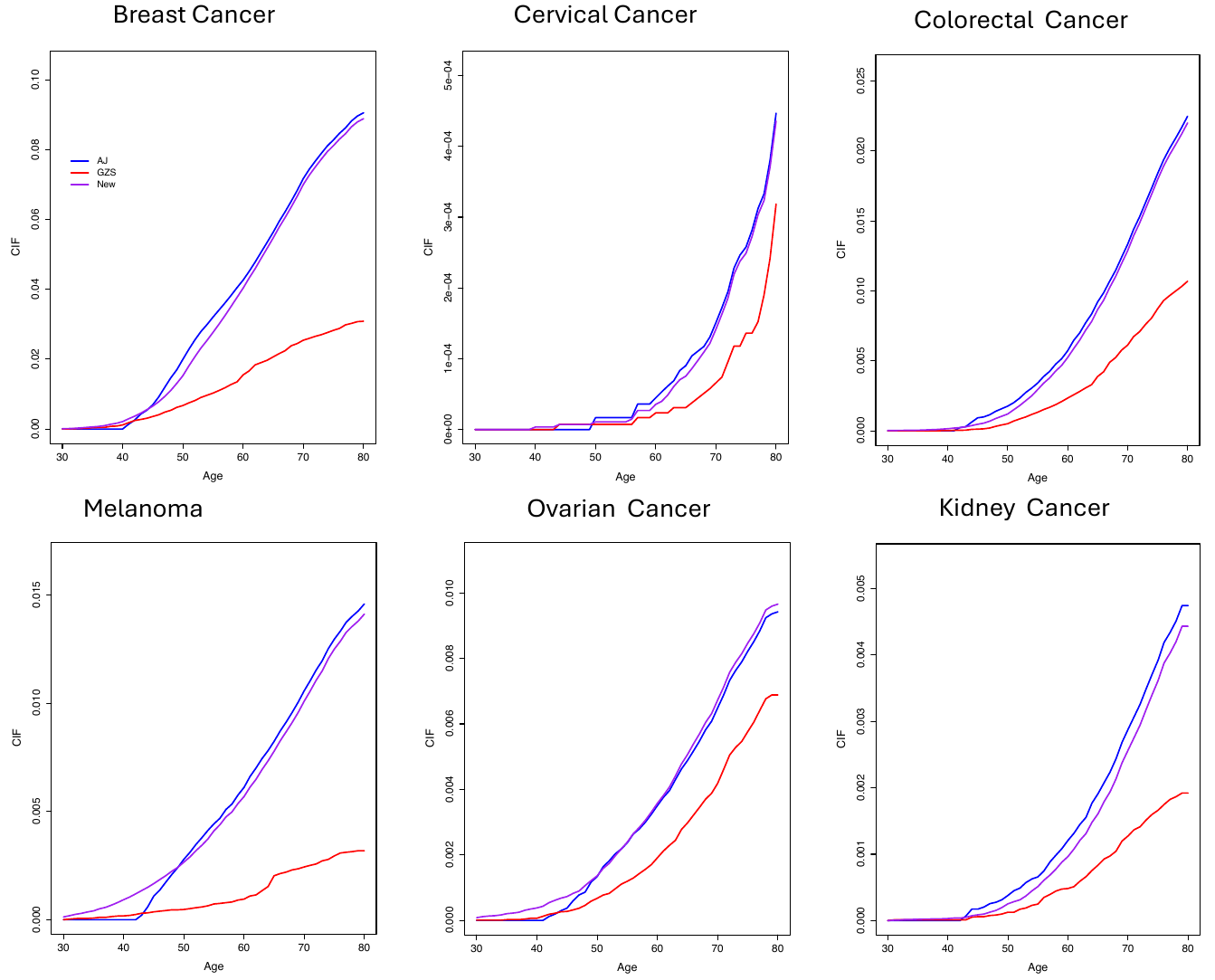}
\caption{Estimated CIFs for six phenotypes in female participants from the UK Biobank, obtained using the Aalen--Johansen estimator (AJ, blue), the estimator of \cite{gzs2025} (GZS, red), and the proposed estimator (New, purple).}\label{fig:UKB-interesting}
\end{figure}

\begin{table}[H]
	\centering
	\caption{Simulation Settings - Summary Statistics}\label{tbl:Simulation-info}
	\begin{tabular}{|r|rrrr|r|}
		\hline
		& & Died Without & Alive With & Died With & Prevalent \\
		Scenario & Censored & Disease & Disease & Disease & Fraction \\
		\hline
		111 & 73.5 & 16.4 &  7.5 &  2.7 & 41.3 \\
		112 & 60.4 & 26.9 &  7.3 &  5.4 & 33.1 \\
		121 & 74.0 & 16.5 &  5.7 &  3.9 & 37.0 \\ 
		122 & 60.8 & 27.1 &  5.6 &  6.5 & 29.4 \\ 
		131 & 75.4 & 16.8 &  2.4 &  5.4 & 21.6 \\
		132 & 62.0 & 27.6 &  2.3 &  8.1 & 16.2 \\
		211 & 59.4 & 13.2 & 17.9 &  9.5 & 79.6 \\
		212 & 49.2 & 21.8 & 14.0 & 15.0 & 75.3 \\
		221 & 68.7 & 15.3 &  6.8 &  9.2 & 59.6 \\
		222 & 56.8 & 25.3 &  5.6 & 12.2 & 53.5 \\
		231 & 73.5 & 16.3 &  2.3 &  7.8 & 31.4 \\
		232 & 60.8 & 27.1 &  1.9 & 10.2 & 26.4 \\
		311 & 63.2 & 14.1 & 16.1 &  6.6 & 54.2 \\
		312 & 50.8 & 22.6 & 14.5 & 12.1 & 46.2 \\
		321 & 66.1 & 14.7 & 10.5 &  8.6 & 43.5 \\
		322 & 53.2 & 23.6 &  9.8 & 13.4 & 35.9 \\
		331 & 69.5 & 15.5 &  4.3 & 10.8 & 23.8 \\
		332 & 55.8 & 24.9 &  3.9 & 15.4 & 18.5 \\
		\hline
	\end{tabular}  
\end{table}

\begin{table}[H]
	\centering
	\caption{Simultaneous Confidence Band Coverage Rates for the GZS and New Estimators}\label{tbl:cbands}
	\begin{tabular}{|r|rr|rr|rr|}
		\hline
        & \multicolumn{2}{c|}{$n=2,500$}  
		& \multicolumn{2}{c|}{$n=5,000$} 
		& \multicolumn{2}{c|}{$n=7,500$} \\
        Scenario & GZS & New & GZS & New & GZS & New \\
		\hline
		111 & 28.9 & 95.9 & 24.2 & 96.5 & 20.5 & 96.1 \\
		112 & 93.6 & 95.9 & 93.2 & 94.7 & 91.4 & 97.3 \\
		121 & 38.1 & 96.5 & 25.2 & 95.7 & 21.3 & 95.9 \\ 
		122 & 92.4 & 94.7 & 91.8 & 96.5 & 89.6 & 97.3 \\
		131 & 85.4 & 96.9 & 83.0 & 96.7 & 82.6 & 95.3 \\
		132 & 97.5 & 96.7 & 95.5 & 96.5 & 96.9 & 96.7 \\
		211 & 53.9 & 97.9 & 39.3 & 96.5 & 35.0 & 96.7 \\
		212 & 95.9 & 97.7 & 95.1 & 97.3 & 93.9 & 97.7 \\
		221 & 78.3 & 78.9 & 72.9 & 82.8 & 64.5 & 82.0 \\
		222 & 93.9 & 78.3 & 94.1 & 83.0 & 93.4 & 81.6 \\
		231 & 92.4 & 86.7 & 91.6 & 88.1 & 91.4 & 89.1 \\
		232 & 92.8 & 85.5 & 94.1 & 88.9 & 95.5 & 87.5 \\
		311 & 30.3 & 96.5 & 18.2 & 97.9 & 13.5 & 96.7 \\
		312 & 92.4 & 96.9 & 92.2 & 97.1 & 89.8 & 97.1 \\
		321 & 34.6 & 93.4 & 26.6 & 93.4 & 18.2 & 93.0 \\
		322 & 93.2 & 93.0 & 94.3 & 91.0 & 92.0 & 93.4 \\
		331 & 86.3 & 93.7 & 82.8 & 93.6 & 82.2 & 95.7 \\
		332 & 97.1 & 90.8 & 96.7 & 92.2 & 96.7 & 94.3 \\
		\hline
	\end{tabular}  
\end{table}

\begin{table}[H]
	\centering
	\caption{Results of 95\% confidence band width (Band) and mean 95\% pointwise confidence interval width (Pointwise) for six phenotypes in female participants from the UK Biobank, obtained using the Aalen--Johansen estimator (AJ), the estimator of \cite{gzs2025} (GZS), and the proposed estimator (New)}\label{tbl:UKB-interesting}
	\begin{tabular}{|rcc|cc|cc|cc|}
	\hline
	& & & \multicolumn{2}{|c|}{AJ} & \multicolumn{2}{|c|}{GZS} & \multicolumn{2}{|c|}{New}  \\
    Cancer type & prevalent & incident & Band & Pointwise & Band & Pointwise & Band & Pointwise \\
    \hline
    Breast  & 8729 & 8256 & 0.0068 & 0.0243 & 0.0067 & 0.0083 & {\bf 0.0035} & {\bf 0.0017}  \\
    Cervical  & 8  &  41  & 0.0004 & 0.0146 &  0.0004 & 0.0116 &  {\bf 0.0003} &  {\bf 0.0025} \\
    Colorectal & 1009 & 966 & 0.0026 & 0.0405 & 0.0021 & 0.0078 & {\bf 0.0020} & {\bf  0.0007} \\
    Ovarian & 664 & 966 &  0.0016 & 0.0347 &  0.0016 &  0.0069 &  {\bf 0.0012} &  {\bf 0.0006} \\
    Melanoma & 1272 & 1427 & 0.0022 & 0.0207 & 0.0023 &  0.0071 & {\bf 0.0014} & {\bf 0.0006}  \\
    Kidney  &184 & 510 & 0.0011 & 0.0646 & 0.0008 &   0.0288 & {\bf 0.0008}&  {\bf 0.0017}  \\
    \hline
	\end{tabular}  
\end{table}

\newpage 

\renewcommand\thesection{S\arabic{section}}
\renewcommand{\thetable}{S\arabic{table}}
\renewcommand{\thefigure}{S\arabic{figure}}
\renewcommand{\theequation}{S.\arabic{equation}}

\setcounter{section}{0}
\setcounter{table}{0}
\setcounter{figure}{0}
\setcounter{equation}{0}

\LARGE

\begin{center}
	
\vspace*{10mm}	

{\bf Supplementary material for \\ ``Efficient Cumulative Incidence Estimation in Biobank Studies Using All Prevalent and Incident Events"}

\vspace*{10mm}	

David M. Zucker\\ 
Department of Statistics and Data Science\\
The Hebrew University of Jerusalem, Israel\\ 		

\vspace*{10mm}	

Malka Gorfine \\
Department of Statistics and Operations Research\\ Tel Aviv University, Israel	

\end{center}

\normalsize

\newpage

\section{Proof of Theorem 1}

\noindent
\underline{Preliminaries}

Recall the definition of $\cKbc(v_1,\xi)$:
$$
\cKbc(v_1,\xi) = 
\begin{cases}
	\scalebox{1.4}{$
		\frac{(\mu_2(\omega(v_1,h))-\mu_1(\omega(v_1,h)\xi) \cK(\xi)}
		{\mu_0(\omega(v_1,h))\mu_2(\omega(v_1,h)) - \mu_1(\omega(v_1,h))^2}$} 
	& v_1 \in [t_{1min},t_{1min}+h) \\[2mm]
	\cK(\xi) & v_1 \geq t_{1min} + h  
\end{cases}
$$
with
$$
\mu_k(\omega) = \int_{-1}^\omega \xi^k \cK(\xi) d\xi
$$
and $\omega(v_1,h) = (v_1-t_{1min})/h$.

Recall also the following definition:
$$
B(v_1) = \int_{R_L}^{R_U} f_{R|R \leq T_2}(r) \cA(r,v_1) \sttw(r|v_1) \, dr, \\
$$
Let us introduce the following additional definitions:
\begin{align*}
	& \cA(r,v_1) = \sttwo(r)^{-1} S_W((v_1-r)_+), \\
	& A_k(v_1,u,h) = \frac{1}{nh} \summ Y_{2m}(u) Z_k\left(v_1,\frac{V_{1m}-v_1}{h}\right), \quad Z_k(v_1,q) = q^k \cKbc(v_1,q), \\
	& \cI =  [t_{1min}+h,R_U], \quad \cU = [t_{1min},t_{1min} + h), \\
	& \eta_k(v_1,h) = \int_{-v_1/h}^{(\tau-v_1)/h} Z_k(v_1,\rho) d\rho, \\
	& \varphi(u,v_1)= f_{V_{1m}}(v_1) E[Y_{2m}(u)|V_{1m}=v_1], \\
	& a_k(v_1,u,h) = \eta_k(v_1,h) \varphi(u,v_1)
\end{align*}
Note that
\begin{itemize}
	\item $\eta_0(v_1,h)=1$ for all $v_1$ and $h$, 
	\item $\eta_1(v_1,h)=a_1(v_1,h)=0$ for all $v_1$ and $h$,
	\item $\eta_k(v_1,h) = \mu_k(1)$ for $v_1 > t_{1min} + h$. 
\end{itemize}
Also define $\tau^* = \tau - R_L$.

\newpage

\underline{Assumptions}

The proof will be carried out under the following assumptions.

A1. The kernel $\cK$ is symmetric, equal to zero outside of $[-1,1]$, and equal
to a polynomial inside $[-1,1]$. In addition, $\cK$ is twice differentiable with
bounded derivatives over the entire real line, including the points $-1$ and 1.

A2. $h_n = \alpha_n n^{-\nu}$ with $\nu \in (\tfrac{1}{4},\tfrac{1}{3})$ and $\alpha_n \to \alpha > 0$. 

A3. The random vector $(T_1, T_2)$ has has a bounded joint density function on $[0,\tau]^2$,
the random variable $R$ has a bounded density on $[R_L,R_U]$, and the random variable
$W$ has a bounded density on $[0,\tau^*]$.
In addition, the density of $T_1$ is differentiable and bounded below by a positive 
number $f_{T_1}^*$ on the interval $[t_{1min},R_U+\varepsilon]$ for some $\varepsilon>0$.

A4. $B(v_1)$ is bounded below by a positive number $B^*$.

A5. The function $\lambda_{T_2\mid T_1}(t_2\mid t_1)$
is uniformly bounded over
$t_1\in[t_{1min},R_U+\varepsilon]$ and $t_2\in[0,R_U]$.
In addition, 
$\lambda_{T_2\mid T_1}(t_2\mid t_1)$ is twice continuously differentiable in $t_1$
on $t_1\in[t_{1min},R_U+\varepsilon]$ for all in $t_2\in[0,R_U]$, with derivatives that are uniformly bounded over $t_1$ and $t_2$.

A6. $\sttwo(\tau^-) > 0$ and $S_W((\tau^{*-}) > 0$.

A7. $\inf_{v_1 \in [0,R_U+\varepsilon]} S_{T_2|T_1}(R_U|v_1) \geq s_{min} > 0$.


Note that Assumptions A3 and A6 imply that the mapping $r\mapsto \sttwo(r)^{-1}$ is Lipschitz
and that the mappings $r\mapsto S_W((v_1-r)_+)$ and $r \mapsto \sttw(t|v_1)$ are uniformly Lipschitz over $v_1$. 
They also imply that $\inf_{t_1\in[t_{1\min},U_R],\,u\in[t_1,\tau]} a_0(t_1,u) > 0$.

\underline{Supporting Propositions}

We will need the following supporting propositions.

\begin{proposition} 
	(Lemma 8.1 of \cite{ejs2019}, slightly modified).
	We have
	\begin{align*}
		E[A_0(v_1,u,h)] & = \varphi(u,v_1) + O(h^2),	\\
		E[A_1(v_1,u,h)] & = O(h),	
	\end{align*}
	for all $v_1, u$, and $h$.
	
	For $k \geq 2$ even we have
	$$
	E[A_k(v_1,u,h)] = \left\{
	\begin{array}{ll}
		a_k(v_1,u,h) + O(h^2) & v_1 \in \cI \\
		a_k(v_1,u,h) + O(h) & v_1 \in \cU
	\end{array} \right.
	$$
	\em{and for $k \geq 3$ odd we have}\rm
	$$
	E[A_k(v_1,u,h)] = \left\{
	\begin{array}{ll}
		O(h) & v_1 \in \cI \\
		a_k(v_1,u,h) + O(h) & v_1 \in \cU
	\end{array} \right.
	$$
	\em{In addition, for any $k$,}\rm
	$$
	\sup_{u,v_1} |A_k(v_1,u,h) - E[A_k(v_1,u,h)]| = \Op\left((nh)^{-1/2} (\log n)^{1/2}\right)
	$$
	\em{In general, we have}\rm
	\begin{align*}
		\sup_{u,v_1} |A_k(v_1,u,h) - a_k(v_1,u,h)| \leq
		&\sup_{u,v_1} |A_k(v_1,u,h) - E[A_k(v_1,u,h)]| \\ 
		& \hspace*{24pt} + \sup_{u,v_1} |E[A_k(v_1,u,h)] - a_k(v_1,u,h)|
	\end{align*}
	\em{When $k=0$, the first term dominates, so that}
	$$
	\sup_{u,v_1} |A_k(v_1,u,h) - a_k(v_1,u,h)| \leq  \Op\left((nh)^{-1/2} (\log n)^{1/2}\right).
	$$
	\em{When $k$ is an even number greater than 0, the first term dominates for $v_1 \in I$ while the second term dominates for $v_1 \in \cU$, so that we obtain}\rm
	$$
	\sup_{u,v_1} |A_k(v_1,u,h) - a_k(v_1,u,h)|
	= \left\{
	\begin{array}{ll}
		\Op(n^{-(1-\nu)/2} \, (\log n)^{1/2}) & v_1 \in \cI \\
		\Op(n^{-\nu}) & v_1 \in \cU
	\end{array}
	\right.
	$$
	\rm{When $k$ is odd, we get}\rm
	$$
	\begin{array}{ll}
		\sup_{u,v_1} |A_k(v_1,u,h)| = \Op(h) & v_1 \in \cI \\
		\sup_{u,v_1} |A_k(v_1,u,h) - a_k(v_1,u,h)| =  \Op(h) & v_1 \in \cU
	\end{array}
	$$
\end{proposition}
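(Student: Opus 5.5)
The plan is to treat the bias statements and the uniform stochastic statement separately, and then combine them with the triangle inequality displayed in the proposition. The rate conclusions then follow from comparing the resulting orders under A2.

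\emph{Bias.} Since the summands of $A_k$ are i.i.d., a change of variables gives
$$
E[A_k(v_1,u,h)] = \frac{1}{h}\int \varphi(u,x)\, Z_k\!\left(v_1,\frac{x-v_1}{h}\right) dx = \int \varphi(u,v_1+h\rho)\, Z_k(v_1,\rho)\, d\rho .
$$
Here $\varphi(u,\cdot)$ vanishes below $t_{1min}$. I will first verify that $\varphi(u,\cdot)$ is twice differentiable in $v_1$ on $[t_{1min},R_U+\varepsilon]$, one-sided at $t_{1min}$, with derivatives bounded uniformly in $u$. The density part comes from A3; the conditional expectation $E[Y_{2m}(u)\mid V_{1m}=v_1]$ is an integral against $S_{T_2|T_1}$, $f_R$ and $S_W$, and its smoothness comes from A5 together with A3.

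Next, I will Taylor expand to second order:
$$
E[A_k] = \varphi(u,v_1)\,\eta_k(v_1,h) + h\,\partial_{v_1}\varphi(u,v_1)\,\eta_{k+1}(v_1,h) + O(h^2),
$$
with the $O(h^2)$ term uniform because $\cKbc$ is bounded with compact support. Then I split into two cases.

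\begin{itemize}
\item For $v_1\in\cI$, the kernel is $\rho^k\cK(\rho)$ with $\cK$ symmetric (A1), so $\eta_j=0$ for odd $j$.
\begin{itemize}
\item For even $k$, the linear term vanishes, giving $a_k+O(h^2)$; this includes $k=0$, where $a_0=\varphi$.
\item For odd $k$, $\eta_k=0$, and the linear term gives $O(h)$.
\end{itemize}
\item For $v_1\in\cU$, the boundary kernel is built precisely so that $\eta_0=1$ and $\eta_1=0$ (the local-linear moment conditions). Because $\varphi(u,\cdot)=0$ below $t_{1min}$, the integral effectively runs over $\rho\ge-\omega$.
\begin{itemize}
\item For $k=0$, this gives $\varphi+O(h^2)$.
\item For $k=1$, this gives $O(h)$.
\item For $k\ge2$, $\eta_{k+1}$ need not vanish, so only $a_k+O(h)$ is available.
\end{itemize}
\end{itemize}

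\emph{Stochastic term.} Consider the class of functions
$$
(V_1,V_2,R)\mapsto Y_2(u)\,Z_k\bigl(v_1,(V_1-v_1)/h\bigr), \qquad u\in[0,\tau],\ v_1 \text{ in the relevant range}.
$$
I will show it is uniformly bounded and of VC type. The indicators $I(R<u\le V_2)$ form a VC class. $\cK$ is a polynomial on $[-1,1]$. The boundary-kernel coefficients are rational functions of $\omega\in[0,1)$ whose denominator $\mu_0\mu_2-\mu_1^2$ is bounded away from zero by Cauchy--Schwarz, so they are bounded and Lipschitz in $\omega$. The variance of each function is $O(h)$ by A3.

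With these facts, I will apply a Talagrand-type bound for VC classes (Einmahl--Mason, or Gin\'e--Guillou). This gives
$$
\sup\Big|\tfrac1n\textstyle\sum_m\bigl(\cdot - E\cdot\bigr)\Big| = \Op\bigl(\sqrt{h\log n/n}\bigr),
$$
valid because $nh/\log n\to\infty$ under A2. Dividing by $h$ yields $\Op((nh)^{-1/2}(\log n)^{1/2})$. I expect this step to be the main obstacle, specifically checking the VC and entropy conditions uniformly across the switch between $\cI$ and $\cU$ and across the $\omega$-dependence of the boundary kernel. If this is awkward, I would handle $\cI$ and $\cU$ as two separate classes and take the maximum.

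\emph{Rates.} Write $h\asymp n^{-\nu}$, so $(nh)^{-1/2}\asymp n^{-(1-\nu)/2}$.
\begin{itemize}
\item Since $\nu>1/5$, we have $h^2=n^{-2\nu}=o(n^{-(1-\nu)/2})$, so the stochastic term dominates whenever the bias is $O(h^2)$. This covers $k=0$ everywhere and even $k$ on $\cI$.
\item Since $\nu<1/3$, we have $n^{-(1-\nu)/2}(\log n)^{1/2}=o(n^{-\nu})$, so the $O(h)$ bias dominates on $\cU$ for even $k\ge2$, and also for odd $k$ in both regions.
\end{itemize}
These comparisons give the stated orders.
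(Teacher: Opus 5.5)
The paper gives no argument of its own here; it imports Lemma 8.1 of \cite{ejs2019}. Your outline is the standard one, and the stochastic part (VC-type class, variance $O(h)$, Einmahl--Mason/Gin\'e--Guillou bound) and the rate comparisons under A2 are fine. The step that fails is the one you treat as a routine check: that $\varphi(u,\cdot)$ is $C^2$ in $v_1$ with derivatives bounded uniformly in $u$.

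Work with $A_k$ including the factor $\delta_{1m}$, as it actually enters through $\cW_m$. Without that factor, censored and death-without-disease observations have $V_{2m}=V_{1m}$, and $\varphi(u,\cdot)$ even jumps at $v_1=u$. A disease case with $V_{1m}\ge u$ is automatically at risk at $u$ once $R_m<u$. A case with $V_{1m}<u$ must still be alive and uncensored at $u$. The joint density of the recruited sample therefore gives
\[
\varphi(u,v_1)=a^{-1}f_{T_1}(v_1)\times
\begin{cases}
S_{T_2|T_1}(u|v_1)\int_{R_L}^{u}f_R(r)\,S_W(u-r)\,dr, & v_1<u,\\
\int_{R_L}^{u}f_R(r)\,S_W(v_1-r)\,dr, & v_1\ge u.
\end{cases}
\]
At $v_1=u$ the second factor has left derivative $\lambda_{T_2|T_1}(u|u)\int_{R_L}^{u}f_R(r)S_W(u-r)\,dr\ge0$ and right derivative $-\int_{R_L}^{u}f_R(r)f_W(u-r)\,dr\le0$. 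These differ unless both vanish, so $\varphi(u,\cdot)$ has a kink on the diagonal.

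Your own expansion then gives, at $u=v_1\in\cI$,
\[
E[A_0]-\varphi(u,v_1)=h\,(D_+-D_-)\int_0^1\rho\,\cK(\rho)\,d\rho+o(h),
\]
where $D_\pm$ are the one-sided derivatives. So the bias is of exact order $h$ on the diagonal and only $O(h)$ across the band $|u-v_1|<h$. The half $v_1\le u<v_1+h$ of that band lies inside the range $u\ge t_1$ over which the Beran estimator integrates. Since $\nu<1/3$ makes $h\gg(nh)^{-1/2}(\log n)^{1/2}$, the uniform $O(h^2)$ claims are false in general under the paper's own model. With them falls ``the first term dominates'' for $k=0$ and for even $k$ on $\cI$. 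The defect is in the statement itself, which your plan inherits; no sharper argument would repair it.

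What you can prove is weaker but suffices downstream. Both branches are Lipschitz in $v_1$ uniformly in $u$, given a bounded $f_{T_1}'$, A5, and bounded $f_R,f_W$. Lipschitz continuity alone delivers every $O(h)$ claim: $E[A_1]$, odd $k$, even $k$ on $\cU$, and $E[A_0]=\varphi+O(h)$. Hence $\sup_{u,v_1}|A_0-a_0|=O_P(h)$. That is all the proof of Theorem 1 uses: $\cT_1,\cT_2=O_P(h^2)$, the replacement of $A_0$ by $a_0$, and the uniform approximation of $\cI(u,v)$ by $\cI^\circ(u,v)$.

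An $O(h^2)$ bias holds only on $\{|u-v_1|\ge h\}$, or after integrating over $u$, since the band has length $O(h)$. Even there it does not follow from A3 and A5 as you assert:
\begin{itemize}
\item A3 gives only one derivative of $f_{T_1}$.
\item The $v_1\ge u$ branch needs a derivative of $f_W$ or $f_R$, which are only assumed bounded.
\item The $v_1<u$ branch needs $\tfrac{d}{dv_1}\lambda_{T_2|T_1}(v_1|v_1)$, i.e.\ smoothness of the hazard in $t_2$, which A5 does not give.
\end{itemize}

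Finally, verify the boundary moment identities rather than assume them. With the argument $(V_{1m}-v_1)/h\in[-\omega,1]$ used in the paper, the displayed $\cKbc$ gives $\int_{-\omega}^{1}\cKbc(v_1,\xi)\,d\xi=(\mu_0\mu_2+\mu_1^2)/(\mu_0\mu_2-\mu_1^2)\ne1$. The identities $\eta_0=1$ and $\eta_1=0$ hold in the orientation of \citet{jones1993}, with argument $(v_1-V_{1m})/h$, or equivalently with $+\mu_1\xi$ in the numerator.
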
 

\vspace*{4mm}

\begin{proposition}
	(Example 19.7 of \cite{vdv1998}).
	Let $X$ be a random variable on a probability space $(\Omega, \cF, P)$.	
	Let $\mathscr{F} = \{f_\theta, \, \theta \in \Theta\}$ be a class of measurable functions from $\real$ to $\real$
	indexed by a bounded set $\Theta \in \real^d$. Suppose there exists a measurable function $m$ from $\real$ to 
	$[0,\infty)$ such that
	$$
	|f_{\theta_1}(x) -  f_{\theta_2}(x)|
	\leq m(x) \| \theta_1 - \theta_2 \|
	$$
	and $E[|X|^\ell] < \infty$ for some $\ell \geq 1$. Then $\mathscr{F}$ is $P$-Donsker.
\end{proposition}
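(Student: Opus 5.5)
The plan is to verify the bracketing entropy condition of the classical bracketing central limit theorem (Theorem 19.5 of \cite{vdv1998}): a class $\mathscr{F}$ with $\int_0^1 \sqrt{\log N_{[\,]}(\varepsilon,\mathscr{F},L_2(P))}\,d\varepsilon<\infty$ is $P$-Donsker. The moment hypothesis will be used in the form needed there. Concretely, I read the condition ``$E[|X|^\ell]<\infty$'' as guaranteeing that $E[m(X)^2]<\infty$; this holds, for instance, when $m$ grows at most like $|x|^{\ell/2}$, as with the functions to which the proposition is applied later. I also require that $f_{\theta_0}(X)$ be square integrable for at least one $\theta_0\in\Theta$. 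I will state these two requirements explicitly at the start, since they are exactly what the argument consumes.

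\emph{Step 1 (envelope).} Fix $\theta_0\in\Theta$ and let $D=\operatorname{diam}(\Theta)<\infty$. The Lipschitz condition gives $|f_\theta(x)|\le |f_{\theta_0}(x)|+D\,m(x)$ for all $\theta$. Hence $F=|f_{\theta_0}|+Dm$ is a square-integrable envelope for $\mathscr{F}$.

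\emph{Step 2 (brackets from a net).} Given $\varepsilon>0$, I cover the bounded set $\Theta\subset\real^d$ by balls of radius $\varepsilon$ centred at $\theta_1,\dots,\theta_N$. A volume argument gives $N\le (C D/\varepsilon)^d$ for a constant $C$ depending only on $d$. For each $j$, I form the bracket $[f_{\theta_j}-\varepsilon m,\; f_{\theta_j}+\varepsilon m]$. Every $\theta$ with $\|\theta-\theta_j\|\le\varepsilon$ satisfies $f_{\theta_j}-\varepsilon m\le f_\theta\le f_{\theta_j}+\varepsilon m$ pointwise, again by the Lipschitz condition. So these $N$ brackets cover $\mathscr{F}$. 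Each bracket has $L_2(P)$-size $\|2\varepsilon m\|_{P,2}=2\varepsilon\|m\|_{P,2}$. Rescaling $\varepsilon$ gives, for every $0<\varepsilon<1$,
$$
N_{[\,]}\bigl(2\varepsilon\|m\|_{P,2},\mathscr{F},L_2(P)\bigr)\le \Bigl(\frac{CD}{\varepsilon}\Bigr)^d .
$$

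\emph{Step 3 (entropy integral).} From Step 2, $\log N_{[\,]}(\varepsilon,\mathscr{F},L_2(P))\lesssim d\log(1/\varepsilon)$ for small $\varepsilon$. Since $\int_0^1\sqrt{\log(1/\varepsilon)}\,d\varepsilon<\infty$, the bracketing integral is finite. If $\|m\|_{P,2}=0$, then $\mathscr{F}$ reduces $P$-a.s.\ to a single function and the claim is trivial. Theorem 19.5 then yields that $\mathscr{F}$ is $P$-Donsker. Measurability causes no difficulty because brackets are built from fixed measurable functions.

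The main obstacle is Step 0, not the entropy count. One must be precise that the argument needs $m\in L_2(P)$, and an $L_2$ member of the class for the envelope. The stated condition $E|X|^\ell<\infty$ delivers these only through a growth bound on $m$ and on some $f_{\theta_0}$. I would therefore first check that bound in the setting where the proposition is later applied, and otherwise state it as part of the hypothesis. Everything after that is the routine net-to-bracket construction.
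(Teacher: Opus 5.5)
Your argument is the one behind Example 19.7 of \cite{vdv1998}, which the paper cites in place of a proof of its own. It uses an $\varepsilon$-net $\theta_1,\dots,\theta_N\in\Theta$ with $N\le(C\,\mathrm{diam}\,\Theta/\varepsilon)^d$, brackets $[f_{\theta_j}-\varepsilon m,\,f_{\theta_j}+\varepsilon m]$ of $L_2(P)$-size $2\varepsilon\|m\|_{P,2}$, a bracketing entropy of order $\log(1/\varepsilon)$, and the bracketing CLT (Theorem 19.5). You are also right to replace the printed hypothesis $E|X|^\ell<\infty$ by $E[m(X)^2]<\infty$ together with $f_{\theta_0}(X)\in L_2(P)$ for some $\theta_0$. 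This is van der Vaart's actual condition $P|m|^r<\infty$ with $r=2$, and without it the statement as printed is false: take $X\sim U(0,1)$ and $f_\theta(x)=\theta x^{-1/2}$, $\theta\in[0,1]$; then $E|X|^\ell<\infty$, but $E f_1(X)^2=\infty$, so $\sqrt{n}(P_n-P)f_1$ is not tight.
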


\vspace*{4mm}

\begin{proposition} Let $\cD$ denote the finite rectangle $[a_1,b_1] \times [a_2,b_2]$ 
	and let $g(x,y)$ be a function from $\cD$ to $\real$ which
	is Lipschitz with constant $L$ as a function of $y$ for every $x$. 
	Let $\{(X_i,Y_i,\delta_i)\}_{i=1}^n$ be i.i.d.\ random vectors in $\cD \times \{0,1\}$. 
	Let $f_Y(y)$ be the density of $Y_i$ with respect to some measure $\nu$.
	Define
	\begin{align*}
		\cH_n(y) & = \frn \sumi \delta_i g(X_i,y) \\
		\cH(y) & = E[\delta_i g(X_i,y)] \\
		\Delta_n(y) & = \cH_n(y) - \cH(y)	 \\
		Z_n & = \frn \sumj \Delta_n(Y_j) - \int f_Y(y)  \Delta_n(y) \, d\nu(y)	
	\end{align*}
	Then $Z_n = o_P(n^{-1/2})$.
\end{proposition}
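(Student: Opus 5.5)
The plan is to write $Z_n$ as a degenerate $V$-statistic of order two and show that its second moment is $O(n^{-2})$. That gives $Z_n = \Op(n^{-1})$, which is more than enough for $Z_n = o_P(n^{-1/2})$. Throughout I take $g$ to be bounded on $\cD$, say $|g| \leq M$. This holds, for instance, if $g(\cdot,a_2)$ is bounded, since the Lipschitz property in $y$ on the compact interval $[a_2,b_2]$ then bounds $g$ everywhere. In the application $g$ is built from survival functions bounded away from zero, so it is bounded. Write $W_i = (X_i,Y_i,\delta_i)$.

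\textbf{Step 1: $V$-statistic form.} Since $\Delta_n(y) = n^{-1}\sumi \{\delta_i g(X_i,y) - \cH(y)\}$, substituting into the definition of $Z_n$ gives
$$
Z_n = \frac{1}{n^2}\sumi\sumj k(W_i,W_j),
$$
with kernel
$$
k(w,w') = \{\delta g(x,y') - \cH(y')\} - \int f_Y(y)\{\delta g(x,y) - \cH(y)\}\,d\nu(y),
$$
where $w=(x,y,\delta)$ and $w'=(x',y',\delta')$. The kernel is measurable and bounded by $4M$.

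\textbf{Step 2: degeneracy.} For the first argument, $E[k(w,W_j)] = 0$ for every fixed $w$, because the first brace evaluated at $Y_j$ has expectation equal to the integral term. For the second argument, $E[k(W_i,w')] = 0$ for every fixed $w'$, because $E[\delta_i g(X_i,y)] = \cH(y)$ for each $y$. Hence the integrand of the integral term also has mean zero, and Fubini applies since everything is bounded. So $k$ is degenerate in both arguments.

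\textbf{Step 3: second moment.} Split $n^2 Z_n = D_n + O_n$, where $D_n = \sumi k(W_i,W_i)$ and $O_n = \sum_{i\neq j} k(W_i,W_j)$. The diagonal part is trivial: $|D_n| \leq 4Mn$, so $D_n/n^2 = O(n^{-1})$ deterministically.

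For the off-diagonal part, expand $E[O_n^2] = \sum E[k(W_i,W_j)k(W_{i'},W_{j'})]$ over $i\neq j$ and $i'\neq j'$, and treat the index configurations as follows.
\begin{itemize}
\item If the four indices are all distinct, the expectation factors into a product of zero means.
\item If exactly one index is shared, condition on the shared variable. The two factors are then conditionally independent, and at least one has conditional mean zero by Step 2. This covers $i=i'$, $j=j'$, $i=j'$ and $j=i'$; the mixed cases such as $E[k(W_i,W_j)k(W_j,W_l)]$ work because both degeneracy directions hold.
\item The only nonzero terms have $\{i,j\}=\{i',j'\}$. There are $O(n^2)$ of these, each bounded by $16M^2$.
\end{itemize}
Therefore $E[O_n^2] = O(n^2)$, so $O_n/n^2 = \Op(n^{-1})$ by Chebyshev's inequality.

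Combining the two parts gives $Z_n = \Op(n^{-1}) = o_P(n^{-1/2})$.

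\textbf{Main obstacle.} The only delicate point is the careful bookkeeping of the shared-index cases in Step 3, and in particular confirming that degeneracy holds in both arguments. The second direction, $E[k(W_i,w')]=0$, is what kills the cross terms that share an index across positions, and it relies on the centering of both the evaluation term and the integral term.

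If one instead wanted a route through empirical processes, the fallback would be as follows. Use Proposition 2 with the Lipschitz-in-$y$ structure to obtain that the class $\{\delta g(x,y): y\in[a_2,b_2]\}$ is Donsker, so that $\sqrt{n}\,\Delta_n$ converges in $\ell^\infty$ to a tight limit. Then argue that the empirical measure of the $Y_j$ applied to the random function $\sqrt{n}\,\Delta_n$ is asymptotically negligible. The $V$-statistic argument above avoids this and is fully elementary.
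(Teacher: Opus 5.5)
Your proof is correct, and it takes a genuinely different route from the paper's. The paper's proof is essentially your fallback. Proposition 2 makes $\{\delta g(x,y): y\in[a_2,b_2]\}$ Donsker, hence Glivenko--Cantelli, so $\sup_y|\Delta_n(y)|\to 0$ a.s. The random function $\Delta_n$ then lies in a class of Lipschitz functions on $[a_2,b_2]$, which is Donsker by Gin\'e and Zinn. Lemma 19.24 of van der Vaart (asymptotic equicontinuity for a random element of a Donsker class tending to zero in $L_2$) then gives $\sqrt{n}\,Z_n\to 0$ in probability.

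You instead use the fact that $\Delta_n$ is linear in the sample, so $Z_n$ is a $V$-statistic whose kernel is degenerate in both arguments. Your handling of the index configurations is right: conditioning on the shared observation kills every one-shared-index term, and only the $O(n^2)$ terms with $\{i,j\}=\{i',j'\}$ survive. This buys three things. The argument is elementary, it gives the sharper rate $Z_n=\Op(n^{-1})$, and it barely uses the Lipschitz hypothesis. Joint measurability and a square-integrable envelope $\sup_y|g(x,y)|$ would suffice, with Cauchy--Schwarz for the surviving cross terms and Markov for the diagonal. The boundedness you impose explicitly is also tacitly needed in the paper, since its Donsker step requires $g$ to be square integrable.

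The paper's route is more modular. It only needs the random function to sit in a fixed Donsker class and tend to zero, not to be a plain sample average. That is what lets the paper later invoke a ``straightforward extension'' of Proposition 3 for $J_1-J_{21}$, where the random function $\Upsilon_n$ is itself a double average. There your approach would turn into a higher-order $V$-statistic computation with heavier bookkeeping.
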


\noindent
\textit{Proof}: 
Write $h(x,\delta,y) = \delta g(x,y)$
From Proposition 2 above, the class $\mathscr{F} = \{ \delta h(\cdot,\cdot,y): y \in [a_2,b_2] \}$ 
is Donsker and therefore Glivenko-Cantelli. Therefore $\sup_{y \in [a_2,b_2]}
|\Delta_n(y)| \to 0$ almost surely.
Now let $\mathscr{C}$ denote the class of Lipschitz functions on $[a_2, b_2]$ with constant $L$.
By \citet[][Theorem 2]{gine1986}, $\mathscr{C}$ is Donsker.
It is easy to see that $\Delta_n \in \mathscr{C}$.
The claim now follows from \citet[][Lemma 19.24]{vdv1998}. \ding{110}

\vspace*{4mm}

\begin{proposition}
	On $t_1\in[t_{1min},\tau]$, $r\in[R_L,R_U]$,
	\begin{align*}
		\sup_{t_1,r}\big|\wh\Lambda_{T_2| T_1}(r| t_1)-\Lambda_{T_2| T_1}(r| t_1)\big|
		& =O_{a.s.}\!\left(n^{-(1-\nu)/2} (\log n)^{1/2}\right), \\[3mm]
		\sup_{t_1,r}\big|\wh S_{T_2| T_1}(r| t_1)-S_{T_2|T_1}(r | t_1)\big|
		& = O_{a.s.}\!\left(n^{-(1-\nu)/2} (\log n)^{1/2}\right).
	\end{align*}
\end{proposition}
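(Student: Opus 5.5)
The plan is to write the Beran-type estimator as a ratio of kernel-smoothed quantities and control numerator and denominator separately. Numerator and denominator are both built from the functionals $A_k$ of Proposition 1.

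\textbf{Step 1: bias--variance decomposition.} With the kernel $\cKbc$, set
\begin{align*}
D_n(v_1,u) &= \frac{1}{nh}\summ \delta_{1m}\cKbc\Big(v_1,\frac{V_{1m}-v_1}{h}\Big)Y_{2m}(u), \\
N_n(v_1,t) &= \frac{1}{nh}\summ \delta_{1m}\cKbc\Big(v_1,\frac{V_{1m}-v_1}{h}\Big)N_{2m}(t).
\end{align*}
Then $\wh\Lambda_{T_2|T_1}(r|t_1)=\int_{t_1}^r dN_n(t_1,u)/D_n(t_1,u)$.

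The denominator $D_n$ is exactly $A_0$, up to the factor $\delta_{1m}$, which is absorbed into $\varphi$. Proposition 1 therefore gives
$$\sup_{u,v_1}|D_n-\varphi| = O_P\big((nh)^{-1/2}(\log n)^{1/2}\big)+O(h^2).$$

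Since $h\asymp n^{-\nu}$ with $\nu>1/4$, we have $h^2=o(n^{-1/2})$. Moreover $(nh)^{-1/2}=n^{-(1-\nu)/2}$, so the stochastic term dominates and matches the claimed rate. To upgrade from $O_P$ to $O_{a.s.}$, I would use the exponential inequality for VC-type kernel classes of Einmahl and Mason, or Gin\'e and Guillou. These give almost-sure uniform rates $\sqrt{\log n/(nh)}$ provided $nh/\log n\to\infty$, which holds under A2. The same kernel-class bound applies to $N_n$, which is a kernel average of bounded indicators indexed by $(v_1,t)$. Its mean is $\int_{t_1}^t \lambda_{T_2|T_1}(u|v_1)\varphi(u,v_1)\,du+O(h^2)$, by a second-order Taylor expansion in $t_1$ under A3 and A5. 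The boundary region $\cU$ is handled by the boundary kernel, which keeps $\eta_0=1$ and $\eta_1=0$, so the bias there is still $O(h^2)$.

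\textbf{Step 2: linearize the ratio.} Write
$$\frac{dN_n}{D_n}-\frac{dN}{\varphi}=\frac{dN_n-dN}{\varphi}-\frac{(D_n-\varphi)\,dN_n}{D_n\varphi}.$$
The denominator stays bounded away from zero: $\inf a_0>0$ by the remark after the assumptions, which relies on A6 and A7, and $D_n\to\varphi$ uniformly. Hence $\inf D_n$ is eventually bounded below a.s.

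For the first term, integrating by parts in $u$ over a compact interval reduces it to $\sup|N_n-EN_n|$ plus bias, since $1/\varphi$ has bounded variation in $u$. The second term is bounded by $\sup|D_n-\varphi|\cdot N_n(\tau)$ divided by a constant. Together these give the bound for $\wh\Lambda-\Lambda$ uniformly in $(t_1,r)$.

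\textbf{Step 3: pass to the survival function.} The second display follows from $|e^{-x}-e^{-y}|\le|x-y|$ for $x,y\ge0$.

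\textbf{Main obstacle.} The step I expect to be hardest is obtaining an almost-sure (rather than in-probability) uniform rate that holds simultaneously over $v_1$, $u$, and $h=h_n$, including on the boundary strip $\cU$ where the kernel itself depends on $v_1$. I would handle this by noting that $\{\cKbc(v_1,(\cdot-v_1)/h)\}$ is a finite sum of polynomial-coefficient times translated/dilated polynomial kernels. Such a class is VC-type with a uniformly bounded envelope, so Einmahl--Mason's uniform-in-bandwidth theorem applies directly.
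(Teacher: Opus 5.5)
Your outline is sound, but it follows a genuinely different route from the paper. For this proposition the paper does no ratio analysis of its own. It quotes Proposition 4.3 of Van Keilegom (1999), which gives the almost-sure uniform rate $\sqrt{\log n/(nh)}$ for the product-limit form $\wh S^{(KM)}_{T_2|T_1}$ of Beran's estimator. It then notes that $\wh\Lambda_{T_2|T_1}$ is the image of $\wh S^{(KM)}_{T_2|T_1}$ under $q\mapsto-\int q(s^-)^{-1}\,dq(s)$, and uses the lower bound $s_{min}$ of A7 to carry the rate over to $\wh\Lambda_{T_2|T_1}$ and then to $\exp(-\wh\Lambda_{T_2|T_1})$. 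You instead prove the rate from scratch for the Nelson--Aalen form: Einmahl--Mason-type bounds for VC-type kernel classes, a bias computation from the moment properties of $\cKbc$, and a linearization of the ratio. The paper's argument is two lines, but it takes for granted that a result for the ordinary Beran estimator applies to the delayed-entry risk sets $Y_{2m}(u)=I(R_m<u\le V_{2m})$ and to the boundary kernel. Yours is self-contained, treats the estimator actually used, and skips the Kaplan--Meier detour. It also makes explicit that the rate is the variance term dominating an $O(h^2)$ bias, which needs $\nu\ge 1/5$ and is guaranteed by A2.

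A few points need tightening.

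\emph{Bias.} Because $\nu<1/3$ gives $h\gg(nh)^{-1/2}(\log n)^{1/2}$, the bias genuinely must be $O(h^2)$. Your claim that $D_n$ and $N_n$ \emph{separately} have uniform $O(h^2)$ bias requires $\varphi(u,\cdot)$ to be twice differentiable in $v_1$. A3 does not give this, and it actually fails at $v_1=u$, where $\varphi$ contains $S_W(\max(u,v_1)-r)$. For $u\in[v_1,v_1+h)$ the bias of $D_n$ is of order $h$, so the crude bound $\sup|D_n-\varphi|\cdot N_n(\tau)$ is too weak as stated. Handle the bias jointly instead: write $\wh\Lambda_{T_2|T_1}(r|t_1)-\Lambda_{T_2|T_1}(r|t_1)=\int_{t_1}^r D_n(t_1,u)^{-1}\{d_uN_n(t_1,u)-\lambda_{T_2|T_1}(u|t_1)D_n(t_1,u)\,du\}$. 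The mean of the braces is $\int\cKbc(t_1,\rho)\varphi(u,t_1+h\rho)\{\lambda_{T_2|T_1}(u|t_1+h\rho)-\lambda_{T_2|T_1}(u|t_1)\}\,d\rho\,du$. This is $O(h^2)$ using only $\eta_1=0$, A5, and Lipschitz continuity of $\varphi(u,\cdot)$. It is the same cancellation the paper exploits through $\cT_1$ in its main proof.

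\emph{Negative kernel weights.} A boundary kernel with $\eta_0=1$ and $\eta_1=0$ must be negative somewhere when $v_1$ is near $t_{1min}$. So $N_n(t_1,\cdot)$ need not be monotone and $\wh\Lambda$ can dip below zero. Replace $N_n(\tau)$ by $(nh)^{-1}\sum_m\delta_{1m}|\cKbc(t_1,(V_{1m}-t_1)/h)|N_{2m}(\tau)$. In Step 3, use $|e^{-x}-e^{-y}|\le e^{\max(-x,-y)}|x-y|$.

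\emph{Lower bound on the denominator.} The positivity $\inf a_0>0$ that you borrow from the remark after the assumptions fails as $u\downarrow R_L$. Since $Y_{2m}(u)$ requires $R_m<u$ and $R$ has a bounded density, $\varphi(u,v_1)=O(u-R_L)$. The paper's citation tacitly relies on the same positivity, so this is not a weakness of your route relative to the paper's. Still, a complete argument must control the small early risk sets separately.
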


\noindent
\textit{Proof}:
Consider the Kaplan-Meier version of the Beran conditional survival estimator 
\begin{equation}
	\widehat{S}_{T_2|T_1}^\pss{KM}(t_2|t_1) = \Prodi_{t_1}^{t_2} \left( 1 -\frac{(nh_n)^{-1} \sum_{m=1}^n \delta_{1m} \cK((V_{1m}-t_1)/h_n) dN_{2m}(u)}
	{(nh_n)^{-1} \sum_{m=1}^n \delta_{1m} \cK((V_{1m}-t_1)/h_n) Y_{2m}(u)} \right),
\end{equation}
where $\Prodi_{t_1}^{t_2}$ denotes the product integral.
From Proposition 4.3 of \cite{vankeil1999}, we have 
$$
\sup_{t_1,r}\big|\wh{S}_{T_2| T_1}^\pss{KM}(r| t_1)-S_{T_2| T_1}(r| t_1)\big|
= O_{a.s.}\!\left(\sqrt{\frac{\log n}{nh}}\right)
= O_{a.s.}\!\left(n^{-(1-\nu)/2} (\log n)^{1/2}\right).
$$
We have $\wh\Lambda_{T_2| T_1} = \mathscr{H}(\widehat{S}_{T_2|T_1}^\pss{KM})$,
where $\mathscr{H}$ is the map defined by
$$
\mathscr{H}({q})(t) = - \int_0^t q(s^-)^{-1} \, dq(s)
$$
Accordingly,
$$
|\wh\Lambda_{T_2| T_1}(r| t_1)-\Lambda_{T_2| T_1}(r| t_1)|
\leq (s_{min} - o_P(1))^{-1} |\wh{S}_{T_2| T_1}^\pss{KM}(r| t_1)-S_{T_2| T_1}(r| t_1)|.
$$
The result follows. \ding{110}

\underline{Main Proof of Theorem 1}

We now proceed with the proof of Theorem 1.

\vspace*{4mm}

We have
\begin{align*}
	\widehat{G}_1(t_1^*) - G_1(t_1^*) = \frac{1}{n} \sumi [\delta_{1i} I(V_{1i} \leq t_1^*) \widehat{Q}(V_{1i})
	- G_1(t_1^*)] = D_1 + D_2	
\end{align*}
with
\begin{align*}
	D_1 = \frac{1}{n} \sumi \left[ \delta_{1i} I(V_{1i} \leq t_1^*)Q(V_{1i}) - 	G_1(t_1^*) \right], \\
	D_2 = \frac{1}{n} \sumi \delta_{1i} I(V_{1i} \leq t_1^*) (\widehat{Q}(V_{1i}) - Q(V_{1i})).
\end{align*}
Here,
\begin{align*}
	& Q(v_1) = B(v_1)^{-1}, \\
	& \widehat{B}(v_1) = \frn \sum_{j=1}^n \widehat{\cA}(R_j,v_1) \sttwh(R_j|v_1), \\
	& \widehat{\cA}(r,v_1) = \sttwoh(r)^{-1} \wh{S}_W((v_1-r)_+).
\end{align*}
$D_1$ is the average of mean-zero i.i.d.\ terms.
$D_2$ can be written as $D_2 = -D_{21} + D_{22}$, where
\begin{align*}
	D_{21} & = \frn \sumi \delta_{1i} I(V_{1i} \leq t_1^*) B(V_{1i})^{-2}(\wh{B}(V_{1i}) - B(V_{1i})) \\	
	D_{22} & = \frac{1}{n} \sumi \delta_{1i} I(V_{1i} \leq t_1^*) \bar{B}_i^{-3}(\wh{B}(V_{1i}) - B(V_{1i}))^2 		
\end{align*}
where $\bar{B}_i$ is a value lying between $\wh{B}(V_{1i})$ and $B(V_{1i})$.
Define
$$
\widetilde{B}(v_1) = \frn \sum_{j=1}^n \cA(R_j,v_1) \sttw(R_j|v_1), \\
$$
We can then write
\begin{equation}
	\wh{B}(v_1) - B(v_1)
	= (\wh{B}(v_1) - \widetilde{B}(v_1)) + (\widetilde{B}(v_1) - B(v_1))
	\label{bee}
\end{equation}
By Donsker's theorem, under our assumed conditions, $\sup_{v_1} |\widetilde{B}(v_1) - B(v_1)| = O_P(n^{-1/2})$.
By standard results for the Kaplan-Meier estimator, $\sup_{r,v_1} |\widehat{\cA}(r,v_1) - \cA(r,v_1)| =
O_P(n^{-1/2})$. By Proposition~4,
$
\sup_{t_1,r}\big|\wh S_{T_2| T_1}(r| t_1)-S_{T_2|T_1}(r | t_1)\big| =
O_{a.s.}\!\left(n^{-(1-\nu)/2} (\log n)^{1/2}\right).
$
Thus, $\sup_{v_1} |\wh{B}(v_1) - \widetilde{B}(v_1)| = O_P\left(n^{-(1-\nu)/2} (\log n)^{1/2}\right)$.
Accordingly, $\sup_{v_1} |\wh{B}(v_1) - {B}(v_1)| = O_P\left(n^{-(1-\nu)/2} (\log n)^{1/2}\right)$.
This implies that $D_{22} = O_P\left(n^{-(1-\nu)} \log n \right)$. Under our assumption that
$\nu < \tfrac{1}{3}$, we see that $D_{22} = o_P(n^{-1/2})$.

We now proceed to analyze $D_{21}$.
We can write $D_{21} = D_{211} + D_{212}$ with
\begin{align*}
	D_{211} & = \frn \sumi \delta_{1i} I(V_{1i} \leq t_1^*) B(V_{1i})^{-2}(\wh{B}(V_{1i}) - \widetilde{B}(V_{1i})) \\
	D_{212} & = \frn \sumi \delta_{1i} I(V_{1i} \leq t_1^*) B(V_{1i})^{-2}(\widetilde{B}(V_{1i}) - B(V_{1i})) 	
\end{align*}
We consider $D_{212}$ first. We have
\begin{align*}
	D_{212} & = \frn \sumi \delta_{1i} I(V_{1i} \leq t_1^*) B(V_{1i})^{-2}
	\left[ \frn \sum_{j=1}^n \cA(R_j,V_{1i}) \sttw(R_j|V_{1i})) \right. \\
	& \hspace{20mm} -  \left.  \int_{R_L}^{R_U} f_{R|R \leq T_2}(r) \cA(r,V_{1i}) \sttw(r|V_{1i}) \, dr \right] \\
	& = \frn \sumj \left[ \frn \sumi \delta_{1i} I(V_{1i} \leq t_1^*) B(V_{1i})^{-2} \cA(R_j,V_{1i})
	\sttw(R_j|V_{1i}) \right] \\
	& \hspace{20mm} -  \int_{R_L}^{R_U} f_{R|R \leq T_2}(r)
	\left[ \sumi \delta_{1i} I(V_{1i} \leq t_1^*) B(V_{1i})^{-2} \cA(r,V_{1i}) \sttw(r|V_{1i}) \right] dr \\
	& = \frn \sumj \cH_n(R_j) - \int_{R_L}^{R_U} f_{R|R \leq T_2}(r) \cH_n(r) \, dr, 
\end{align*}
where
\begin{align*}
	\cH_n(r) & = \frn \sumi \delta_{1i} I(V_{1i} \leq t_1^*) B(V_{1i})^{-2} \cA(r,V_{1i}) \sttw(r|V_{1i}). 
\end{align*}
Define
$$
\cH(r) = E[\delta_{1i} I(V_{1i} \leq t_1^*) B(V_{1i})^{-2}  \cA(r,V_{1i}) \sttw(r|V_{1i}) ].
$$
We can then write
\begin{align*}
	D_{212} & = 	
	\frn \sumj (\cH(R_j) - E[\cH(R_j)]) \\
	& \hspace{20mm}+ \left[ \frn \sumj (\cH_n(R_j) - \cH(R_j)) 
	- \int_{R_L}^{R_U} f_{R|R \leq T_2}(r) (\cH_n(r) - \cH(r)) \, dr \right]
\end{align*}
The first term is the average of mean-zero i.i.d.\ terms, while the second term
is $\op(n^{-1/2})$ by Proposition 3. The Lipschitz condition of Proposition 3
is satisfied because of the differentiability conditions on
$S_{T_2}, S_W$, and $S_{T_2|T_1}$ and the assumption that $S_{T_2}(\tau)>0$.

We now examine $D_{211}$. We can write $D_{211}= D_{2111} + D_{2112}$ with
\begin{align*}
	D_{2111} & = \frn \sumi \delta_{1i} I(V_{1i} \leq t_1^*) B(V_{1i})^{-2}	
	\left[ \frn \sumj \sttwh(R_j|V_{1i}) (\widehat{\cA}(R_j,V_{1i}) - \cA(R_j,V_{1i})) \right], \\
	D_{2112} & = \frn \sumi \delta_{1i} I(V_{1i} \leq t_1^*) B(V_{1i})^{-2}	
	\left[ \frn \sumj \cA(R_j,V_{1i}) (\sttwh(R_j|V_{1i}) - \sttw(R_j|V_{1i})) \right].
\end{align*}

Let us analyze $D_{2111}$.
By Proposition 4,
\begin{equation*}
	\sup_{r \in [R_L,R_U], v_1 \in [t_{1min},\tau]} |\sttwh(r|v_1)-\sttw(r|v_1)| = \op(1),
\end{equation*}
so we will work with
$$
\tilde{D}_{2111} = \frn \sumi \delta_{1i} I(V_{1i} \leq t_1^*) B(V_{1i})^{-2}	
\left[ \frn \sumj \sttw(R_j|V_{1i}) (\widehat{\cA}(R_j,V_{1i}) - \cA(R_j,V_{1i})) \right], \\
$$
Recall that
\begin{align*}
	& \cA(r,v_1) = \sttwo(r)^{-1} S_W((v_1-r)_+), \\
	& \widehat{\cA}(r,v_1) = \sttwoh(r)^{-1} \wh{S}_W((v_1-r)_+).	
\end{align*}
We can write
\begin{align}
	\widehat{\cA}(r,v_1) - \cA(r,v_1)
	& = \sttwo(r)^{-1} (\wh{S}_W((v_1-r)_+) - S_W((v_1-r)_+)) \nonumber \\
	& \hspace*{10mm} - \, S_W((v_1-r)_+) \sttwo(r)^{-2} (\sttwoh(r) -  \sttwo(r)) \nonumber \\
	& \hspace*{10mm} + \, o\left( \max\{\sttwoh(r) -  \sttwo(r), \wh{S}_W((v_1-r)_+) - S_W((v_1-r)_+)\} \right).
	\label{aa}
\end{align}
The last term is negligible relative to the first two terms. 
Let us denote by $\tilde{D}_{2111a}$ the part of $\tilde{D}_{2111}$ involving the first 
term of (\ref{aa}) and by $\tilde{D}_{2111b}$ the part of $\tilde{D}_{2111}$ involving the 
second term of (\ref{aa}). In regard to $\tilde{D}_{2111a}$, we can write
$$
\tilde{D}_{2111a} = \frn \sumi \delta_{1i} I(V_{1i} \leq t_1^*) B(V_{1i})^{-2}	
\left[ \frn \sumj \sttw(R_j|V_{1i}) \sttwo(R_j)^{-1} (\wh{S}_W(\cD_{ij})
- S_W(\cD_{ij})) \right].
$$
with $\cD_{ij} = (V_{1i}-R_j)_+$.
Define $\cB_k = V_{2k} - R_k$, $N_{wk}(s) = (1-\delta_{2k})I(\cB_k \leq s)$, and $\cY_w(s) = P(\cB_k \geq s)$.
Recalling from Section 2.1 of the main paper that
\begin{equation}
	f_{T_{1},T_{2},R,W|R \leq T_{2}}(t_1,t_2,r,w) = a^{-1} f_{T_{1},T_{2}}(t_1,t_2) f_{R}(r) f_{W}(w)  I(t_1 \leq t_2)I (r \leq~t_2), 
	\label{ffull}
\end{equation}
and
\begin{equation*}
	f_{R}(r) = a f_{R|R \leq T_{2}}(r)S_{T_2}(r)^{-1},
\end{equation*}
we have
\begin{align}
	\cY_w(s) & = P(V_{2} - R \geq s) \nonumber \\
	& = P(T_2 - R \geq s, W \geq s) \nonumber \\
	& = P(T_2 - R \geq s) S_W(s) \nonumber \\
	& = S_W(s) \int_0^\tau \int_{R_L}^{R_U} I(t_2 \geq s + r)
	a^{-1} f_{T_2}(t_2)f_R(r) \, dt_2 \, dr \nonumber \\
	& = S_W(s) \int_0^\tau \int_{R_L}^{R_U} I(t_2 \geq s + r)
	f_{R|R\leq T_2}(r) \, dt_2 \, dr  
	\label{yw}
\end{align}
Classical theory for the Kaplan-Meier estimator (see, for example, \cite{lo1985}), says that
\begin{align*}
	\widehat{S}_W(w) - S_W(w)
	= - \, \frn \sum_{k=1}^n S_W(w) \int_0^w \cY_w(s)^{-1} dM_{wk}(s) + o_{a.s.}(n^{-3/4} (\log n)^{3/4})
\end{align*}
with $dM_{wk}(s) = dN_{wk}(s) - \lambda_W(s)  I(\cB_k \geq s) ds$, 
where $\lambda_W$ is the hazard function of $W$ and the $o_{a.s.}$ term is uniform in $w$.
We can thus write
\begin{align*}
	-\tilde{D}_{2111a} & = \frn \sumi \delta_{1i} I(V_{1i} \leq t_1^*) B(V_{1i})^{-2} \\	
	& \left\{ \frn \sumj \sttw(R_j|V_{1i}) \sttwo(R_j)^{-1} S_W(\cD_{ij}) \right. \\
	& \left. 
	\left[ \frn \sum_{k=1}^n \int_0^{\tau^*} 
	I(\cD_{ij} \geq s) \cY_w(s)^{-1} dM_{wk}(s) \right] \right\}
	+ \op(n^{-1/2}) \\
	& = \frn \sum_{k=1}^n \int_0^{\tau^*} \Omega_n^\pss{a}(s) \cY_w(s)^{-1} dM_{wk}(s) + \op(n^{-1/2}).
\end{align*}
where
$$
\Omega_n^\pss{a}(s) = \left(\frn\right)^2  \sumi \sumj
\delta_{1i} I(V_{1i} \leq t_1^*) B(V_{1i})^{-2} 
\sttw(R_j|V_{1i}) \sttwo(R_j)^{-1} S_W(V_{1i} - R_j) I(R_j \leq V_{1i} - s).
$$
Define
\begin{align*}
	\bar{\Omega}^\pss{a}(s)
	= \int_0^\tau & \int_{R_L}^{R_U} f_{V_1}(v_1) f_{R|R\leq T_2}(r) \\
	& \hspace*{8mm} E[\delta_1|V_1=v_1] I(v_1 \leq t_1^*) B(v_1)^{-2} \sttw(r|v_1) \sttwo(r)^{-1} \\
	& \hspace*{20mm} S_W(v_1-r) I(r \leq v_1-s) \, dv_1 \, dr
\end{align*}
and $\Upsilon_n(s) = \Omega_n^\pss{a}(s) - \bar{\Omega}^\pss{a}(s)$. 
We can then write
\begin{align}
	-\tilde{D}_{2111a} 
	& = \frn \sum_{k=1}^n \int_0^{\tau^*} 
	\bar{\Omega}^\pss{a}(s) \cY_w(s)^{-1} dM_{wk}(s) 
	+ \frn \sum_{k=1}^n \int_0^{\tau^*} 
	\Upsilon_n(s) \cY_w(s)^{-1} dM_{wk}(s).
	\label{ups}
\end{align}
The first term on the right is the average of mean-zero i.i.d.\ terms.
We will show that the second term on the right is $\op(n^{-1/2})$.
First, we argue that $\sup_s |\Upsilon_n(s)| = \Op(n^{-1/2})$.
This can be seen as follows. The class of functions $a_{v,s}(r) = I(r \leq v-s)$
as $v$ ranges over $[0,\tau]$ and $s$ ranges over $[0,\tau^*]$ is Donsker
by an argument similar to that of \citet[][Example 19.6]{vdv1998},
and, under Assumption A3, the class of functions $b_v(r) =\sttw(r|v) S_W(v-r)$  
as $v$ ranges over $[0,\tau]$ is Donsker by \citet[][Example 19.7]{vdv1998}.
Hence, by \citet[][Example 19.20]{vdv1998}, the class of functions
$c_{v,s}(r) = a_{v,s}(r)b_v(r)$ 
as $v$ ranges over $[0,\tau]$ and $s$ ranges over $[0,\tau^*]$ is Donsker.
Consequently,
\begin{align*}
	& \sup_{v_1,s} \Bigg|
	\frn \sumj \sttw(R_j|v_1) \sttwo(R_j)^{-1} S_W(v_1 - R_j) I(R_j \leq v_1 - s) \\
	& \hspace*{10mm} - \int_{L_R}^{U_R} \sttw(r|v_1) \sttwo(r)^{-1} S_W(v_1 - r) I(r \leq v_1 - s)
	f_{R|R\leq T_2}(r) \, dr \Bigg| = \Op(n^{-1/2}).
\end{align*}
Similarly, the class $d_{r,s}(v_1) = I(v_1 \geq r + s)$
as $r$ ranges over $[R_L, R_U]$ and $s$ ranges over $[0,\tau^*]$ is Donsker, and therefore
\begin{align*}
	& \sup_s \Bigg| 
	\frn \sumi 
	\delta_{1i} I(V_{1i} \leq t_1^*) B(V_{1i})^{-2} \\
	& \hspace*{10mm} \int_0^\tau \sttw(r|V_{1i}) \sttwo(r)^{-1} S_W(V_{1i} - r) I(r \leq V_{1i} - s) 
	f_{R|R\leq T_2}(r) \, dv_1 \,  dr - \bar{\Omega}(s) \Bigg| = \Op(n^{-1/2}).	
\end{align*}
Note here that, by Assumptions A4 and A6,  $B(\cdot)^{-2}$ and $\sttwo(\cdot)^{-1}$ are uniformly bounded, so that the mappings
$r \mapsto \sttw(r|v_1) \sttwo(r)^{-1} S_W(v_1 - r) I(r \leq v_1 - s)$
and $v_1 \mapsto I(v_1 \leq t_1^*) B(v_1)^{-2}$
admit an integrable envelope, which justifies the application of Donsker class theory.
We have thus established that $\sup_s |\Upsilon_n(s)| = \Op(n^{-1/2})$.

We turn now to the second term on the right side of (\ref{ups}).
This term can be written as $J_1 - J_2$, where
\begin{align*}
	J_1 & = \frn \sum_{k=1}^n \int_0^{\tau^*} \Upsilon_n(s) \cY_w(s)^{-1} dN_{wk}(s) \\
	& = \frn \sum_{k=1}^n \Upsilon_n(W_k) \cY_w(W_k)^{-1} I(W_k \leq \tau^*) I(T_{2k} \geq R_k + W_k) \\
	J_2 & = \int_0^{\tau^*} \Upsilon_n(s) \cY_w(s)^{-1} \left( \frn \sum_{k=1}^n Y_{wk}(s) \right) 
	\lambda_W(s) \, ds
\end{align*}
The term $J_2$, in turn, can be written as $J_2 = J_{21} + J_{22}$ with
\begin{align*}
	J_{21} & = 	\int_0^{\tau^*} \Upsilon_n(s) \cY_w(s)^{-1} \cY_w(s) \lambda_W(s) \, ds \\
	& = \int_0^{\tau^*} \Upsilon_n(s) \cY_w(s)^{-1} 
	\left[ 
	S_W(s) \int_0^\tau \int_{R_L}^{R_U} I(t_2 \geq s + r)
	f_{R|R\leq T_2}(r) \, dt_2 \, dr  
	\right] \lambda_W(s) \, ds \\
	& = \int_0^\tau \int_{R_L}^{R_U} \int_0^{\tau} \Upsilon_n(s) \cY_w(s)^{-1} 
	I(s \leq t^*)I(t_2 \geq s + r) f_{T_2}(t_2)f_{R|R\leq T_2} f_W(s) \, dt_2 \, dr \, ds \\
	J_{22} & = \int_0^{\tau^*} \Upsilon_n(s) \cY_w(s)^{-1} \left\{ \left( \frn \sum_{k=1}^n Y_{wk}(s) \right) 
	- \cY_w(s) \right\} \, ds,
\end{align*}
where in the development for $J_{21}$ we have used (\ref{yw}).

Given that $\sup_s |\Upsilon_n(s)| = \Op(n^{-1/2})$,
the quantity $J_1 - J_{21}$ is $\op(n^{-1/2})$ by a straightforward extension of Proposition 3,
while $J_{22}$ is $\Op(n^{-1})$ since $\sup_s |\Upsilon_n(s)| = \Op(n^{-1/2})$ and the
term in braces in the integral is $\Op(n^{-1/2})$.
We have thus shown that the second term of (\ref{ups}) is $\op(n^{-1/2})$.
Thus,
\begin{align}
	-\tilde{D}_{2111a} 
	& = \frn \sum_{k=1}^n \int_0^{\tau^*} \bar{\Omega}^\pss{a}(s) \cY_w(s)^{-1} dM_{wk}(s) + \op(n^{-1/2})
\end{align}
This completes the analysis of $\tilde{D}_{2111a}$.

In a similar way, $-\tilde{D}_{2111b}$ can be written as
$$
-\tilde{D}_{2111b} = \frn \sum_{k=1}^n \int_0^{\tau} \bar{\Omega}^\pss{b}(s) \cY_2(s)^{-1} dM_{2k}(s) + \op(n^{-1/2})
$$ 
with $dM_{2k}(s) = dN_{2k}(s) - \lambda_{T_2}(s)  I(V_{2k} \geq s) ds$, $\cY_2(s) = P(V_{2k} \geq s)$, and
\begin{align*}
	\bar{\Omega}^\pss{b}(s)
	= \int_0^\tau & \int_{R_L}^{R_U} f_{V_1}(v_1) f_{R|R\leq T_2}(r) \\
	& \hspace*{8mm} E[\delta_1|V_1=v_1] I(v_1 \leq t_1^*) B(v_1)^{-2} \sttw(r|v_1) \sttwo(r)^{-1} \\
	& \hspace*{20mm} S_W(v_1-r) I(r \geq s) \, dv_1 \, dr
\end{align*}
Finally, let us analyze $D_{2112}$. Recall that
$$
D_{2112} = \frn \sumi \delta_{1i} I(V_{1i} \leq t_1^*) B(V_{1i})^{-2}	
\left[ \frn \sumj \cA(R_j,V_{1i}) (\sttwh(R_j|V_{1i}) - \sttw(R_j|V_{1i})) \right].
$$
We have
\begin{align*}
	\sttwh(R_j|V_{1i}) - \sttw(R_j|V_{1i})	
	& = \exp(-\widehat{\Lambda}_{T_2|T_1}(R_j|V_{1i})) 
	-  \exp(-{\Lambda}_{T_2|T_1}(R_j|V_{1i})) \\
	& = -\widetilde{S}_{T_2|T_1}(R_j|V_{1i})
	(\widehat{\Lambda}_{T_2|T_1}(R_j|V_{1i}) - {\Lambda}_{T_2|T_1}(R_j|t_1)) 
\end{align*}
with $\widetilde{S}_{T_2|T_1}(t_2|t_1) = \exp(-\widetilde{\Lambda}_{T_2|T_1}(t_2|t_1))$,
where $\widetilde{\Lambda}_{T_2|T_1}(t_2|t_1)$ lies between $\widehat{\Lambda}_{T_2|T_1}(t_2|t_1)$ and
$\Lambda_{T_2|T_1}(t_2|t_1)$.

Now, using (\ref{ffull}), we have
$$
f_{T_{1},R,\delta_1|R \leq T_{2}}(t_1,r,1) = a^{-1} f_{T_{1}}(t_1) f_{R}(r) S_W((t_1-r)_+) \int_{\max\{r,t_1\}}^\infty f_{T_2|T_1}(t_2|t_1) \, dt_2
$$
Thus, the conditional density of $T_2$ conditional on $R \leq T_2, T_1=t_1, R=r, \delta_1=1$ is
$$
f^*(t_2|t_1,r) =  a^*(t_1,r)^{-1} f_{T_2|T_1}(t_2|t_1)I(t_2 \geq t_1), \quad a^*(t_1,r) = \int_{\max\{r,t_1\}}^\infty f_{T_2|T_1}(\tilde{t}_2|t_1) \, d\tilde{t}_2,
$$
the corresponding survival function is, for $t_2 \geq t_1$,
$$
S^*(t_2|t_1,r) = a^*(t_1,r)^{-1} S_{T_2|T_1}(t_2|t_1),
$$
and the corresponding hazard function is, for $t_2 \geq t_1$,
$$
\lambda^*(t_2|t_1,r) = \frac{f^*(t_2|t_1,r)}{S^*(t_2|t_1,r)} = \frac{f_{T_2|T_1}(t_2|t_1)}{S_{T_2|T_1}(t_2|t_1)} = \lambda_{T_2|T_1}(t_2|t_1)
$$
Accordingly, defining
$$
M_{2m}^*(s) = \delta_{1m} \left[ N_{2m}(s) - \int_0^s \lambda_{T_2|T_1}(u|V_{1m})
Y_{2m}(u) \, du \right],
$$ 
we see that $M_{2m}^*(s)$ is a martingale with respect to the filtration
$$
\cF_s = \sigma(R_i, V_{1i}, \delta_{1i}, i = 1, \ldots, n,
N_{2m}(u), Y_{2m}(u), u \in [0,s], m = 1, \ldots, n)
$$
(note that $\delta_{1m}$ and $V_{1m}$ are $\cF_0$-measurable).
Define further
\begin{align*}
	\cW_m(t_1,u) & = \delta_{1m} \cK\left(\frac{V_{1m}-t_1}{h}\right)Y_{2m}(u) \\
	\dot{\lambda}_{T_2|T_1}(u|t_1) & = \frac{\partial}{\partial t_1} \, {\lambda}_{T_2|T_1}(u|t_1) \\
	\ddot{\lambda}_{T_2|T_1}(u|t_1) & = \frac{\partial^2}{\partial t_1^2} \, {\lambda}_{T_2|T_1}(u|t_1)  
\end{align*}
We can then write
\begin{align*}
	& \widehat{\Lambda}_{T_2|T_1}(t_2|t_1) - \Lambda_{T_2|T_1}(t_2|t_1)	\\
	& \hspace*{0mm} = \frnh \summ \int_{t_1}^{t_2} A_0(t_1,u)^{-1} \cW_m(t_1,u) dN_{2m}(u)
	- \, \int_{t_1}^{t_2} 	\lambda_{T_2|T_1}(u|t_1) \, du \\
	& \hspace*{5mm} = \frnh \summ \int_{t_1}^{t_2} A_0(t_1,u)^{-1} \cW_m(t_1,u) \lambda_{T_2|T_1}(u|V_{1m}) \, du \\
	& \hspace*{10mm} + \, \frnh \summ \int_{t_1}^{t_2} A_0(t_1,u)^{-1} \cW_m(t_1,u) \, dM_{2m}^*(u) \\
	& \hspace*{10mm} - \, \int_{t_1}^{t_2} 	\lambda_{T_2|T_1}(u|t_1) \, du \\
	& \hspace*{5mm} = \frnh \summ \int_{t_1}^{t_2} A_0(t_1,u)^{-1} \cW_m(t_1,u)
	\lambda_{T_2|T_1}(u|t_1) \, du \\
	& \hspace*{10mm} + \, \frnh \summ \int_{t_1}^{t_2} A_0(t_1,u)^{-1} \cW_m(t_1,u)
	\dot{\lambda}_{T_2|T_1}(u|t_1)(V_{1m} - t_1) \, du \\
	& \hspace*{10mm} + \, \frac{1}{2} \frnh \summ \int_{t_1}^{t_2} A_0(t_1,u)^{-1} \cW_m(t_1,u)
	\ddot{\lambda}_{T_2|T_1}(u|\widetilde{t}_{1m}(u))(V_{1m} - t_1)^2 \, du \\
	& \hspace*{10mm} + \, \frnh \summ \int_{t_1}^{t_2} A_0(t_1,u)^{-1} \cW_m(t_1,u) \, dM_{2m}^*(u) \\
	& \hspace*{10mm} - \, \int_{t_1}^{t_2} 	\lambda_{T_2|T_1}(u|t_1) \, du \\
	& \mbox{(where $\widetilde{t}_{1m}(u)$ lies between $t_1$ and $V_{1m}$)} \\
	& \hspace*{5mm} = \int_{t_1}^{t_2} A_0(t_1,u)^{-1} \dot{\lambda}_{T_2|T_1}(u|t_1)
	\left[ \frnh \summ \cW_m(t_1,u)(V_{1m}-t_1) \right] \, du \\
	& \hspace*{10mm} + \, \int_{t_1}^{t_2} A_0(t_1,u)^{-1} 
	\left[ \frac{1}{2} \frnh \summ \ddot{\lambda}_{T_2|T_1}(u|\widetilde{t}_{1m}(u))
	\cW_m(t_1,u)(V_{1m} - t_1)^2 \right] \, du \\
	& \hspace*{10mm} + \, \frnh \summ \int_{t_1}^{t_2} A_0(t_1,u)^{-1} \cW_m(t_1,u) \, dM_{2m}^*(u) \\
	& \hspace*{5mm} = \cT_1(t_1,t_2) + \cT_2(t_1,t_2) + \cT_3(t_1,t_2)
\end{align*}
with
\begin{align*}
	\cT_1(t_1,t_2) & = h \int_{t_1}^{t_2} A_0(t_1,u)^{-1} \dot{\lambda}_{T_2|T_1}(u|t_1) A_1(t_1,u) \, du, \\
	\cT_2(t_1,t_2) & = \frac{1}{2} \, h^2 \int_{t_1}^{t_2} A_0(t_1,u)^{-1} 	
	\left[ \frnh \summ \ddot{\lambda}_{T_2|T_1}(u|\widetilde{t}_{1m}(u))
	\cW_m(t_1,u) \left(\frac{V_{1m} - t_1}{h}\right)^2 \right] \, du, \\
	\cT_3(t_1,t_2) & = \frnh \summ \int_{t_1}^{t_2} A_0(t_1,u)^{-1} \cW_m(t_1,u) \, dM_{2m}^*(u). 	
\end{align*}
Using Proposition 1, we get
\begin{align*}
	\cT_1(t_1,t_2) & = h \Op(h) = \Op(h^2), \\
	|\cT_2(t_1,t_2)| & \leq \tfrac{1}{2} h^2 [\min_u A_0(t_1,u)]^{-1} \|\ddot{\lambda}_{T_2|T_1}\|_\infty 
	\int_0^{t_2} A_2(t_1,u) \, du = \Op(h^2).
\end{align*}
We have 
$$
D_{2112} = E_1 + E_2 + E_3 
$$
with
\begin{align*}
	E_1 & = - \, \frn \sumi \delta_{1i} I(V_{1i} \leq t_1^*) B(V_{1i})^{-2}	
	\left[ \frn \sumj \cA(R_j,V_{1i}) \sttwt(R_j|V_{i1}) \cT_1(V_{1i},R_j) \right] \\
	E_2 & = - \, \frn \sumi \delta_{1i} I(V_{1i} \leq t_1^*) B(V_{1i})^{-2}	
	\left[ \frn \sumj \cA(R_j,V_{1i}) \sttwt(R_j|V_{i1}) \cT_2(V_{1i},R_j) \right] \\
	E_3 & = - \, \frn \sumi \delta_{1i} I(V_{1i} \leq t_1^*) B(V_{1i})^{-2}	
	\left[ \frn \sumj \cA(R_j,V_{1i}) \sttwt(R_j|V_{i1}) \cT_3(V_{1i},R_j) \right] \\
\end{align*}
We saw that $\cT_1(t_1,t_2) = \Op(h^2)$ and $\cT_2(t_1,t_2) = \Op(h^2)$, so that $E_1 = \Op(h^2)$
and $E_2 = \Op(h^2)$.
Since, by assumption, $h_n = O(n^{-\nu})$ with $\nu > \tfrac{1}{4}$, we get
$\sqrt{n} \, E_1 \cp 0$ and $\sqrt{n} \, E_2 \cp 0$.	

We now examine $E_3$. We have
\begin{align*}
	E_3 & = - \, \frn \sumi \delta_{1i} I(V_{1i} \leq t_1^*) B(V_{1i})^{-2}
	\left\{ \frn \sumj \cA(R_j,V_{1i}) \sttwt(R_j|V_{1i}) \vphantom{\int_0^{R_j}} \right. \\
	& \hspace*{30mm} \left. \left[ \frnh \summ \int_0^{R_j} I(u\geq V_{1i}) A_0(V_{1i},u)^{-1} \cW_m(V_{1m},u) \, 
	dM_{2m}^*(u) \right] \right\} \\
	& = - \, \frn \summ \int_0^\tau \frnh \sumi \delta_{1i} I(V_{1i} \leq t_1^*) B(V_{1i})^{-2}
	A_0(V_{1i},u)^{-1} \\
	& \hspace*{30mm}\left[ \frn \sumj \cA(R_j,V_{1i})  \sttwt(R_j|V_{1i}) I(R_j \geq u) \right] \\
	& \hspace*{30mm} \left[ \delta_{1m} \cK\left(\frac{V_{1m}-V_{1i}}{h}\right) Y_{2m}(u) \right]
	dM_{2m}^*(u) \\
	& = - \, \frn \summ \delta_{1m} \int_0^\tau
	\frn \sumi \delta_{1i} \cK\left(\frac{V_{1i}-V_{1m}}{h}\right)
	I(V_{1i} \leq t_1^*) B(V_{1i})^{-2} A_0(V_{1i},u)^{-1} \\
	& \hspace*{20mm} \left[ \frn \sumj \cA(R_j,V_{1i})  \sttwt(R_j|V_{1i}) I(R_j \geq u) \right]
	Y_{2m}(u) dM_{2m}^*(u) \\
	& = - \, \frn \summ \delta_{1m} \int_0^\tau \cI(u,V_{1m}) Y_{2m}(u) dM_{2m}^*(u)
\end{align*}                                              
with
\begin{align*}
	\cI(u,v) & = \frnh \sumi I(V_{1i} \leq t_1^*) B(V_{1i})^{-2} A_0(V_{1i},u)^{-1}
	\cK\left(\frac{V_{1i}-v}{h}\right) \\
	& \hspace*{20mm} \left[ \frn \sumj \cA(R_j,V_{1i})  \sttwt(R_j|V_{1i}) I(R_j \geq u) \right].
\end{align*}
Define
$$
\Phi(u,v) = E[\cA(R_j,v)  \sttw(R_j|v) I(R_j \geq u) ]
$$
We then have
\begin{align*}
	\cI(u,v) & = \frnh \sumi \delta_{1i} \cK\left(\frac{V_{1i}-v}{h}\right)
	I(V_{1i} \leq t_1^*) B(V_{1i})^{-2} \Phi(u,V_{1i}) a_0(V_{1i},u)^{-1} + \op(1) \\
	& = \frnh \sumi \delta_{1i} \cK\left(\frac{V_{1i}-v}{h}\right) \cG(u,V_{1i}) + \op(1),
\end{align*}
where 
$$
\cG(u,v) = I(v \leq t_1^*) B(v)^{-2} \Phi(u,v)a_0(v,u)^{-1}.
$$
Define
$$
\cI^\circ(u,v) = P(\delta_{1i}=1|T_{2i} \geq R_i) \kappa(v)\cG(u,v),
$$
where $\kappa(v)$ is the conditional density of $V_{1i}$ given $T_{2i} \geq R_i$ and $\delta_{1i}=1$.  
Then, by arguments similar to those used to prove Proposition 1, we have
\begin{equation}
	\sup_{u,v} |\cI(u,v) - \cI^\circ(u,v)| \cp 0.
	\label{ii}
\end{equation}
This gives us $E_3 = -(E_{31}+E_{32}(\tau))$, where
\begin{align*}
	E_{31} & = \frn \summ \int_0^{\tau} \cI^\circ(u,V_{1m}) Y_{2m}(u) dM_{2m}^*(u),  \\	
	E_{32}(x) & = \frn \summ \int_0^x (\cI(u,V_{1m})-\cI^\circ(u,V_{1m})) Y_{2m}(u) dM_{2m}^*(u)		
\end{align*}
$E_{31}$ is the average of i.i.d.\ mean-zero terms.
By Theorem II.3.1 of \cite{andersen1993}, $\sqrt{n} \, E_{32}(x)$ is a martingale
with predictable variation given by
$$
\langle \sqrt{n} \, E_{32}(x), \sqrt{n} \, E_{32}(x) \rangle
= \frn \summ \int_0^x (\cI(u,V_{1m})-\cI^\circ(u,V_{1m}))^2 Y_{2m}(u) \lambda_2(u|V_{1m}) \, du,
$$	
which converges in probability to 0 as $n \to \infty$ by (\ref{ii}) and the fact that $Y_{2m}$ and $\lambda_2$ are bounded.
Given this result, in combination with Lenglart's inequality as given in Display (2.5.18)
of \cite{andersen1993}, we conclude that $\sqrt{n} \, E_{32}(\tau) \cp 0$.

In conclusion, we have 
$$
\widehat{G}_1(t_1^*) - {G}_1(t_1^*) = \frac{1}{n} \sumi U_i(t_1^*) + o_{P}(n^{-1/2}),
$$
where
$$
U_i(t_1^*) =  \sum_{k=1}^5 U_{ik}(t_1^*)
$$
with
\begin{align*}
	U_{i1}(t_1^*) & = \delta_{1i} I(V_{1i} \leq t_1^*)Q(V_{1i}) - G_1(t_1^*) \\
	U_{i2}(t_1^*) & = \cH(R_i) - E[\cH(R_i)] \\
	U_{i3}(t_1^*) & = - \int_0^{\tau^*} \bar{\Omega}^\pss{a}(s) \cY_w(s)^{-1} dM_{wi}(s) \\
	U_{i4}(t_1^*) & = \int_0^{\tau^*} \bar{\Omega}^\pss{b}(s) \cY_2(s)^{-1} dM_{2i}(s) \\
	U_{i5}(t_1^*) & = - \int_0^{\tau} \cI^\circ(u,V_{1i}) Y_{2i}(u) dM_{2i}^*(u)  
\end{align*}
All of the terms $U_{ik}(t_1^*)$ have expectation zero, and none of them involves the bandwidth $h$.

\newpage

\section{Additional Simulation Results for Different Sample Sizes}

The following pages present simulation results for $n=2,500$ and $n=7,500$.

\begin{figure}[H]
	\centering
	\includegraphics[width=1\textwidth, height=0.85\textheight]{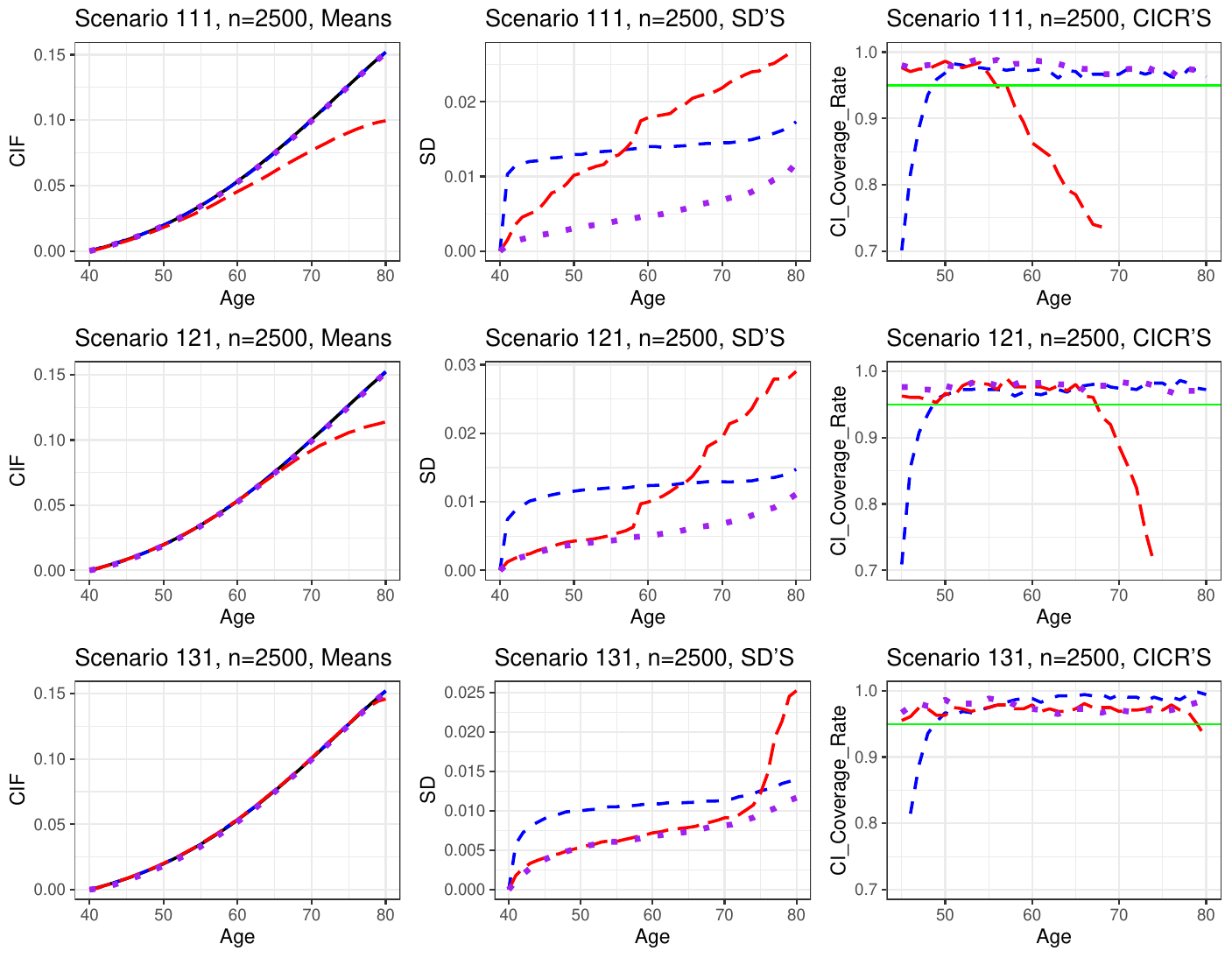}
	\caption{Simulation results for configurations 111, 121, and 131: Mean over estimates, standard deviations (SD), and 
		empirical coverage rates of 95\% pointwise confidence intervals, for the Aalen-Johansen (AJ), Gorfine-Zucker-Shoham (GZS),
		and new estimators. Solid black line – true curve, blue dashed line – AJ estimator, red long-dashed line - GZS estimator, 
		purple dotted line – new estimator. }\label{fig:f1} 
\end{figure}

\begin{figure}[H]
	\centering
	\includegraphics[width=1\textwidth, height=0.85\textheight]{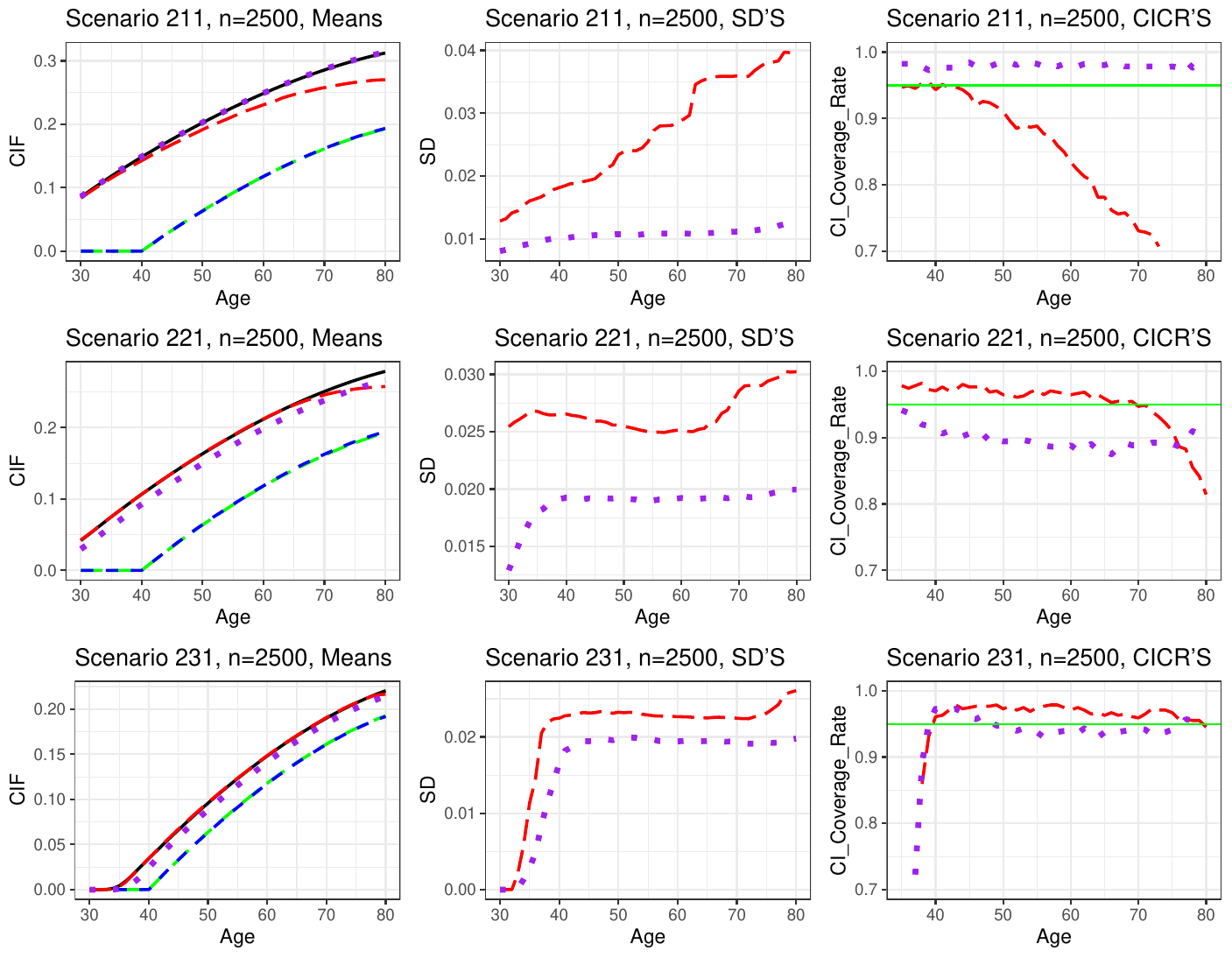}
	\caption{Simulation results for configurations 211, 221, and 231: Mean over estimates, standard deviations (SD), and 
		empirical coverage rates of 95\% pointwise confidence intervals, for the Aalen-Johansen (AJ), Gorfine-Zucker-Shoham (GZS),
		and new estimators. Solid black line – true curve, blue dashed line – AJ estimator, red long-dashed line - GZS estimator, 
		purple dotted line – new estimator. }\label{fig:f2} 
\end{figure}

\begin{figure}[H]
	\centering
	\includegraphics[width=1\textwidth, height=0.85\textheight]{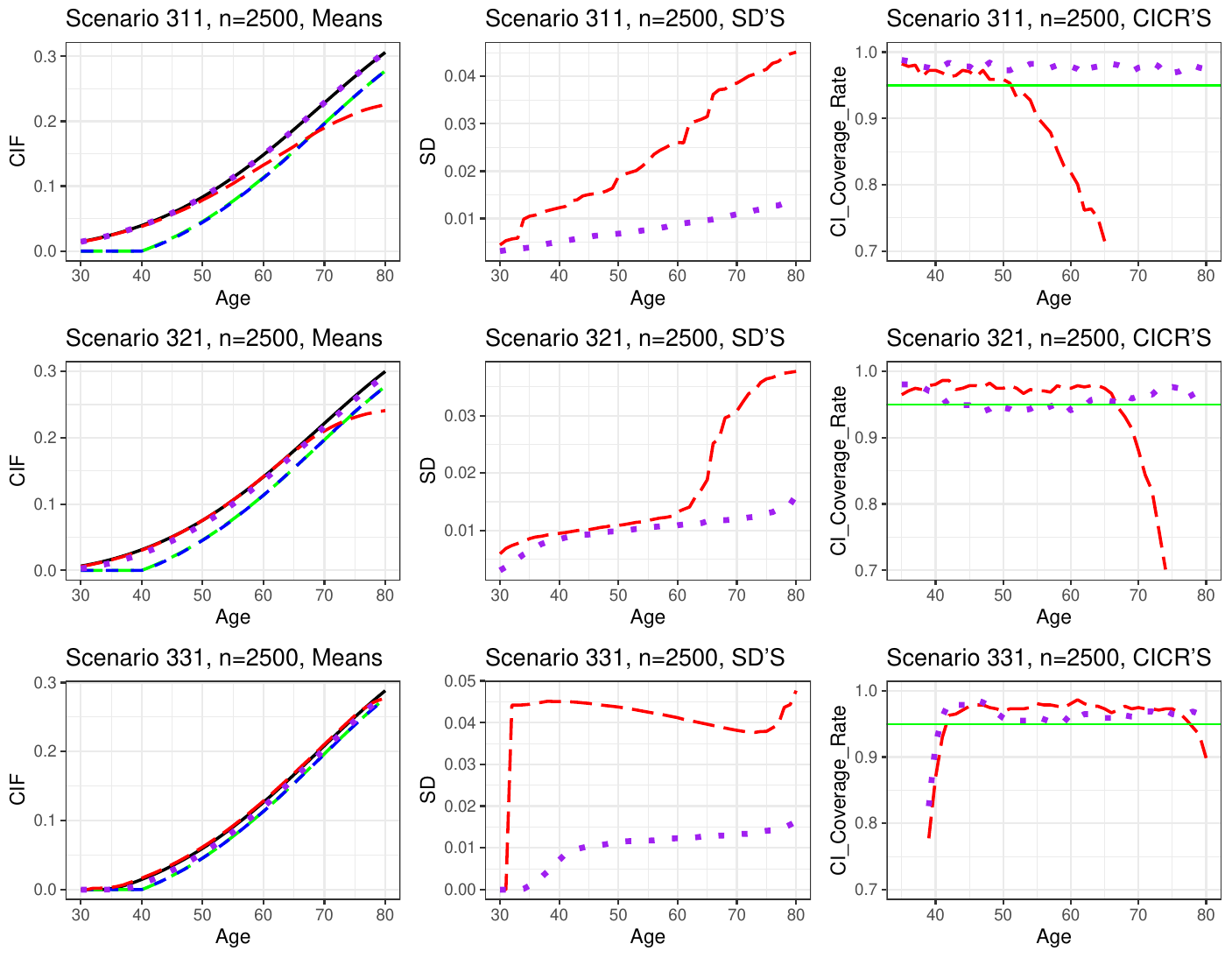}
	\caption{Simulation results for configurations 311, 321, and 331: Mean over estimates, standard deviations (SD), and 
		empirical coverage rates of 95\% pointwise confidence intervals, for the Aalen-Johansen (AJ), Gorfine-Zucker-Shoham (GZS),
		and new estimators. Solid black line – true curve, blue dashed line – AJ estimator, red long-dashed line - GZS estimator, 
		purple dotted line – new estimator. }\label{fig:f3} 
\end{figure}

\begin{figure}[H]
	\centering
	\includegraphics[width=1\textwidth, height=0.85\textheight]{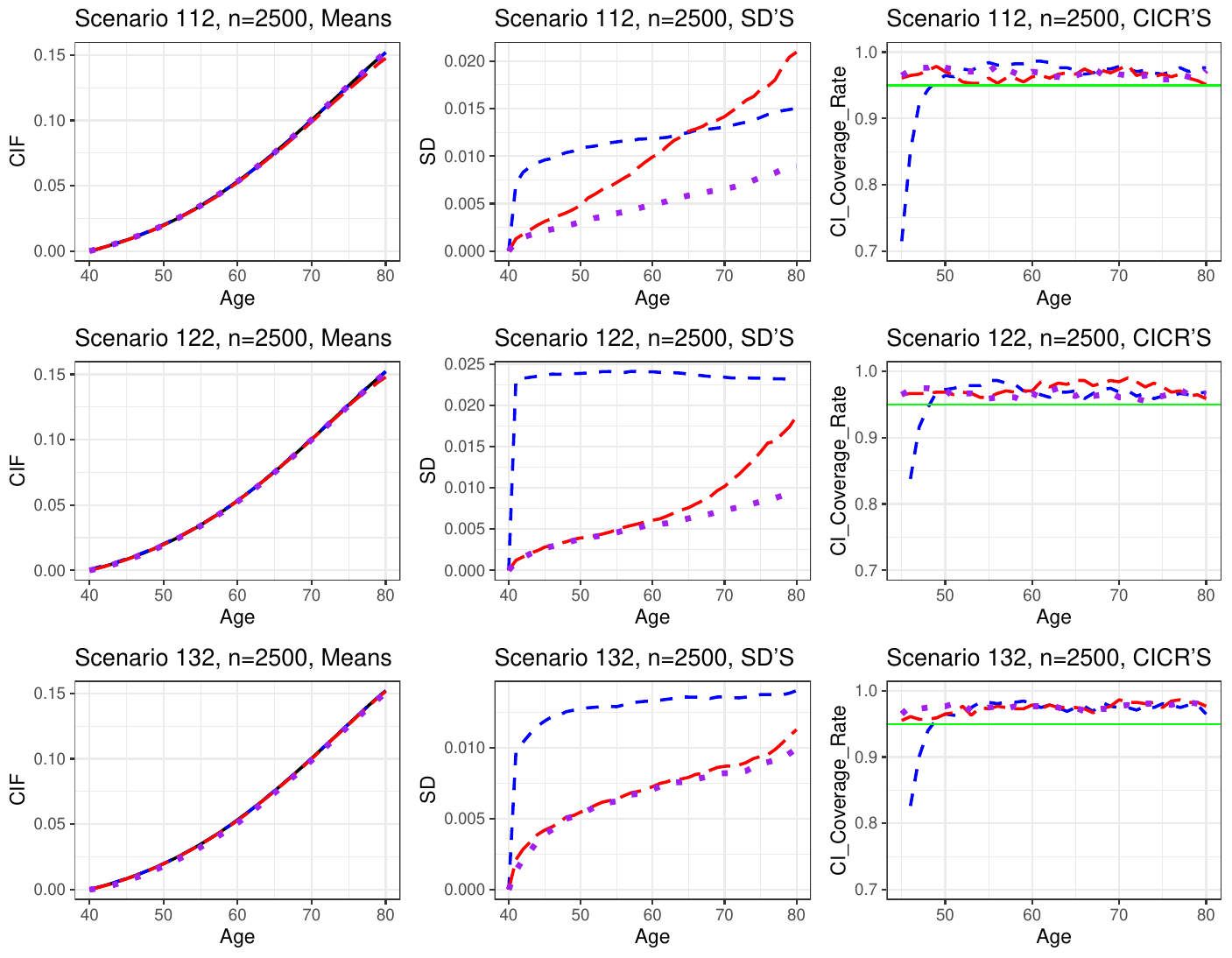}
	\caption{Simulation results for configurations 112, 122, and 132: Mean over estimates, standard deviations (SD), and 
		empirical coverage rates of 95\% pointwise confidence intervals, for the Aalen-Johansen (AJ), Gorfine-Zucker-Shoham (GZS),
		and new estimators. Solid black line – true curve, blue dashed line – AJ estimator, red long-dashed line - GZS estimator, 
		purple dotted line – new estimator. }\label{fig:f4} 
\end{figure}

\begin{figure}[H]
	\centering
	\includegraphics[width=1\textwidth, height=0.85\textheight]{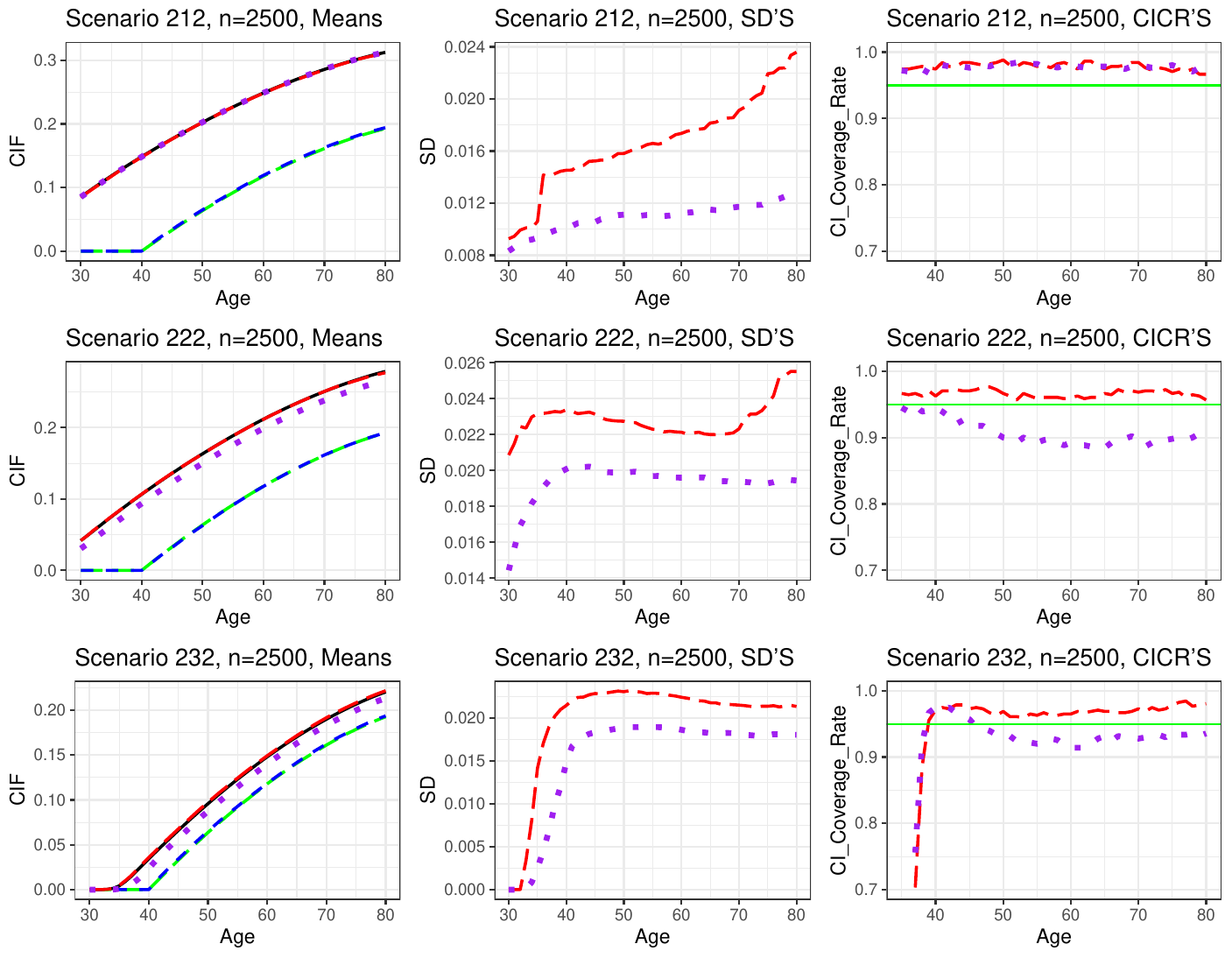}
	\caption{Simulation results for configurations 212, 222, and 232: Mean over estimates, standard deviations (SD), and 
		empirical coverage rates of 95\% pointwise confidence intervals, for the Aalen-Johansen (AJ), Gorfine-Zucker-Shoham (GZS),
		and new estimators. Solid black line – true curve, blue dashed line – AJ estimator, red long-dashed line - GZS estimator, 
		purple dotted line – new estimator. }\label{fig:f5} 
\end{figure}

\begin{figure}[H]
	\centering
	\includegraphics[width=1\textwidth, height=0.85\textheight]{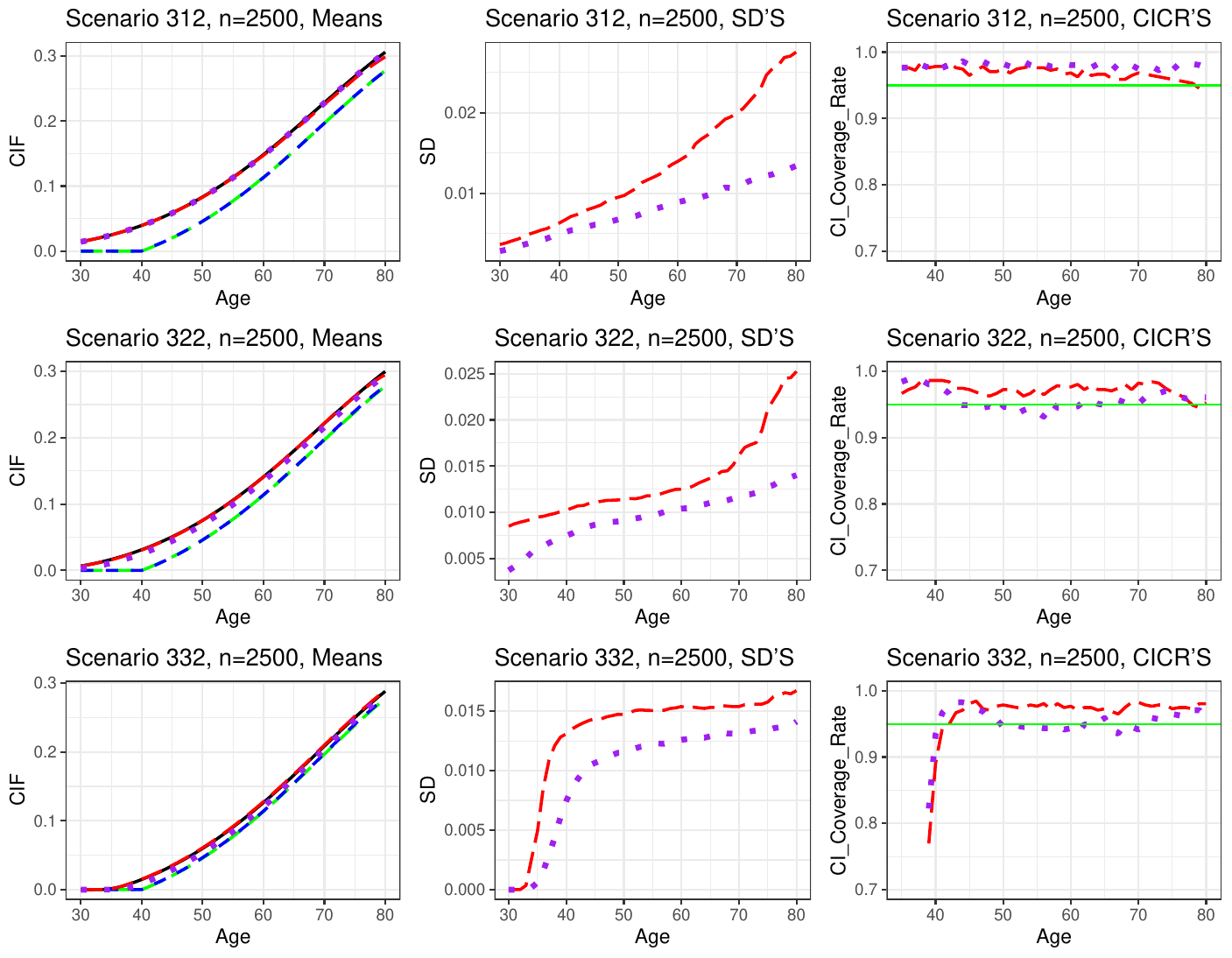}
	\caption{Simulation results for configurations 312, 322, and 332: Mean over estimates, standard deviations (SD), and 
		empirical coverage rates of 95\% pointwise confidence intervals, for the Aalen-Johansen (AJ), Gorfine-Zucker-Shoham (GZS),
		and new estimators. Solid black line – true curve, blue dashed line – AJ estimator, red long-dashed line - GZS estimator, 
		purple dotted line – new estimator. }\label{fig:f6} 
\end{figure}

\begin{figure}[H]
	\centering
	\includegraphics[width=1\textwidth, height=0.85\textheight]{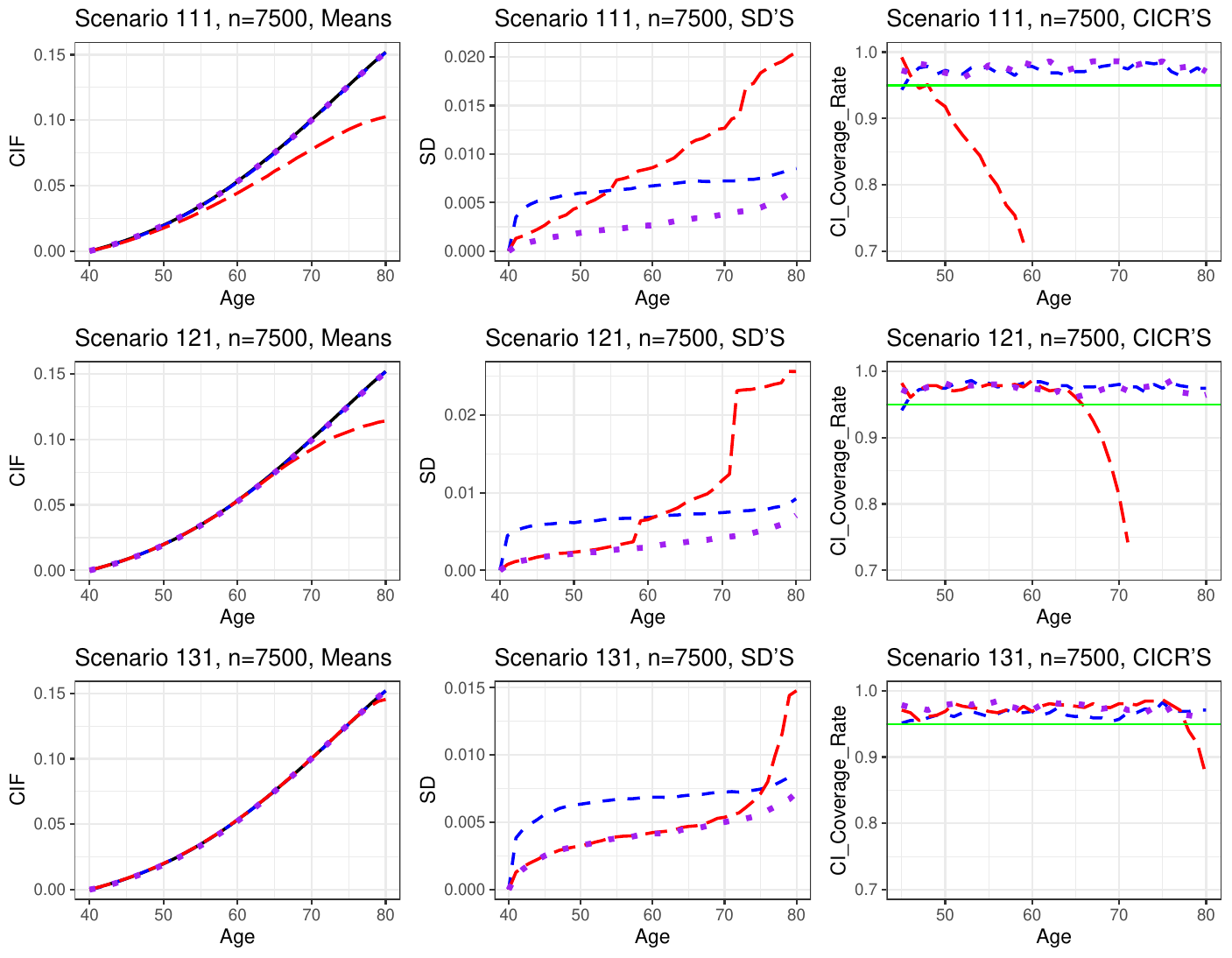}
	\caption{Simulation results for configurations 111, 121, and 131: Mean over estimates, standard deviations (SD), and 
		empirical coverage rates of 95\% pointwise confidence intervals, for the Aalen-Johansen (AJ), Gorfine-Zucker-Shoham (GZS),
		and new estimators. Solid black line – true curve, blue dashed line – AJ estimator, red long-dashed line - GZS estimator, 
		purple dotted line – new estimator. }\label{fig:f7} 
\end{figure}

\begin{figure}[H]
	\centering
	\includegraphics[width=1\textwidth, height=0.85\textheight]{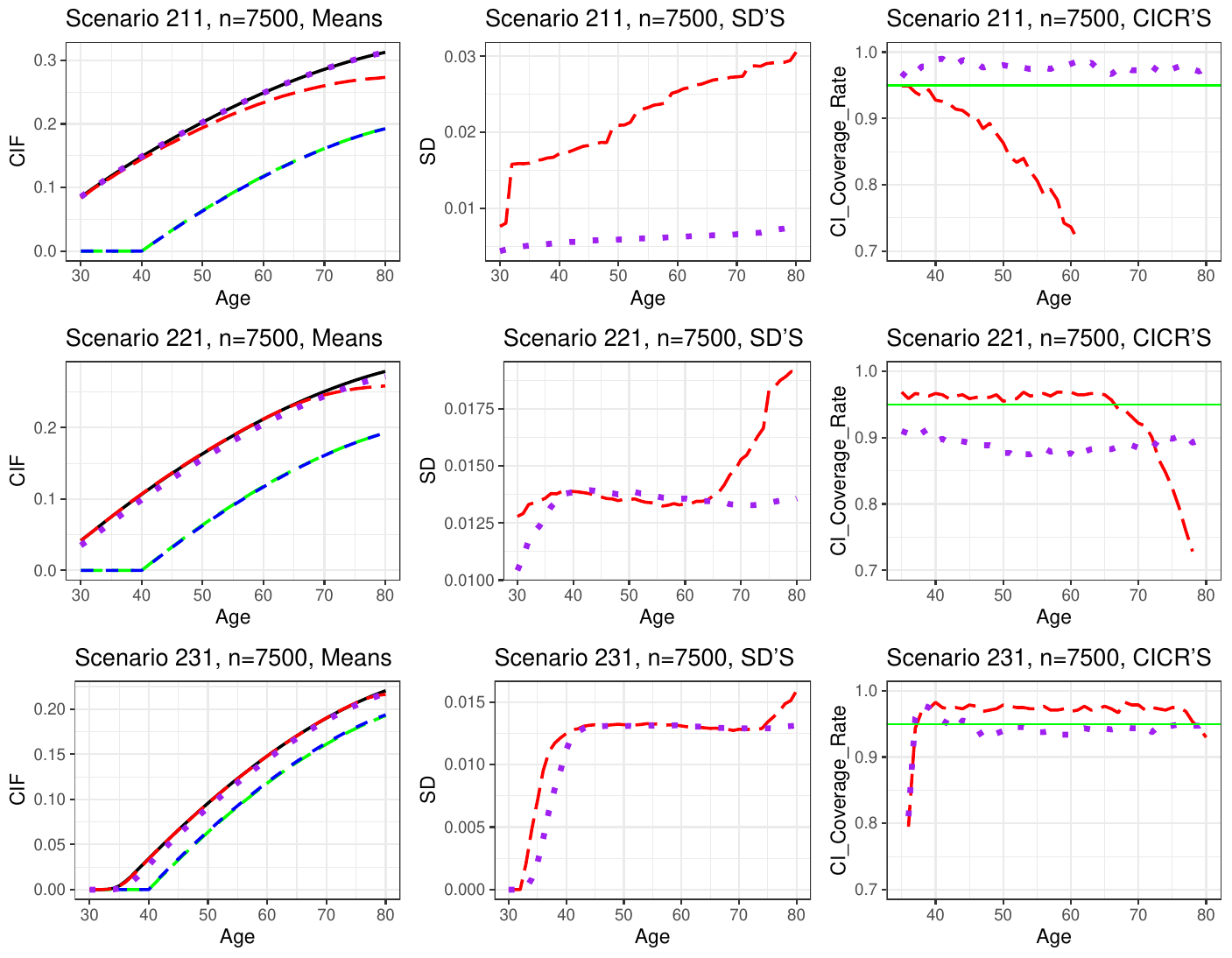}
	\caption{Simulation results for configurations 211, 221, and 231: Mean over estimates, standard deviations (SD), and 
		empirical coverage rates of 95\% pointwise confidence intervals for the Gorfine-Zucker-Shoham (GZS) and 
		and new estimators. Mean over estimates and target CIF curve shown also for the Aalen-Johansen (AJ) estimator.
		Solid black line – true curve, green long-dashed line - target of AJ estimator,
		blue dashed line – AJ estimator, red long-dashed line - GZS estimator, 
		purple dotted line – new estimator. }\label{fig:f8} 
\end{figure}

\begin{figure}[H]
	\centering
	\includegraphics[width=1\textwidth, height=0.85\textheight]{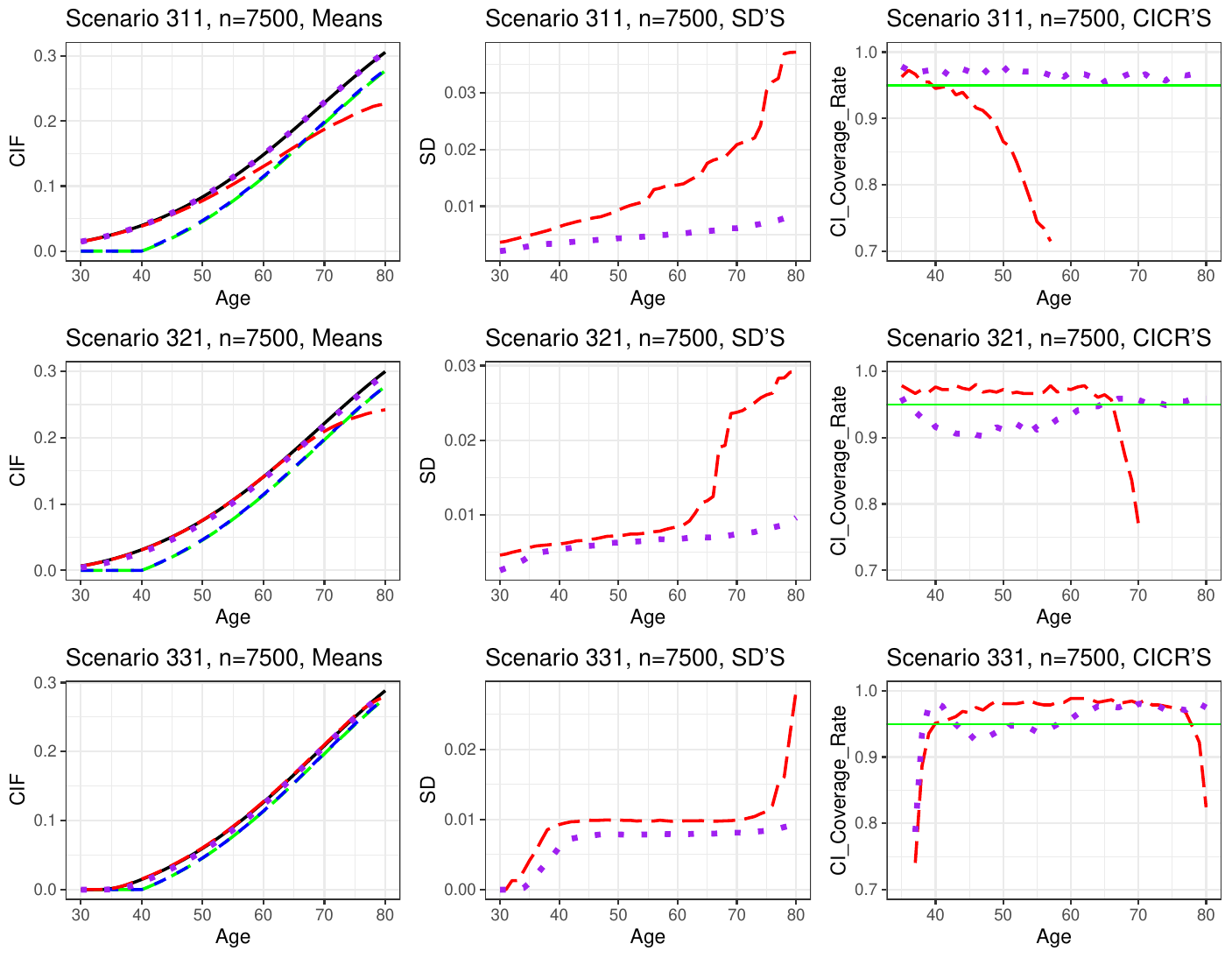}
	\caption{Simulation results for configurations 211, 221, and 231: Mean over estimates, standard deviations (SD), and 
		empirical coverage rates of 95\% pointwise confidence intervals for the Gorfine-Zucker-Shoham (GZS) and 
		and new estimators. Mean over estimates and target CIF curve shown also for the Aalen-Johansen (AJ) estimator.
		Solid black line – true curve, green long-dashed line - target of AJ estimator,
		blue dashed line – AJ estimator, red long-dashed line - GZS estimator,  
		purple dotted line – new estimator. }\label{fig:f9} 
\end{figure}

\begin{figure}[H]
	\centering
	\includegraphics[width=1\textwidth, height=0.85\textheight]{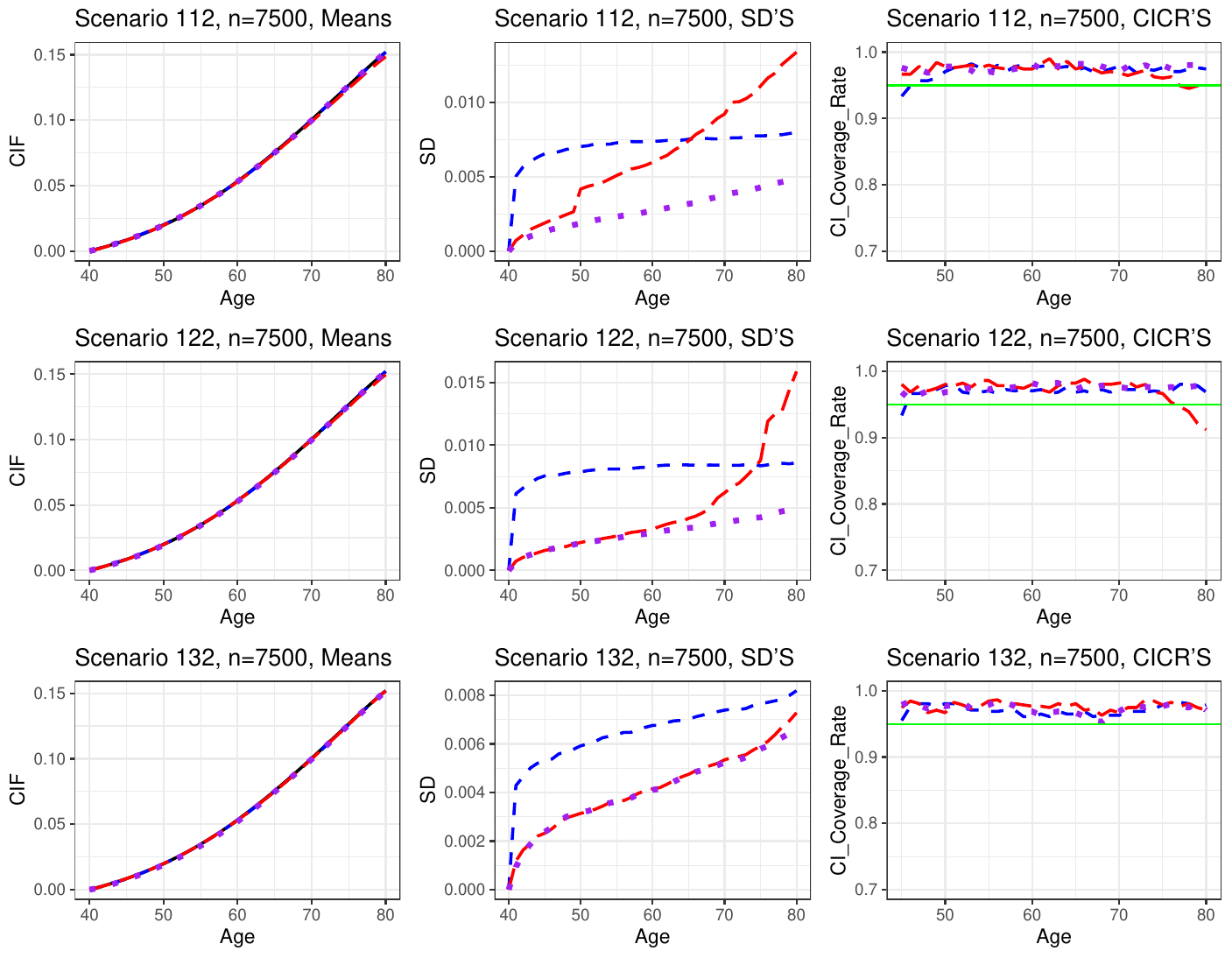}
	\caption{Simulation results for configurations 112, 122, and 132: Mean over estimates, standard deviations (SD), and 
		empirical coverage rates of 95\% pointwise confidence intervals, for the Aalen-Johansen (AJ), Gorfine-Zucker-Shoham (GZS),
		and new estimators. Solid black line – true curve, blue dashed line – AJ estimator, red long-dashed line - GZS estimator, 
		purple dotted line – new estimator. }\label{fig:f10} 
\end{figure}

\begin{figure}[H]
	\centering
	\includegraphics[width=1\textwidth, height=0.85\textheight]{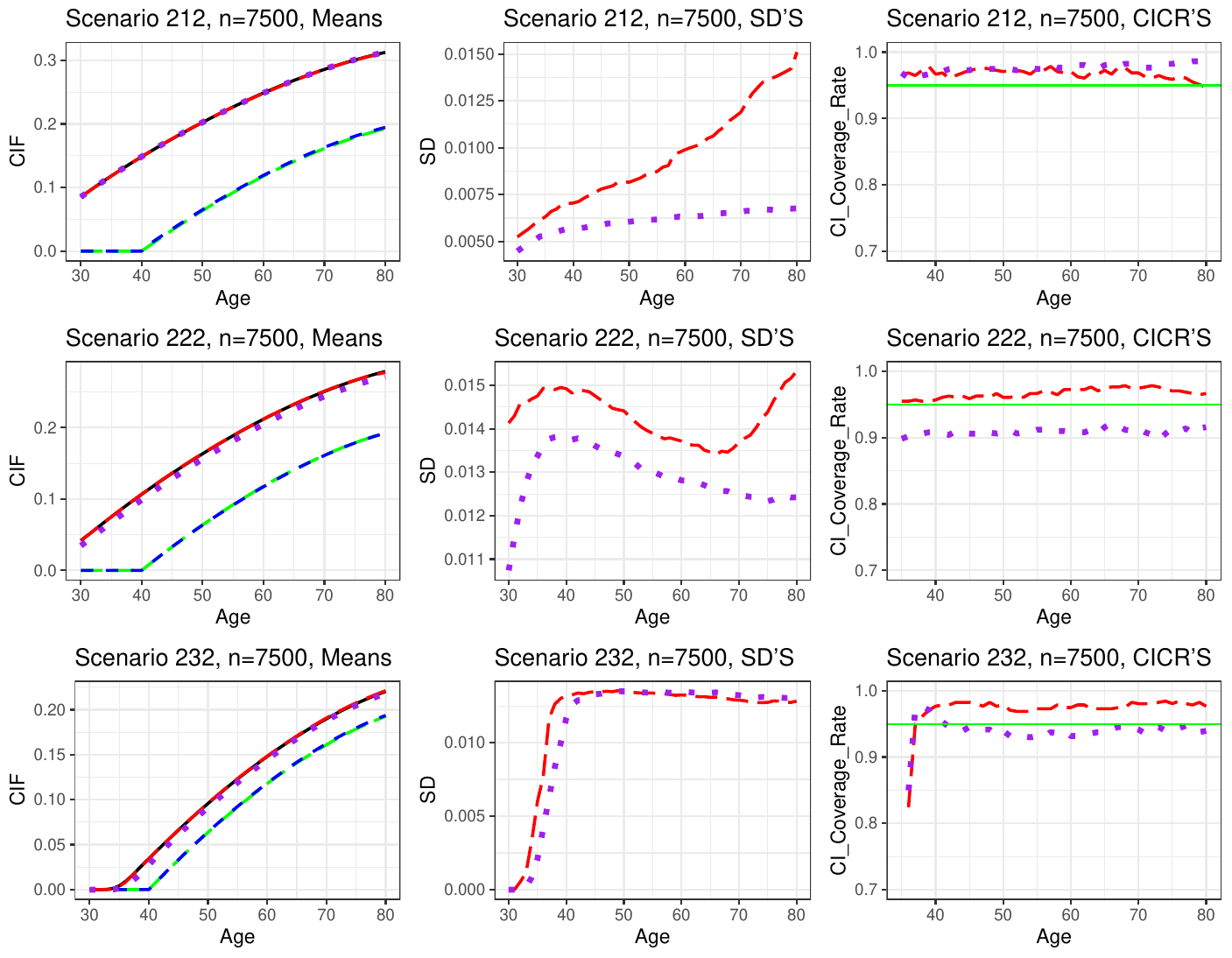}
	\caption{Simulation results for configurations 212, 222, and 232: Mean over estimates, standard deviations (SD), and 
		empirical coverage rates of 95\% pointwise confidence intervals, for the Aalen-Johansen (AJ), Gorfine-Zucker-Shoham (GZS),
		and new estimators. Solid black line – true curve, blue dashed line – AJ estimator, red long-dashed line - GZS estimator, 
		purple dotted line – new estimator. }\label{fig:f11} 
\end{figure}

\begin{figure}[H]
	\centering
	\includegraphics[width=1\textwidth, height=0.85\textheight]{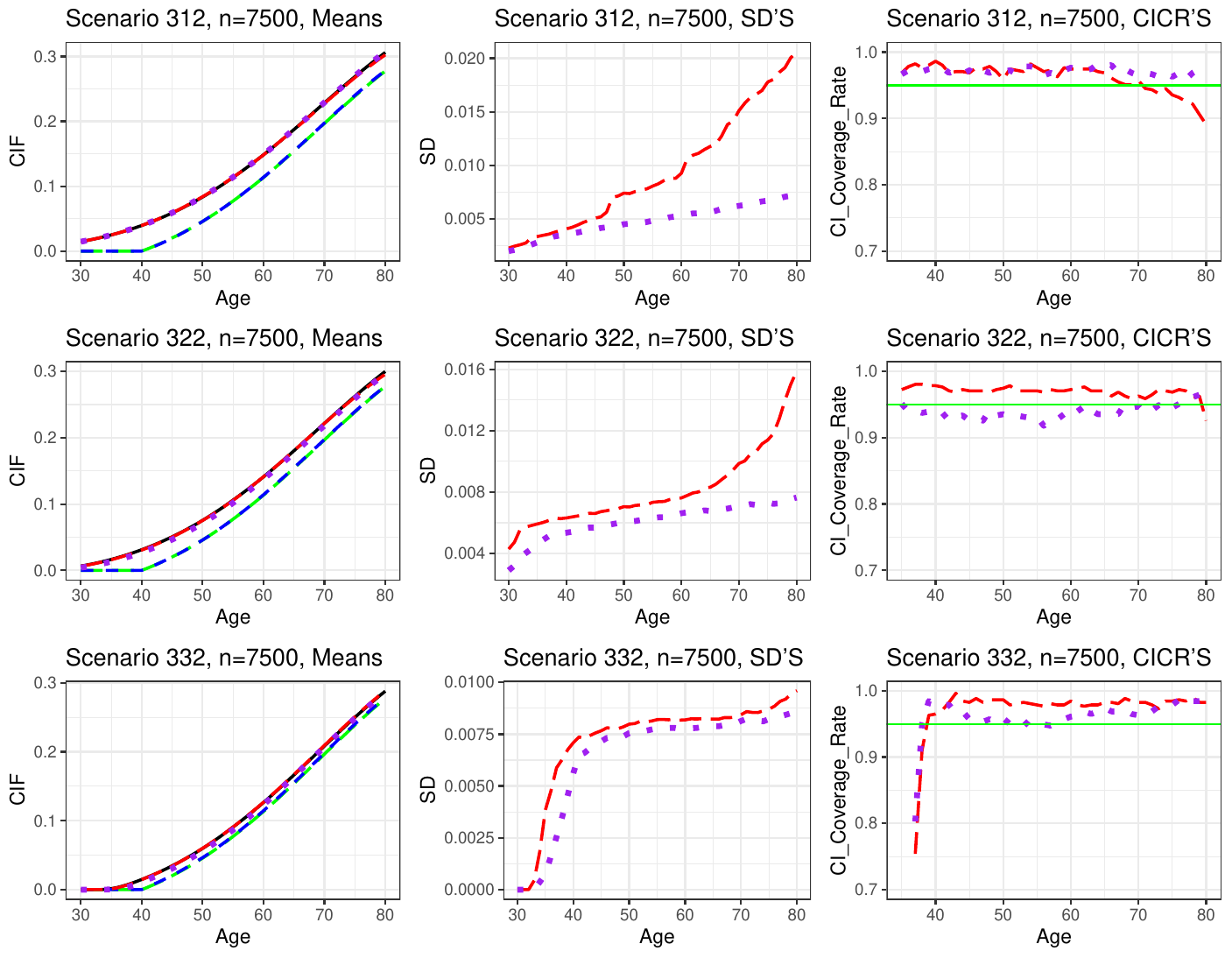}
	\caption{Simulation results for configurations 312, 322, and 332: Mean over estimates, standard deviations (SD), and 
		empirical coverage rates of 95\% pointwise confidence intervals, for the Aalen-Johansen (AJ), Gorfine-Zucker-Shoham (GZS),
		and new estimators. Solid black line – true curve, blue dashed line – AJ estimator, red long-dashed line - GZS estimator, 
		purple dotted line – new estimator. }\label{fig:f12} 
\end{figure}

\section{UK Biobank Female Data Analysis  - Additional Results }

Table \ref{tbl:UKB-not-interesting} and Figure \ref{fig:UKB-not-interesting} present results for four additional cancers: brain, esophagus, acute myeloid leukemia and lung. For these phenotypes, the CIF estimators from all three methods are highly similar, likely because the interval between diagnosis and death is shorter, so deaths after disease diagnosis are observed more completely.

\vspace*{1cm}

\renewcommand{\baselinestretch}{1}

\begin{table}[H]
	\centering
	\caption{Results of 95\% confidence band width (Band) and mean 95\% pointwise confidence interval width (Pointwise) for four phenotypes in female participants from the UK Biobank, obtained using the Aalen--Johansen estimator (AJ), the estimator of \cite{gzs2025} (GZS), and the proposed estimator (New)}\label{tbl:UKB-not-interesting}
	\begin{tabular}{|rcc|cc|cc|cc|}
		\hline
		& & & \multicolumn{2}{|c|}{AJ} & \multicolumn{2}{|c|}{GZS} & \multicolumn{2}{|c|}{New}  \\
		Cancer type & prevalent & incident & Band & Pointwise & Band & Pointwise & Band & Pointwise \\
		\hline
		Brain & 58 &  312 &  0.0008  &  0.0683 &  0.0008 &  0.0743 &   {\bf 0.0006} & {\bf 0.0003 } \\
		Esophagus & 34 & 278 &   0.0008 &  0.0444 & 0.0008 &  0.0039 & {\bf 0.0007} &  {\bf 0.0016}  \\
		AML & 47 & 135 & 0.0006 &  0.0179 & 0.0006 &  0.0088 & {\bf 0.0005} &  {\bf 0.0002}  \\
		Lung & 157 & 1910 & 0.0021 &  0.0599 & 0.0039 & 0.0071 &  {\bf 0.0018}&  {\bf 0.0019} \\
		\hline
	\end{tabular}  
\end{table}

\begin{figure}[H]
	\centering
	\includegraphics[width=0.9\textwidth, height=0.65\textheight]{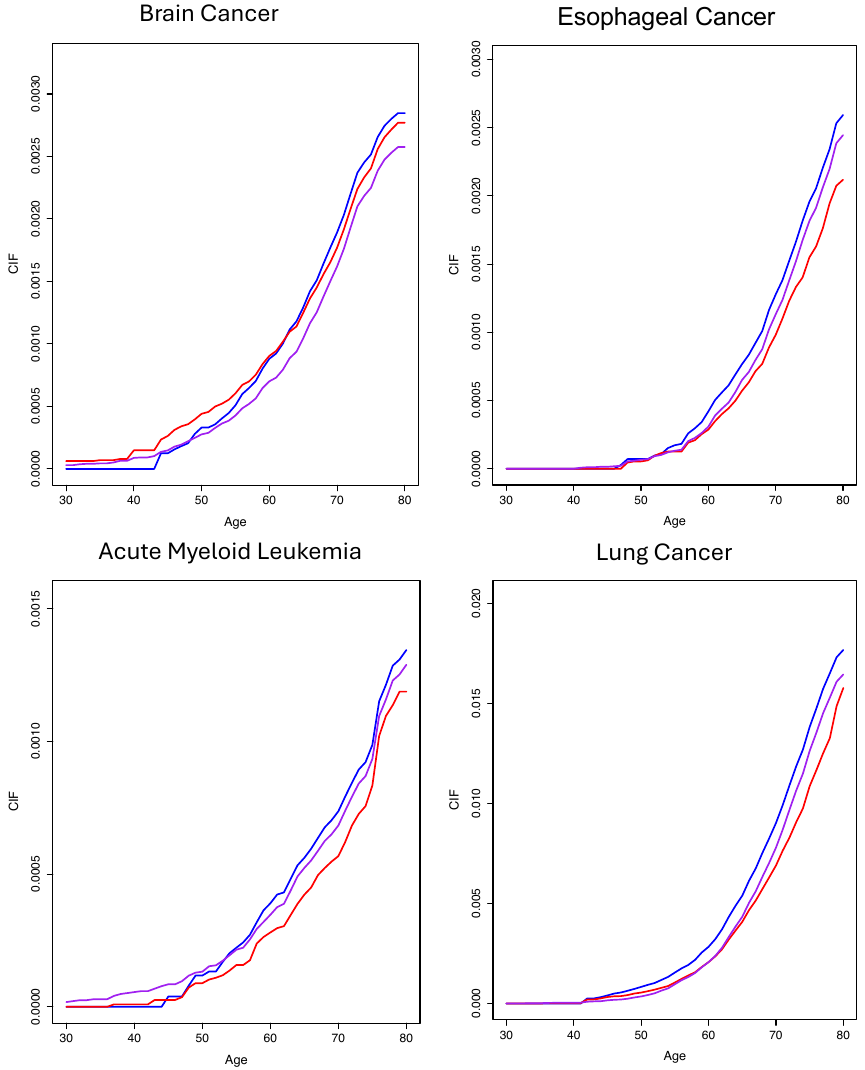}
	\caption{Estimated CIFs for four phenotypes in female participants from the UK Biobank, obtained using the Aalen--Johansen estimator (AJ, blue), the estimator of \cite{gzs2025} (GZS, red), and the proposed estimator (New, purple).}\label{fig:UKB-not-interesting}
\end{figure}

\clearpage

\end{document}